\def\case{}
\providecommand{\catname}{\mathbf} 
\providecommand{\clsname}{\mathcal}
\providecommand{\oname}[1]{\operatorname{\mathsf{#1}}}
\def\defcatname#1{\expandafter\def\csname B#1\endcsname{\catname{#1}}}
\def\defcatnames#1{\ifx#1\defcatnames\else\defcatname#1\expandafter\defcatnames\fi}
\def\defclsname#1{\expandafter\def\csname C#1\endcsname{\clsname{#1}}}
\def\defclsnames#1{\ifx#1\defclsnames\else\defclsname#1\expandafter\defclsnames\fi}
\def\defbbname#1{\expandafter\def\csname BB#1\endcsname{\mathbb{#1}}}
\def\defbbnames#1{\ifx#1\defbbnames\else\defbbname#1\expandafter\defbbnames\fi}
\def\Set{\catname{Set}}
\def\Cpo{\catname{Cpo}}
\providecommand{\argument}{\operatorname{-\!-}}
\providecommand{\mplus}{{\scriptscriptstyle\bf+}} 	%
\providecommand{\FSet}{{\mathcal P}_{\omega}}		%
\providecommand{\Id}{\operatorname{Id}}
\providecommand{\Hom}{\mathsf{Hom}}
\providecommand{\id}{\mathsf{id}}
\providecommand{\comp}{\mathbin{\circ}}
\providecommand{\tensor}{\mathbin{\otimes}}
\providecommand{\unit}{\star}				
\providecommand{\bang}{\operatorname!}				%
\providecommand{\dar}{\kern-1.2pt\operatorname{\downarrow}}	
\providecommand{\uar}{\kern-1.2pt\operatorname{\uparrow}}	
\providecommand{\mto}{\mapsto}
\providecommand{\xto}[1]{\xrightarrow{#1}}
\providecommand{\fst}{\oname{fst}}
\providecommand{\snd}{\oname{snd}}
\providecommand{\pr}{\oname{pr}}
\providecommand{\brks}[1]{\langle #1\rangle}
\providecommand{\inl}{\oname{inl}}
\providecommand{\inr}{\oname{inr}}
\providecommand{\inj}{\oname{in}}
\DeclareSymbolFont{Symbols}{OMS}{cmsy}{m}{n}
\DeclareMathSymbol{\iobj}{\mathord}{Symbols}{"3B}
\providecommand{\curry}{\oname{curry}}
\providecommand{\ev}{\oname{ev}}
\providecommand{\comma}{,\operatorname{}\linebreak[1]}		%
\providecommand{\dash}{\nobreakdash-\hspace{0pt}}			%
\providecommand{\by}[1]{\text{/\!/~#1}}			%
\providecommand{\pacman}[1]{}					%
\providecommand{\mone}{{\text{\kern.5pt\rmfamily-}\sf\kern-.5pt1}}
\newlist{citemize}{itemize}{1}
\setlist[citemize]{label=\labelitemi,wide} %
\newlist{cenumerate}{enumerate}{1}
\setlist[cenumerate,1]{label=\arabic*.~,ref={\arabic*},wide} %
\def\mfix#1{\oname{#1}\@ifnextchar\bgroup\@mfix{}}	%
\def\@mfix#1{#1\@ifnextchar\bgroup\mfix{}}			%
\providecommand{\case}[3]{\mfix{case}{\mathbin{}#1}{of}{#2}{\kern-1pt;}{\mathbin{}#3}}
\spnewtheorem{thm}[theorem]{Theorem}{\bfseries}{\itshape}
\spnewtheorem{cor}[theorem]{Corollary}{\bfseries}{\itshape}
\spnewtheorem{cnj}[theorem]{Conjecture}{\bfseries}{\itshape}
\spnewtheorem{lem}[theorem]{Lemma}{\bfseries}{\itshape}
\spnewtheorem{lemdefn}[theorem]{Lemma and Definition}{\bfseries}{\itshape}
\spnewtheorem{prop}[theorem]{Proposition}{\bfseries}{\itshape}
\spnewtheorem{defn}[theorem]{Definition}{\bfseries}{\upshape}
\spnewtheorem{rem}[theorem]{Remark}{\bfseries}{\upshape}
\spnewtheorem{notation}[theorem]{Notation}{\bfseries}{\upshape}
\spnewtheorem{expl}[theorem]{Example}{\bfseries}{\upshape}
\spnewtheorem{thmdefn}[theorem]{Theorem and Definition}{\bfseries}{\itshape}
\spnewtheorem{propdefn}[theorem]{Proposition and Definition}{\bfseries}{\itshape}
\spnewtheorem{assumption}[theorem]{Assumption}{\bfseries}{\upshape}
\spnewtheorem{algorithm}[theorem]{Algorithm}{\bfseries}{\upshape}
 \renewenvironment{corollary}{\begin{cor}}{\end{cor}}
\renewcommand{\labelitemi}{$\vcenter{\hbox{\rule{1.2ex}{1pt}}}$}
\tikzset{
commutative diagrams/.cd,
arrow style=tikz,
diagrams={>=stealth},
row sep=large, 
column sep = huge
}
\newcommand{\algebra}{a}
\renewcommand{\comp}{\kern1pt}
\renewcommand{\inl}{\inj_1}
\renewcommand{\inr}{\inj_2}
\renewcommand{\fst}{{\pr_1}}
\renewcommand{\snd}{{\pr_2}}
\newcommand{\cpto}{
  \mathrel{\raisebox{0.5ex}{\kern1pt\ensuremath{\mathrel{\tikz{ \draw [-stealth,line width=0.4] (0.4ex,.9ex) -- (0,.9ex) -- (0,0.4ex) -- (2.5ex,0.4ex); }}}\kern1pt}}
}
\newcommand{\klstar}{\star}	%
\newcommand{\istar}{\dagger}  			%
\newcommand{\iistar}{\ddagger}  		%
\newcommand{\sarr}{\mathbin{\kern-1pt\text{\rotatebox[origin=c]{45}{$\downarrow$}}}}
\newcommand{\out}{\oname{out}}
\newcommand{\ini}{\oname{in}}
\newcommand{\tr}{\oname{tr}}
\newcommand{\grd}{\operatorname{\blacktriangleright}}
\newcommand{\GHom}{\Hom^{\kern-1pt\bullet}}
\newcommand{\IHom}{\Hom^{\kern-.2pt\scalebox{.58}{$\grd$}}}
\newcommand{\HS}{\oname{HS}}
\newcommand{\N}{\oname{N}}
\newcommand{\real}{\bm{\mathsf{R}}}
\newcommand{\lrule}[3]{{#1}~\frac{#2}{#3}}
\def\Hilb{\catname{Hilb}}
\newcommand{\norm}[1]{|\!| #1|\!|}
\renewcommand{\paragraph}[1]{\par\indent\textbf{#1.}}
\def\moverlay{\mathpalette\mov@rlay}
\def\mov@rlay#1#2{\leavevmode\vtop{%
   \baselineskip\z@skip \lineskiplimit-\maxdimen
   \ialign{\hfil$\m@th#1##$\hfil\cr#2\crcr}}}
\newcommand{\charfusion}[3][\mathord]{
    #1{\ifx#1\mathop\vphantom{#2}\fi
        \mathpalette\mov@rlay{#2\cr#3}
      }
    \ifx#1\mathop\expandafter\displaylimits\fi}
\newcommand{\cupdot}{\charfusion[\mathbin]{\cup}{\cdot}}
\begin{document}

\title{Guarded Traced Categories}  

\author{Sergey Goncharov\and Lutz Schröder}
\institute{Friedrich-Alexander-Universit\"at Erlangen-N{\"u}rnberg}
\maketitle

\begin{abstract}
  Notions of guardedness serve to delineate the admissibility of
  cycles, e.g.\ in recursion, corecursion, iteration, or tracing. We
  introduce an abstract notion of guardedness structure on a symmetric
  monoidal category, along with a corresponding notion of guarded
  traces, which are defined only if the cycles they induce are
  guarded. We relate structural guardedness, determined by propagating
  guardedness along the operations of the category, to geometric
  guardedness phrased in terms of a diagrammatic language. In our
  setup, the Cartesian case (recursion) and the co-Cartesian case
  (iteration) become completely dual, and we show that in these cases,
  guarded tracedness is equivalent to presence of a guarded Conway
  operator, in analogy to an observation on total traces by Hasegawa and
  Hyland. Moreover, we relate guarded traces to unguarded categorical
  uniform fixpoint operators in the style of Simpson and
  Plotkin. Finally, we show that partial traces based on
  Hilbert-Schmidt operators in the category of Hilbert spaces are an
  instance of guarded traces.
\end{abstract}

\section{Introduction}
In models of computation, various notions of \emph{guardedness} serve
to control cyclic behaviour by allowing only guarded cycles, with the
aim to ensure properties such as solvability of recursive equations or
productivity. Typical examples are guarded process algebra
specifications~\cite{Milner89,BaetenBastenEtAl10}, coalgebraic guarded
(co-)recursion~\cite{Rutten00,Milius05}, finite delay in online Turing
machines~\cite{BookGreibach70}, and productive definitions in
intensional type theory~\cite{AbelPientka13,Mgelberg14}, but also
contractive maps in (ultra-)metric spaces~\cite{KrishnaswamiBenton11}.

A highly general model for unrestricted cyclic computations, on the
\begin{wrapfigure}{r}{0.3\textwidth}
\vspace{-20pt}
  \begin{center}
  \includegraphics[align=c,scale=.3]{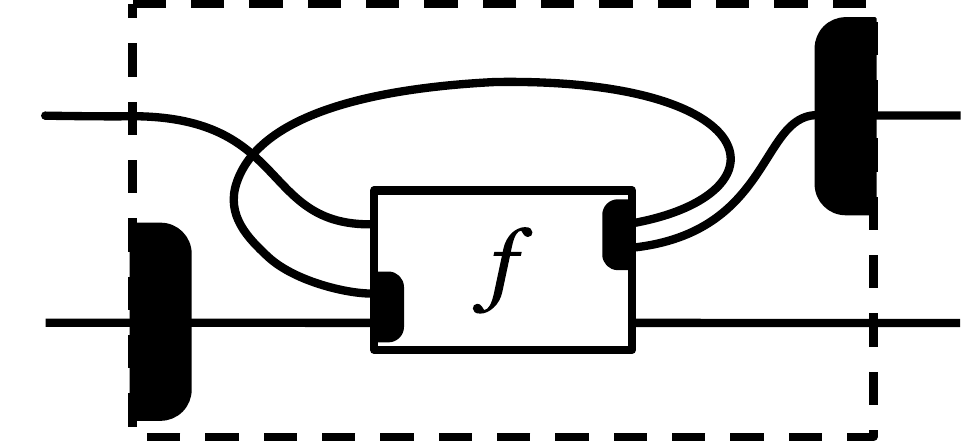}
  \end{center}
\vspace{-10pt}
  \caption{Guarded trace}\label{fig:guarded-trace}
\vspace{-20pt}
\end{wrapfigure}
other hand, are \emph{traced monoidal
  categories}~\cite{JoyalStreetEtAl96}; besides \emph{recursion} and
\emph{iteration}, they cover further kinds of cyclic behaviour, e.g.\
in Girard's \emph{Geometry of
  Interaction}~\cite{Girard89,AbramskyHaghverdiEtAl02} and quantum
programming~\cite{AbramskyCoecke04,Selinger04}.  In the present paper
we parametrize the framework of traced symmetric monoidal categories
with a notion of guardedness, arriving at \emph{(abstractly) guarded
  traced categories}, which effectively vary between two extreme
cases: symmetric monoidal categories (nothing is guarded) and traced
symmetric monoidal categories (everything is guarded). In terms of the
standard diagrammatic language for traced monoidal categories, we
decorate input and output gates of boxes to indicate guardedness; the
diagram governing trace formation would then have the general form
depicted in Figure~\ref{fig:guarded-trace}
-- that is, we can only form traces connecting guarded (black) output
gates to input gates that are unguarded (black), i.e.\ not assumed to
be already guarded.

We provide basic structural results on our notion of abstract
guardedness, and identify a wide array of examples. Specifically, we
establish a geometric characterization of guardedness in terms of
paths in diagrams; we identify a notion of \emph{guarded ideal}, along
with a construction of guardedness structures from guarded ideals and
simplifications of this construction for the (co-)Cartesian and the
Cartesian closed case; and we describe `vacuous' guardedness
structures where traces do not actually generate proper diagrammatic cycles. In
terms of examples, we begin with the case where the monoidal structure
is either product (Cartesian), corresponding to guarded recursion, or
coproduct (co-Cartesian), for guarded iteration; the axioms for
guardedness allow for a basic duality that indeed makes these two
cases precisely dual. For total traces in Cartesian categories,
Hasegawa and Hyland observed that trace operators are in one-to-one
correspondence with \emph{Conway fixpoint
  operators}~\cite{Hasegawa97,Hasegawa99}; we extend this
correspondence to the guarded case, showing that guarded trace
operators on a Cartesian category are in one-to-one correspondence
with guarded Conway operators. In a more specific setting, we relate
\emph{guarded} traces in Cartesian categories to \emph{unguarded}
categorical uniform fixpoints as studied by Crole and
Pitts~\cite{CrolePitts90} and by Simpson and
Plotkin~\cite{Simpson92,SimpsonPlotkin00}. Concluding with a case
where the monoidal structure is a proper tensor product, we show that
the partial trace operation on (infinite-dimentional) Hilbert spaces is an instance of
vacuous guardedness; this result relates to work by Abramsky, Blute,
and Panangaden on traces over nuclear ideals, in this case over
\emph{Hilbert-Schmidt operators}~\cite{AbramskyBluteEtAl99}.\medskip

\noindent\textbf{Related work}\quad Abstract guardedness serves to determine
definedness of a guarded trace operation, and thus relates to work on
partial traces. We discuss work on nuclear
ideals~\cite{AbramskyBluteEtAl99} in Section~\ref{sec:dag}. In
\emph{partial traced
  categories}~\cite{HaghverdiScott10,MalherbeScottEtAl12}, traces are
governed by a partial equational version (consisting of both strong
and directed equations) of the Joyal-Street-Verity axioms; morphisms
for which trace is defined are called \emph{trace class}. A key
difference to the approach via guardedness is that being trace class
applies only to morphisms with inputs and outputs of matching types
while guardedness applies to arbitrary morphisms, allowing for
compositional propagation. Also, the axiomatizations are incomparable:
Unlike for trace class morphisms~\cite[Remark 2.2]{HaghverdiScott10},
we require guardedness to be closed under composition with arbitrary
morphisms (thus covering contractivity but not, e.g., monotonicity as
in the modal $\mu$-calculus); on the other hand, as noted by
Jeffrey~\cite{Jeffrey12}, guarded traces, e.g.\ of contractions, need
not satisfy Vanishing II as a Kleene equality as assumed in partial
traced categories.  Some approaches treat traces as partial over
objects~\cite{BluteCockettEtAl00,Jeffrey97}.  In concrete algebraic
categories, partial traces can be seen as induced by total traces in
an ambient category of relations~\cite{ArthanMartinEtAl09}. We discuss
work on guardedness via endofunctors in Remark~\ref{rem:ml}.

\section{Preliminaries}\label{sec:prelim}
We recall requisite categorical notions; see~\cite{MacLane71} for a
comprehensive introduction.\medskip

\noindent\textbf{Symmetric Monoidal Categories}\quad
A \emph{symmetric monoidal category} $(\BC,\tensor, I)$ consists of a
category $\BC$ (with object class $|\BC|$), a bifunctor $\tensor$
(\emph{tensor product}), and a \emph{(tensor) unit} $I\in |\BC|$, and
coherent isomorphisms witnessing that $\tensor$ is, up to isomorphism,
a commutative monoid structure with unit~$I$. For the latter, we
reserve the notation
$\alpha_{A,B,C}:(A\tensor B)\tensor C\cong A\tensor (B\tensor C)$
(\emph{associator}), $\gamma_{A,B}: A\tensor B\cong B\tensor A$
(\emph{symmetry}), and $\upsilon_A:I\tensor A\cong A$ (\emph{left
  unitor}); the \emph{right unitor} $\hat\upsilon_A:A\tensor I\cong A$
is expressible via the symmetry.
A symmetric monoidal category is \emph{Cartesian} if the monoidal
structure is finite product (i.e.\ $\tensor =\times$, and $I=1$ is a
terminal object), and, dually, \emph{co-Cartesian} if the monoidal
structure is finite coproduct (i.e.\ $\tensor=+$, and $I=\iobj$ is an
initial object). Coproduct injections are written
$\inj_i:X_i\to X_1+X_2$ ($i=1,2$), and product projections
$\pr_i:X_1\times X_2\to X_i$. Various notions of algebraic tensor
products also induce symmetric monoidal structures; see
Section~\ref{sec:dag} for the case of Hilbert spaces. One has an
obvious expression language for objects and morphisms in symmetric
monoidal categories~\cite{Selinger11}, the former obtained by
postulating basic objects and closing under $I$ and $\tensor$, and the
latter by postulating basic morphisms of given profile and closing
under $\tensor$, $I$, composition, identities, and the monoidal
isomorphisms, subject to the evident notion of
\emph{well-typedness}. Morphism expressions are conveniently
represented as \emph{diagrams} consisting of boxes representing the
basic morphisms, with input and output gates corresponding to the
given profile. Tensoring is represented by putting boxes on top of
each other, and composition by wires connecting outputs to
inputs~\cite{Selinger11}. In a \emph{traced symmetric monoidal
  category} one has an additional operation (\emph{trace}) that
essentially enables the formation of loops in diagrams, as in
Figure~\ref{fig:guarded-trace} (but without decorations).

\textbf{Monads and (Co-)algebras}\quad A(n \emph{$F$-)coalgebra} for a
functor $F:\BC\to\BC$ is a pair $(X,f:X\to FX)$ where $X\in |\BC|$,
thought of as modelling states and generalized
transitions~\cite{Rutten00}.  A \emph{final coalgebra} is a final
object in the category of coalgebras (with $\BC$-morphisms $h:X\to Y$ such
that $(Fh) f = g h$ as morphisms $(X,f)\to (Y,g)$), denoted
$(\nu F,\out:\nu F\to F\nu F)$ if it exists.
Dually, an \emph{$F$-algebra} has the form $(X,f:FX\to X)$. %
A \emph{monad} $\BBT=(T,\mu,\eta)$ on a category $\BC$ consists of an
endofunctor $T$ on $\BC$ and natural transformations $\eta:\Id\to T$
(\emph{unit}) and $\mu:T^2\to T$ (\emph{multiplication}) subject to
standard equations~\cite{MacLane71}. As observed by
Moggi~\cite{Moggi91a}, monads can be seen as capturing
\emph{computational effects} of programs, with $TX$ read as a type of
computations with side effects from $T$ and results in~$X$. %
In this view, the \emph{Kleisli category} $\BC_\BBT$ of $\BBT$, which
has the same objects as $\BC$ and
$\Hom_{\BC_\BBT}(X,Y)= \Hom_{\BC}(X,TY)$, is a category of
side-effecting programs. A monad is \emph{strong} if it is equipped
with a \emph{strength}, i.e.\ a natural transformation
$X\times TY\to T(X\times Y)$ satisfying evident coherence conditions
(e.g.~\cite{Moggi91a}). A $T$-algebra $(A,\algebra)$ is an
\emph{(Eilenberg-Moore) $\BBT$-algebra} (for the \emph{monad}~$\BBT$)
if additionally $\algebra\comp\eta=\id$ and
$\algebra\comp (T\algebra) = \algebra\mu_A$; the category of
$\BBT$-algebras is denoted $\BC^{\BBT}$.
\section{Guarded Categories}\label{sec:guarded}
\vspace{-0.09em}
\noindent We now introduce our notion of guarded structure. A standard
example of guardedness are guarded definitions in process
algebra. E.g.\ in the definition $P = a.P$, the right hand occurrence
of $P$ is guarded, ensuring unique solvability (by a process that
keeps outputting $a$). A further example is contractivity of maps
between complete metric spaces. We formulate abstract closure
properties for \emph{partial} guardedness where only some of the
inputs and outputs of a morphism are guarded. Specifically, we
distinguish \emph{guarded outputs} and \emph{guarded inputs} ($D$
and~$B$, respectively, in the following definition), with the intended
reading that guarded outputs yield guarded data \emph{provided}
guarded data is already provided at guarded inputs, while unguarded
inputs may be fed arbitrarily.
\begin{figure}[t]
\begin{center}%
\small%
\includegraphics[scale=0.23]{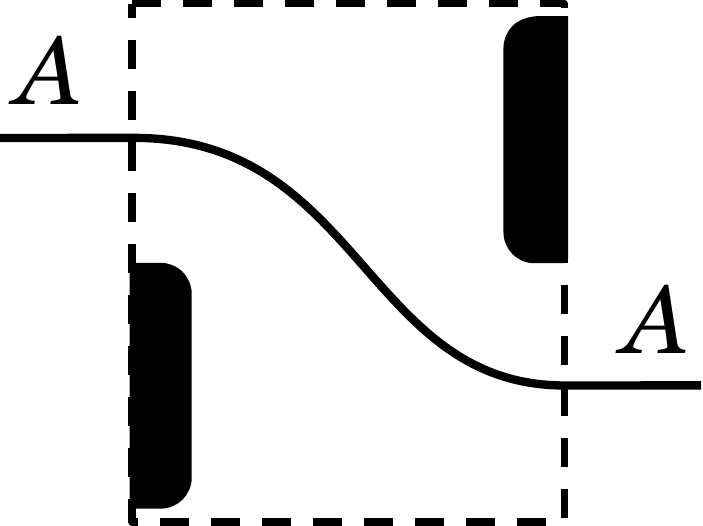}
\qquad
\includegraphics[scale=0.23]{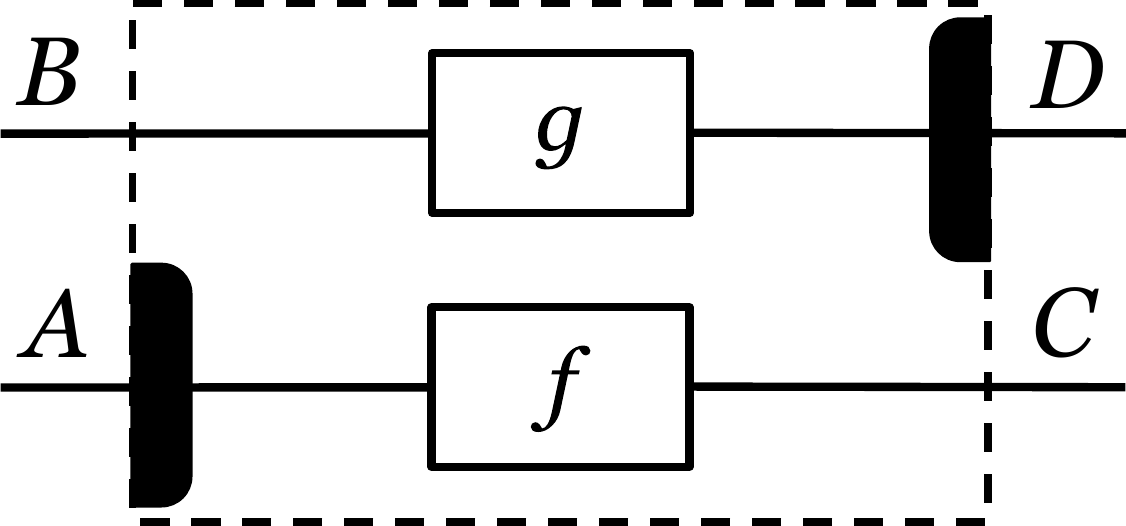}
\qquad
\includegraphics[scale=0.23]{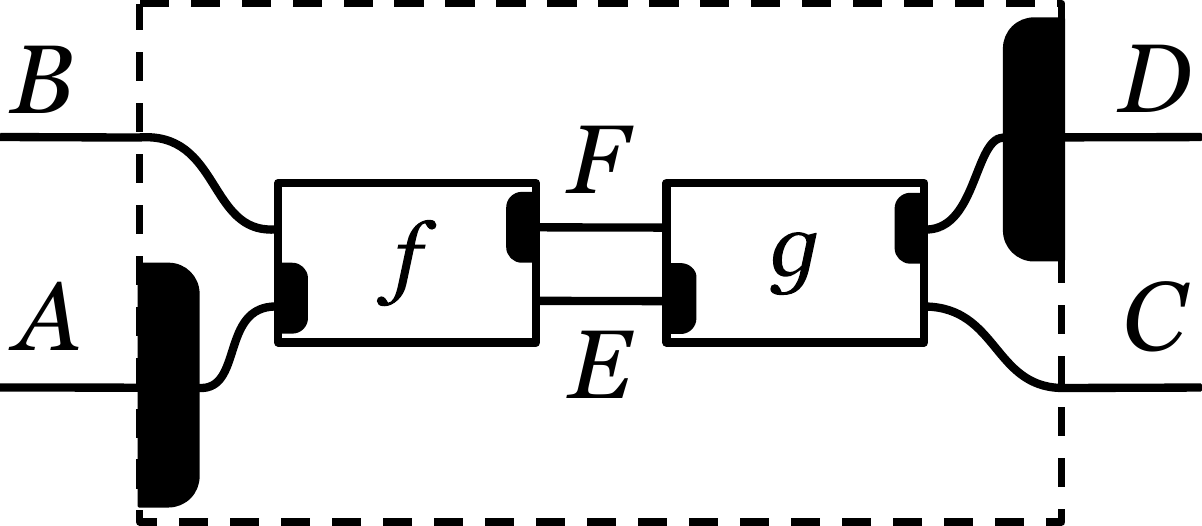}
\qquad
\includegraphics[scale=0.23]{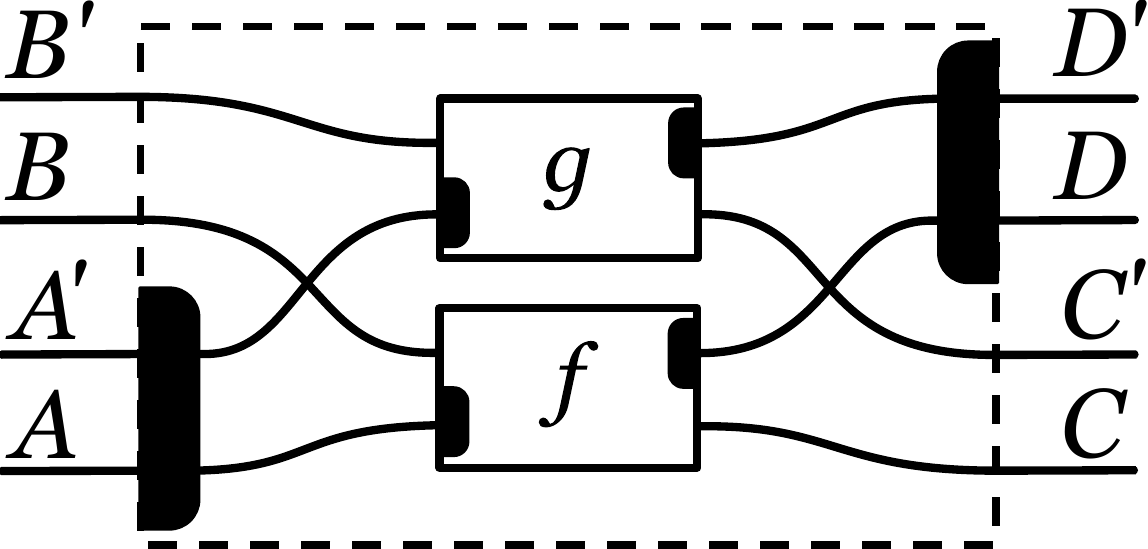}
\caption{Axioms of guarded categories}
\label{fig:gmon}
\end{center}
\vspace{-5ex}
\end{figure}
\begin{defn}[Guarded category]
\label{def:guard_sm}
An \emph{(abstractly) guarded category} is a symmetric monoidal
category $(\BC, \tensor, I)$ equipped with distinguished subsets
$\GHom(A\tensor B,C\tensor D)\subseteq\Hom({A\tensor B},C\tensor D)$
of \emph{partially guarded morphisms} for $A,B,C,D\in|\BC|$,
satisfying the following conditions:
\begin{description}
 \item[(uni${}_{\tensor}$)] $\gamma_{I,A}\in\GHom(I\tensor A, A\tensor I)$; 
 \item[(vac${}_{\tensor}$)] $f\tensor g\in \GHom(A\tensor B, C\tensor D)$ for all $f:A\to C$, $g:B\to D$;
 \item[(cmp${}_{\tensor}$)] $g\in\GHom(A\tensor B, E\tensor F)$ and\/ 
$f\in\GHom(E\tensor F, C\tensor D)$ imply  $f\comp g\in \GHom(A\tensor B, C\tensor D)$;
\item[(par${}_{\tensor}$)] for $f\in\GHom(A\tensor B, C\tensor D)$,
  $g\in\GHom(A'\tensor B', C'\tensor D')$, the evident transpose of
  $f\tensor g%
  $ %
is in $\GHom((A\tensor A')\tensor (B\tensor B'), (C\tensor C')\tensor (D\tensor D'))$.
\end{description}
We emphasize that $\GHom(A\tensor B,C\tensor D)$ is meant to depend
individually on $A$, $B$, $C$, $D$ and not just on $A\tensor B$ and
$C\tensor D$.
\end{defn}
\noindent 
\begin{wrapfigure}{r}{0.3\textwidth}
\vspace{-30pt}
  \begin{center}
  \includegraphics[align=c,scale=.3]{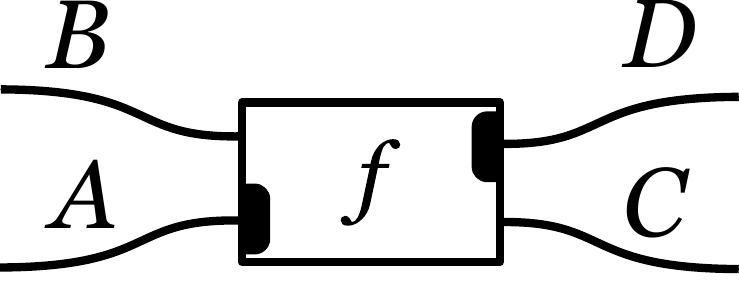}
  \end{center}
\vspace{-30pt}
\end{wrapfigure}
One easily derives a \emph{weakening} rule stating that if
$f\in\GHom((A\tensor A')\tensor B,C\tensor (D'\tensor D))$, then the
obvious transpose of $f$ is in
$\GHom(A\tensor (A'\tensor B),(C\tensor D')\tensor D)$.

We extend the standard diagram language for symmetric monoidal
categories (Section~\ref{sec:prelim}), representing morphisms
$f\in\GHom(A\tensor B,C\tensor D)$ by \emph{decorated boxes} as shown
on the right, with black bars marking the \emph{unguarded input} gates
$A$ and the \emph{guarded output} gates $D$. Weakening then
corresponds to shrinking the black bars of decorated boxes.
Figure~\ref{fig:gmon} depicts the above axioms in this language.
Solid boxes represent the assumptions, while dashed boxes represent
the conclusions. The latter only occur in the derivation process and
do not form part of the actual diagrams representing concrete
morphisms. We silently identify object expressions and sets of gates
in diagrams. Given a (well-typed) morphism expression $e$, a judgement
$e\in\GHom(A\tensor B,C\tensor D)$, called a \emph{guardedness typing}
of~$e$, is \emph{derivable} if it can be derived from the assumed
guardedness typing of the constituent basic boxes of $e$ using the
rules in Definition~\ref{def:guard_sm}.  We have an obvious notion of
(directed) \emph{paths} in diagrams; a path is
\emph{guarded}\label{def:guarded-path} if it passes some basic box~$f$
through an unguarded input gate and a guarded output gate
(intuitively, guardedness is then introduced along the path as the
passage through~$f$ will guarantee guarded output without assuming
guarded input).  We then have the following geometric characterization
of guardedness typing:
\begin{thm}\label{thm:gcompl}
  For a well-typed morphism expression
  $e\in\Hom(A\tensor B,C\tensor D)$, the guardedness typing
  $e\in\GHom(A\tensor B,C\tensor D)$ is derivable iff in the diagram
  of~$e$, every path from an input gate in $A$ to an output gate in
  $D$ is guarded.
\end{thm}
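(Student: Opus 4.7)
The plan is to prove both implications by induction. Soundness (derivability implies the path condition) will go by induction on the derivation; completeness (path condition implies derivability) will go by induction on the structure of the expression $e$.

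For soundness, I would inspect each axiom. The axioms uni${}_{\tensor}$ and vac${}_{\tensor}$ are immediate: the diagram of $\gamma_{I,A}$ with decoration $\GHom(I\tensor A, A\tensor I)$, and the diagram of $f\tensor g$ with decoration $\GHom(A\tensor B, C\tensor D)$, contain no wires from the specified unguarded-input positions to the specified guarded-output positions. For cmp${}_{\tensor}$ applied to $f\comp g$ with intermediate decomposition $E\tensor F$, any path from an $A$-gate to a $D$-gate must cross the intermediate interface at either an $E$-wire or an $F$-wire: in the first case its restriction to $f$ is an $E$-to-$D$ path, guarded by the IH on $f$; in the second its restriction to $g$ is an $A$-to-$F$ path, guarded by the IH on $g$. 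Either way the concatenated path contains a guardedness-introducing basic box. The par${}_{\tensor}$ case is strictly easier since paths in $f\tensor g$ remain within one factor; the derived weakening preserves the condition because the relevant set of paths only shrinks.

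For completeness, induction on the expression $e$. A basic box $f$ with assumed decoration $(A_f,D_f)$ satisfies the path condition for a typing $\GHom(A\tensor B, C\tensor D)$ iff $A\subseteq A_f$ and $D\subseteq D_f$; this typing then follows from the assumed one by weakening. Identities and monoidal isomorphisms contain no basic boxes, so no path in their diagrams can be guarded; the condition thus forces absence of $A$-to-$D$ paths, and such typings are produced by vac${}_{\tensor}$ applied to the underlying morphism (yielding the trivially-decorated typing) followed by weakening. The substantive inductive case is composition $e=f\comp g$ with $g\colon X\to Y$ and $f\colon Y\to Z$. Define
\[
  F=\{\,y\in Y\mid\text{every path from }A\text{ to }y\text{ in the diagram of }g\text{ is guarded}\,\},\qquad E=Y\setminus F,
\]
and identify $Y$ with $E\tensor F$. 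By construction every $A$-to-$F$ path in $g$ is guarded, so IH gives $g\in\GHom(A\tensor B,E\tensor F)$. If some $E$-to-$D$ path $p$ in $f$ were unguarded, then $p$ starts at a gate $y\in E$ that, by definition of $E$, admits an unguarded $A$-to-$y$ path $q$ in $g$; concatenating $q$ and $p$ would yield an unguarded $A$-to-$D$ path in $f\comp g$, contradicting the hypothesis. Hence IH gives $f\in\GHom(E\tensor F,C\tensor D)$ and cmp${}_{\tensor}$ concludes. The tensor case $e=f_1\tensor f_2$ is analogous: the path condition restricts to each sub-diagram, IH delivers a typing for each factor, par${}_{\tensor}$ combines them, and a final weakening reshapes the decomposition to the prescribed $A\tensor B$ and $C\tensor D$.

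The conceptual crux is the upward-closed definition of $F$ in the composition case; the rest is bookkeeping. The main technical obstacle I expect is handling monoidal coherence uniformly: par${}_{\tensor}$ and cmp${}_{\tensor}$ only fire for specific tensor shapes, so matching them to the decompositions arising along the induction requires systematic rearrangement via the associator, symmetry and unitors, together with repeated appeals to weakening. This is routine for string-diagram arguments but makes the formal write-up considerably longer than the idea.
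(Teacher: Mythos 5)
Your proposal is correct and follows essentially the same route as the paper's proof: soundness by induction on the derivation with the identical case split at \textbf{(cmp${}_{\tensor}$)} on where a critical path crosses the intermediate interface, and completeness by structural induction whose crux is exactly the paper's partition of the intermediate gates (your $F$ is the paper's set $N$ of gates reachable from $A$ only by guarded paths, and your contradiction argument for $E$-to-$D$ paths is the paper's proof that the complement of $N$ lies in its dual set $M$). The remaining coherence/weakening bookkeeping is handled at the same level of informality in both.
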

\noindent Every symmetric monoidal category has both a largest
($\GHom(A\tensor B,C\tensor D)=\Hom(A\tensor B,C\tensor D)$) and a
least guarded structure:
\begin{lemdefn}[Vacuous guardedness]\label{lem:triv}
  Every symmetric mono\-idal category is guarded under taking
  $f\in\GHom(A\tensor B,C\tensor D)$ iff $f$ factors~as
\begin{align*}
A\tensor B \xto{\id_A\tensor g} A\tensor E\tensor D \xto{h\tensor\id_D} C\tensor D
\end{align*}
(eliding associativity) with $g:B\to E\tensor D$, $h:A \tensor E\to C$. This is the least
guarded structure on~$\BC$, the \emph{vacuous guarded structure}.
\end{lemdefn}
E.g.\ the natural guarded structure on Hilbert spaces
(Section~\ref{sec:dag}) is vacuous.
\begin{rem}[Duality]\label{rem:dual}
  The rules and axioms in Figure~\ref{fig:gmon} are stable under
  $180^\degree$\dash rotation, that is, under reversing arrows and
  applying the monoidal symmetry on both sides (this motivates
  decorating the \emph{unguarded} inputs). Consequently, if $\BC$ is
  guarded, then so is the dual category $\BC^{op}$, with guardedness
  given by $f\in\GHom_{\BC^{op}}(A\tensor B, C\tensor D)$ iff the
  obvious transpose of $f$ is in $\GHom_{\BC}(D\tensor C,B\tensor A)$.
\end{rem}
\noindent In case $\tensor$ is coproduct, we can simplify the
description of partial guardedness:
\begin{prop}\label{prop:guard_equiv}
  Partial guardedness in a co-Cartesian category\/ $(\BC, +,\iobj)$ is
  equivalently determined by distinguished subsets
  $\Hom_{\sigma}(X,Y)\subseteq\Hom(X,Y)$ with $\sigma$ ranging over
  coproduct injections $Y_2\to Y_1+Y_2\cong Y$, subject to the rules
  on the right hand side of Figure~\ref{fig:co-cart-g}, where
  $f:X\to_\sigma Y$ denotes $f\in\Hom_\sigma(X,Y)$,
with $f\in\GHom(X_1+X_2,Y_1+Y_2)$ iff $(f\inl)\in\Hom_{\inr}(X_1,Y_1+Y_2)$.
\end{prop}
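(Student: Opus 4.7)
The plan is to construct two mutually inverse translations between the guardedness data $\GHom(X_1+X_2, Y_1+Y_2)$ of Definition~\ref{def:guard_sm} and the one-sided data $\Hom_\sigma(X,Y)$ indexed by output injections $\sigma$. The key to both directions is an \emph{insensitivity lemma}: membership $f \in \GHom(X_1+X_2, Y_1+Y_2)$ depends only on $f\inl$, not on $f\inr$. Equivalently, if $g: X_1 \to Y_1+Y_2$ and $h, h': X_2 \to Y_1+Y_2$, then $[g,h] \in \GHom$ iff $[g,h'] \in \GHom$. Once this is established, the correspondence becomes tight: the $\Hom_\sigma$-datum records precisely the guardedness of left components, and $\GHom$ is recovered by vacuously pairing with arbitrary morphisms on the guarded inputs.

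Assuming Theorem~\ref{thm:gcompl}, the insensitivity lemma is almost immediate from the geometric characterisation: the paths from an $X_1$-gate to a $Y_2$-gate in the diagram of $[g,h]$ do not traverse $h$, so guardedness is independent of $h$; paths originating at $X_2$-gates never contribute to guardedness. A purely axiomatic proof instead uses (vac${}_\tensor$) to endow the varying $h$ with vacuous guarded structure, (par${}_\tensor$) to combine it with a vacuous companion for $g$, and (cmp${}_\tensor$) with the codiagonal $\nabla : (Y_1+Y_2) + (Y_1+Y_2) \to Y_1+Y_2$ (itself vacuously guarded via (uni${}_\tensor$), (vac${}_\tensor$) and the unitor).

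For the forward direction, given $\GHom$, set $g \in \Hom_\inr(X, Y_1+Y_2)$ iff $[g, \inr] \in \GHom(X + Y_2, Y_1+Y_2)$; by the insensitivity lemma this is equivalent to the existence of some $f \in \GHom(X+X_2, Y_1+Y_2)$ with $f\inl = g$. Each rule on the right-hand side of Figure~\ref{fig:co-cart-g} is then a direct shadow of one of the axioms (uni${}_\tensor$)--(par${}_\tensor$), obtained by rewriting it in terms of left components and applying insensitivity to eliminate references to the $\inr$-component. Conversely, given $\Hom_\sigma$ satisfying the rules, define $\GHom(X_1+X_2, Y_1+Y_2) := \{f \mid f\inl \in \Hom_\inr(X_1, Y_1+Y_2)\}$ and verify each axiom in turn: (uni${}_\tensor$) uses $\gamma_{I,A}\inl = \inr$, which factors through the guarded injection; (vac${}_\tensor$) uses that $(f+g)\inl = \inl\,f$ factors through the unguarded injection; (cmp${}_\tensor$) and (par${}_\tensor$) reduce, after taking left components, to the corresponding composition and parallel-composition rules of Figure~\ref{fig:co-cart-g}.

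Mutual inverseness follows from the defining identity $[g,\inr]\inl = g$ in one direction, and from the insensitivity lemma in the other. The main obstacle is the insensitivity lemma itself: invoking Theorem~\ref{thm:gcompl} it is a one-line diagrammatic remark, but a direct axiomatic derivation requires the careful orchestration of all four axioms with the codiagonal sketched above. Everything else is routine translation between the two formats.
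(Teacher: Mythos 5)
Your proposal is correct and essentially coincides with the paper's proof: your ``insensitivity lemma'' with its codiagonal orchestration of \textbf{(vac${}_{\tensor}$)}, \textbf{(par${}_{\tensor}$)} and \textbf{(cmp${}_{\tensor}$)} is precisely the content of the paper's auxiliary Lemma~\ref{lem:guard_pair} and the accompanying remark that the decomposition of the input object does not affect $\Hom_{\inr}$, and the remaining rule-by-rule verifications in both directions are the same routine translations the paper carries out. One small caveat: the shortcut via Theorem~\ref{thm:gcompl} does not directly apply, since that theorem concerns derivability of guardedness typings for morphism \emph{expressions} rather than abstract membership of a morphism in $\GHom$ for an arbitrary guarded structure, but your axiomatic alternative is exactly the argument that is needed and used.
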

\begin{figure}[t!]
\begin{center}%
\small%
\begin{flalign*}
&
\lrule{\textbf{(vac${}_{\times}$)}}{f:X\to Z}{f\pr_1:X\times Y\to^{\pr_2} Z} && 
\lrule{\textbf{(vac${}_{\mplus}$)}}{f:X\to Z}{\inl f:X\to_{\inr} Z+Y}\\[2ex]
&
\lrule{\textbf{(cmp${}_{\times}$)}}{
\begin{array}{rl}
f:&X\times Y\to^{\pr_2} Z\\[1ex]
g:&V\to^{\sigma} X\qquad  h:V\to Y
\end{array}}
{f\comp\brks{g,h}:V\to^{\sigma} Z}&&
\lrule{\textbf{(cmp${}_{\mplus}$)}}{
\begin{array}{rl}
f:&X\to_{\inr} Y+Z\\[1ex]
g:&Y\to_{\sigma}V\qquad  h:Z\to V
\end{array}}
{[g,h]\comp f:X\to_{\sigma} V}
\\[2ex]
& 
\lrule{\textbf{(par${}_{\times}$)}}{f:X\to^{\sigma} Y\qquad g:X\to^{\sigma} Z}{\brks{f,g}:X\to^\sigma Y\times Z}&&
\lrule{\textbf{(par${}_{\mplus}$)}}{f:X\to_{\sigma} Z\qquad f:Y\to_{\sigma} Z}{[f,g]:X+Y\to_{\sigma} Z}
\end{flalign*}
\caption{Axioms of Cartesian (left) and co-Cartesian (right) guarded categories}
\label{fig:co-cart-g}
\end{center}
\vspace{-5ex}
\end{figure}
We have used the mentioned rules for $\to_\sigma$ in previous work on
guarded iteration~\cite{GoncharovSchroderEtAl17} (with
\textbf{(vac${}_{\times}$)} called \textbf{(trv)}, and together with
weakening, which as indicated above turns out to be derivable).
By duality (Remark~\ref{rem:dual}), we immediately have a corresponding
description for the Cartesian case:
\begin{corollary}\label{cor:guard_equiv}   
  Partial guardedness in a Cartesian category\/ $(\BC,\times,1)$ is
  equivalently determined by distinguished subsets
  $\Hom^{\sigma}(X,Y)\subseteq\Hom(X,Y)$ with $\sigma$ ranging over
  product projections $X\cong X_1\times X_2 \to X_1$, subject to the
  rules on the left hand side of Figure~\ref{fig:co-cart-g}, where
  $f:X\to^\sigma Y$ denotes $f\in\Hom^\sigma(X,Y)$, with
  $f\in\GHom(X_1\times X_2, Y_1\times Y_2)$ iff
  $\pr_2 f\in\Hom^{\pr_1}(X_1\times X_2 ,Y_2)$.
\end{corollary}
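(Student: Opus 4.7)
The plan is to derive the corollary as a direct dualisation of Proposition~\ref{prop:guard_equiv}. Since $\BC$ is Cartesian, $\BC^{op}$ is co-Cartesian, and by Remark~\ref{rem:dual} guarded structures on $\BC$ correspond bijectively to guarded structures on $\BC^{op}$ via the rule $f\in\GHom_{\BC}(A\times B,C\times D)$ iff $f^{op}\in\GHom_{\BC^{op}}(D+C,B+A)$. The strategy is to apply Proposition~\ref{prop:guard_equiv} inside $\BC^{op}$ and then translate every piece of data back to $\BC$.

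First I would invoke the proposition on $\BC^{op}$, obtaining families $\Hom_\tau^{\BC^{op}}(X,Y)$ indexed by $\BC^{op}$\dash coproduct injections $\tau\colon Y_2\to Y_1+Y_2$. Such a $\tau$ is, by definition of the opposite category, the same datum as a $\BC$\dash product projection $\sigma\colon Y_1\times Y_2\to Y_2$, and a $\BC^{op}$\dash morphism $X\to Y$ is a $\BC$\dash morphism $Y\to X$. Accordingly I would define, for each product projection $\sigma\colon X\cong X_1\times X_2\to X_1$ in $\BC$,
\[
  \Hom^{\sigma}(X,Y) \;=\; \bigl\{\, f\in\Hom_{\BC}(X,Y) \;\bigm|\; f^{op}\in\Hom_{\sigma^{op}}^{\BC^{op}}(Y,X) \,\bigr\}.
\]
The swap of source and target under dualisation is exactly what turns target\dash side subscripts (as in the co-Cartesian case) into source\dash side superscripts (as required by the corollary).

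Next I would verify that the three co-Cartesian rules of Figure~\ref{fig:co-cart-g} transport to the three Cartesian rules under this dualisation. Each rule rotates by $180^\circ$: applying the dictionary $\inl\leftrightarrow\pr_1$, $\inr\leftrightarrow\pr_2$, $[g,h]\leftrightarrow\brks{g,h}$, together with reversal of composition, sends \textbf{(vac${}_{\mplus}$)}, \textbf{(cmp${}_{\mplus}$)}, \textbf{(par${}_{\mplus}$)} onto \textbf{(vac${}_{\times}$)}, \textbf{(cmp${}_{\times}$)}, \textbf{(par${}_{\times}$)} respectively. Finally I would dualise the characterisation $f\in\GHom^{\BC^{op}}$ iff $f\inl\in\Hom_{\inr}^{\BC^{op}}$: using $(f\inl)^{op}=\inl^{op}\comp f^{op}=\pr_1\comp f^{op}$ (and the symmetric treatment of $\inr$), this rewrites to $f\in\GHom(X_1\times X_2,Y_1\times Y_2)$ iff $\pr_2\comp f\in\Hom^{\pr_1}(X_1\times X_2,Y_2)$, up to an inessential swap of the labels of the two factors.

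The proof is essentially mechanical once the duality is set up correctly. The only real obstacle is bookkeeping: tracking how a target\dash side coproduct injection in $\BC^{op}$ becomes a source\dash side product projection in $\BC$ once the morphism itself has also been reversed, and checking that the pre/post\dash composition in the characterising equivalence flips accordingly. Beyond this, nothing new is happening beyond Proposition~\ref{prop:guard_equiv} and Remark~\ref{rem:dual}.
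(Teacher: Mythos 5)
Your proposal is correct and matches the paper's own argument: the paper likewise obtains Corollary~\ref{cor:guard_equiv} immediately from Proposition~\ref{prop:guard_equiv} via the duality of Remark~\ref{rem:dual}, with the same translation of target-side coproduct injections into source-side product projections. Your version merely spells out the bookkeeping that the paper leaves implicit.
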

\begin{rem}\label{rem:triv-cocartesian}
  In a co-Cartesian category, vacuous guardedness
  (Lemma~\ref{lem:triv}) can equivalently be described by
  $f\in\GHom(A+B,C+D)$ iff $f$ decomposes as $f=[\inl h,g]$ (uniquely
  provided that $\inl$ is monic), or in terms of the description from
  Proposition~\ref{prop:guard_equiv}, $u\in\Hom_{\inr}(X,Y+Z)$ iff $u$
  factors through $\inl$.  Of course, the dual situation obtains in
  Cartesian categories.
\end{rem}
\begin{expl}[Process algebra]\label{expl:pa}
  Fix a monad $\BBT$ on $(\BC,+,\iobj)$ and an endofunctor
  $\Sigma:\BC\to\BC$ such that the generalized coalgebraic resumption
  transform $T_{\Sigma}= \nu\gamma.\ T(\argument+\Sigma\gamma)$
  exists; we think of $T_\Sigma X$ as a type of processes that have
  side-effects in $\BBT$ and perform communication actions from
  $\Sigma$, seen as a generalized signature. The Kleisli category
  $\BC_{\BBT_{\Sigma}}$ of $\BBT_{\Sigma}$ is again
  co-Cartesian. Putting
\begin{equation*}
f:X\to_{\inr} T_\Sigma (Y+Z) \iff \out f\in\{T(\inl+\id)\comp g\mid g:X\to T(Y+\Sigma T_{\Sigma} (Y+Z))\}
\end{equation*}
(cf.\ Section~\ref{sec:prelim} for notation), we make
$\BC_{\BBT_{\Sigma}}$ into a guarded
category~\cite{GoncharovSchroderEtAl17}.  The standard motivating
example of finitely nondeterministic processes is obtained by taking
$\BBT=\FSet$ (finite powerset monad) and $\Sigma = A\times\argument$
(action prefixing).
\end{expl}

\begin{expl}[Metric spaces]\label{expl:m-space}
  Let $\BC$ be the Cartesian category of metric spaces and
  non-expansive maps.  Taking $f:X\times Y\to^{\pr_2} Z$ iff
  $\lambda y.\, f(x,y)$ is contractive for every $x\in X$ makes $\BC$
  into a guarded Cartesian category.
\end{expl}

\section{Guardedness via Guarded Ideals}\label{sec:ideals}
Most of the time, the structure of a guarded category is determined by
morphisms with only unguarded inputs and guarded outputs, which form
an \emph{ideal}:
\begin{defn}[Guarded morphisms]
  A morphism $f:X\to Y$ in a guarded category is \emph{guarded} (as
  opposed to only partially guarded) if
  $\upsilon^\mone_{Y}\comp f\comp\hat\upsilon_{X}\in\GHom(X\tensor
  I\comma I\tensor Y)$;
  we write $\IHom(X,Y)$ for the set of guarded morphisms $f:X\to Y$.
\end{defn}
\begin{defn}[Guarded ideal]
  A family~$G$ of subsets $G(X,Y)\subseteq\Hom(X,Y)$ ($X,Y\in |\BC|$)
  in a monoidal category $(\BC,\tensor,I)$ is a \emph{guarded ideal}
  if it is closed under $\tensor$ and under composition with arbitrary
  $\BC$-morphisms on both sides, and $G(I,I)=\Hom(I,I)$.
\end{defn}
There is always a \emph{least guarded ideal},
$G(X,Y) = \{g\comp f\mid f:X\to I, g:I\to Y\}$. Moreover, as indicated
above:

\begin{lemdefn}\label{lem:total}
  In a guarded category, the sets $\IHom(X,Y)$ form a guarded ideal,
  the guarded ideal \emph{induced} by the guarded structure.
\end{lemdefn}

\noindent Conversely, it is clear that every guarded ideal
\emph{generates} a guarded structure by just closing under the rules
of Definition~\ref{def:guard_sm}.
\begin{defn}[Ideally guarded category]\label{def:ideally-guarded}
  A guarded category is \emph{ideal} or \emph{ideally guarded} (over
  $G$) if it is generated by some guarded ideal~($G$).
 \end{defn}
 \noindent We give a more concrete description:
\begin{thm}\label{thm:t_to_p}
  Let $(\BC,\tensor,I)$ be ideally guarded over $G$. Then
 $\GHom(A\tensor B,C\tensor D)$ consists of the morphisms of the form
\begin{equation*}%
\vcenter{\hbox{\includegraphics[scale=.3]{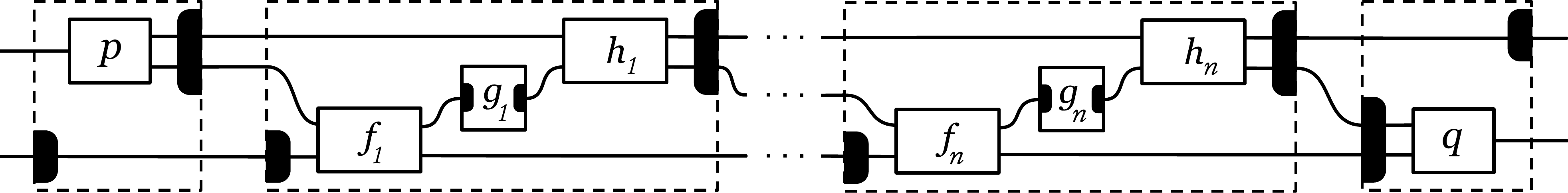}}}
\medskip
\end{equation*}
for $g_i$ in $G$ and arbitrary $p$, $q$, $f_i$, $h_i$.
\end{thm}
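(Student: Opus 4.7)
The plan is to prove the two inclusions separately. Call $\mathcal{N}(A\tensor B, C\tensor D)$ the set of morphisms admitting a decomposition of the displayed shape.

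For the inclusion $\mathcal{N} \subseteq \GHom$, I would read the normal-form diagram as a concrete morphism expression built from the $g_i \in G$, the generic morphisms $p, q, f_i, h_i$, and identities, combined using tensor and composition. Since each $g_i$ is ideal and in particular partially guarded when suitably framed (i.e.\ $\upsilon_{Y}^{\mone}\comp g_i \comp \hat\upsilon_{X}\in \GHom(X\tensor I, I\tensor Y)$ by definition of $\IHom$), a repeated application of \textbf{(vac${}_\tensor$)}, \textbf{(par${}_\tensor$)}, \textbf{(cmp${}_\tensor$)}, together with the weakening derivation noted after Definition~\ref{def:guard_sm}, gives a derivation that the whole expression lies in $\GHom(A\tensor B, C\tensor D)$. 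This direction is essentially routine, and amounts to checking that the normal form respects the geometric criterion of Theorem~\ref{thm:gcompl}: every path from an unguarded input in~$A$ to a guarded output in~$D$ necessarily passes through at least one of the $g_i$-layers.

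For the converse inclusion $\GHom\subseteq \mathcal{N}$, the strategy is to show that $\mathcal{N}$ is itself closed under all the rules defining a guarded structure, and that it contains the seed data (namely, the morphisms in $G$, viewed as partially guarded, and all vacuously guarded morphisms). Since by hypothesis the guarded structure is generated by $G$ via those rules, this forces $\GHom\subseteq \mathcal{N}$. Concretely, I would verify in sequence:
\begin{itemize}
\item \textbf{Seed cases.} Every $u\in G(X,Y)$ fits the normal form with a single ideal box $g_1 := u$ and trivial $p, q, h_1$; the unitor isomorphism $\gamma_{I,A}$ fits with a trivial $g_i\in G(I,I)$ (which is in $G$ because $G(I,I)=\Hom(I,I)$); any vacuous morphism $f\tensor g$ fits similarly with a trivial ideal layer.
\item \textbf{Closure under $\textbf{(cmp}_\tensor\textbf{)}$.} Given $f, g\in\mathcal{N}$ with matching interfaces, one stacks the ideal layers of $g$ and $f$ and absorbs the post-processing of $g$ and the pre-processing of $f$ into a single arbitrary connecting morphism between the two stacks of $g_i$'s; closure of $G$ under pre- and post-composition with arbitrary morphisms is not needed here, only the fact that the $f_i, h_i$ are unrestricted.
\item \textbf{Closure under $\textbf{(par}_\tensor\textbf{)}$.} Given two normal forms, one tensors their ideal layers into a single stack (using closure of $G$ under $\tensor$, so that pairs $g_i \tensor g_j'$ are again ideal) and reassembles the outer wiring using associators and symmetries.
\end{itemize}

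The main obstacle is the composition step, and subordinate to it, the parallel-composition step. Both require a careful geometric rearrangement: the normal form imposes a specific pattern in which the ideal boxes $g_i$ sit between an initial distributor $p$ and a final collector $q$, with auxiliary boxes $f_i, h_i$ separating consecutive ideal layers; when two such shapes are glued, the resulting diagram must be renormalised to the same template. My plan is to handle this by induction on the number of ideal layers, using the symmetric monoidal coherence isomorphisms to permute wires freely and the guarded-ideal closure properties to absorb any stray pre- or post-composed morphisms into neighbouring $f_i$ or $h_i$. Once composition and parallel composition are established, closure under $\textbf{(uni}_\tensor\textbf{)}$ and weakening is immediate, completing the proof.
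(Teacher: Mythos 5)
Your proposal is correct and follows essentially the same route as the paper: the easy inclusion via the typing rules (or the geometric criterion of Theorem~\ref{thm:gcompl}), and the converse by showing that the normal-form class contains the seed morphisms and is closed under \textbf{(uni${}_\tensor$)}, \textbf{(vac${}_\tensor$)}, \textbf{(cmp${}_\tensor$)} (by concatenating the section chains) and \textbf{(par${}_\tensor$)} (by tensoring sections layerwise, using closure of $G$ under $\tensor$ and induction on the number of layers). The only point the paper makes explicit that you leave implicit is how to align chains of unequal length before tensoring: one pads the shorter chain with trivial sections whose ideal box has type $I\to I$, which is available precisely because $G(I,I)=\Hom(I,I)$.
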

The transitions between guarded ideals and guarded structures are not
in general mutually inverse: The guarded structure generated the
guarded ideal induced by a guarded structure may be smaller than the
original one (Example~\ref{expl:non-ideal}), and the guarded ideal
induced by the guarded structure generated by a guarded ideal $G$ may
be larger than~$G$ (Remark~\ref{rem:triv-ideal}). We proceed to
analyse details.
\begin{prop}\label{prop:triv-ideal}
  On every symmetric monoidal category, the least guarded structure
  (Lemma~\ref{lem:triv}) is ideal.
\end{prop}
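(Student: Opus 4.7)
The plan is to exhibit a guarded ideal that generates the vacuous guarded structure of Lemma~\ref{lem:triv}. I take $G_0$ to be the least guarded ideal, $G_0(X,Y) = \{g\comp f\mid f:X\to I,\, g:I\to Y\}$. That $G_0$ is a guarded ideal is routine: closure under $\tensor$ follows since $(g_1\comp f_1)\tensor(g_2\comp f_2) = (g_1\tensor g_2)\comp(f_1\tensor f_2)$ factors through $I\tensor I\iso I$; closure under composition with arbitrary morphisms on either side is immediate from the shape of elements of $G_0$; and $G_0(I,I) = \Hom(I,I)$ by taking one of the two components to be $\id_I$.

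For the inclusion that the structure generated by $G_0$ lies in the vacuous one, I check that each $g_0 = p\comp q\in G_0(X,Y)$, viewed via unitors as $\upsilon_Y^\mone\comp g_0\comp \hat\upsilon_X:X\tensor I\to I\tensor Y$, admits the factorization of Lemma~\ref{lem:triv}. Setting $E := I$, $g := \upsilon_Y^\mone\comp p:I\to I\tensor Y$, and $h := q\comp\hat\upsilon_X:X\tensor I\to I$, the required equation reduces, via naturality of the unitors and the triangle identity, to the identity $(\id_I\tensor p)\comp(q\tensor\id_I) = q\tensor p$ by functoriality of $\tensor$. Since the vacuous structure is itself a guarded structure, closure of $G_0$ under the rules of Definition~\ref{def:guard_sm} remains inside it.

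For the reverse inclusion, I derive any $f$ in the vacuous $\GHom(A\tensor B, C\tensor D)$ from the closure rules alone, given its defining factorization $f = (h\tensor\id_D)\comp(\id_A\tensor g)$ with $g:B\to E\tensor D$ and $h:A\tensor E\to C$. Two applications of (vac${}_\tensor$) give $\id_A\tensor g \in \GHom(A\tensor B,\, A\tensor(E\tensor D))$ and $h\tensor\id_D \in \GHom((A\tensor E)\tensor D,\, C\tensor D)$; the weakening rule noted after Definition~\ref{def:guard_sm} then reshapes the latter into $\GHom(A\tensor(E\tensor D),\, C\tensor D)$; and a single application of (cmp${}_\tensor$) delivers $f\in \GHom(A\tensor B,\, C\tensor D)$. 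This derivation invokes no element of $G_0$ at all, so $f$ lies in the structure generated by $G_0$ (indeed by any guarded ideal). The step I expect to demand the most care is the first-inclusion coherence computation bridging $(X\tensor I)\tensor Y$ and $X\tensor(I\tensor Y)$; the subsequent rule-based derivation is then mechanical.
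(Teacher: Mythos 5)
Your proof is correct, but it takes a more concrete route than the paper. The paper's own proof is a one-liner: it takes as generating ideal the \emph{induced} ideal $G=\IHom$ of the vacuous structure and invokes a general lemma (that the structure generated by the induced ideal is always contained in the original one), so that minimality of the vacuous structure immediately forces equality. You instead work with the \emph{least} guarded ideal $G_0(X,Y)=\{g\comp f\mid f:X\to I,\ g:I\to Y\}$ and verify both inclusions by hand. The overall two-inclusion skeleton is the same, but your choice of $G_0$ requires the coherence computation showing that each $p\comp q\in G_0$ admits the factorization of Lemma~\ref{lem:triv} (which is fine; it comes down to $(\id_I\tensor p)\comp(q\tensor\id_I)=q\tensor p$ together with $\upsilon_I=\hat\upsilon_I$), whereas the paper's choice makes the corresponding inclusion tautological. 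Two further remarks. First, your reverse inclusion is automatic: the structure generated by any guarded ideal is a guarded structure, and the vacuous one is the least such by Lemma~\ref{lem:triv}, so the explicit derivation via \textbf{(vac${}_{\tensor}$)}, weakening and \textbf{(cmp${}_{\tensor}$)} — which is in fact just the containment half of the proof of Lemma~\ref{lem:triv} — is redundant here, though harmless. Second, your argument actually yields slightly more information than the paper's: in view of Remark~\ref{rem:triv-ideal}, the ideal induced by the vacuous structure is in general strictly larger than $G_0$, and you show that the smaller ideal $G_0$ already suffices to generate it.
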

\begin{rem}\label{rem:triv-ideal}
  Vacuously guarded categories need not induce the least guarded ideal
  (although by the next results, this does hold in the Cartesian and
  the co-Cartesian case). In fact, by Lemma~\ref{lem:triv}, the
  guarded ideal induced by the vacuous guarded structure consists of
  the morphisms of the form $(h\tensor\id_D)(\id_A\tensor g)$ (eliding
  associativity and the unitor) where $g:I\to E\tensor D$,
  $h:A \tensor E\to I$:
  \begin{equation}\label{eq:trivial-ideal}
    \includegraphics[align=c,scale=.3]{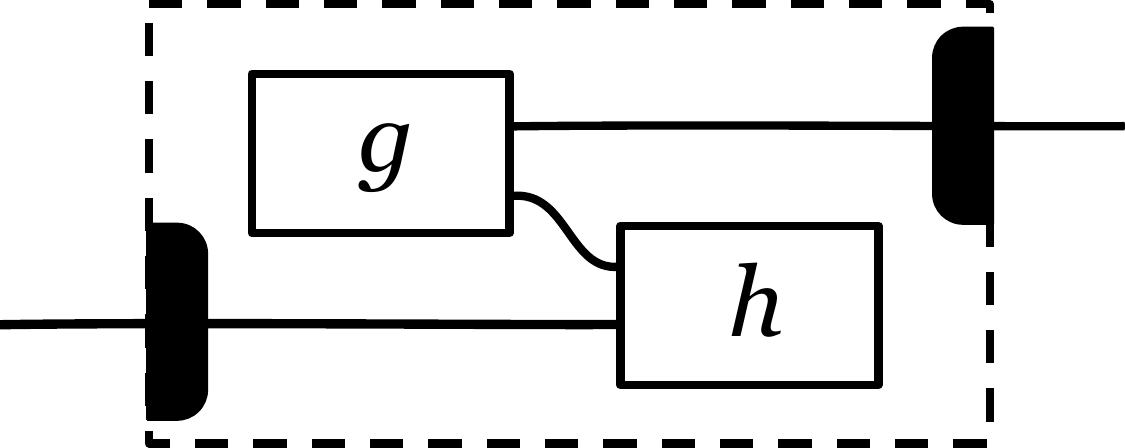}
  \end{equation} 
  This ideal will resurface in the discussion of Hilbert spaces
  (Section~\ref{sec:dag}).
\end{rem}
\noindent The situation is simpler in the Cartesian and, dually, in the
co-Cartesian case.
\begin{lem}\label{lem:induce} Let $\BC$ be ideally guarded over $G$, and
  suppose that every $f\in G({X\tensor Y}\comma Z)$ factors through
  $\hat f\tensor\id:X\tensor Y\to V\tensor Y$ for some $\hat f\in G(X,V)$. Then 
the guardedness structure of $\BC$ induces~$G$.
\end{lem}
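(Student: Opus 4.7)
The plan is to establish that $\IHom$ and $G$ coincide, exploiting the factorization hypothesis to invert the ``generated guarded structure vs.\ induced ideal'' passage.

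One inclusion, $G\subseteq\IHom$, is immediate from Theorem~\ref{thm:t_to_p}: for any $g\in G(X,Y)$, the morphism $\upsilon_Y^{-1}\comp g\comp \hat{\upsilon}_X$ fits the normal form with a single occurrence of $g$ and with the unitor isomorphisms playing the role of the ambient $\BC$-plumbing, so it lies in $\GHom(X\tensor I,\,I\tensor Y)$, which is precisely the condition for $g\in\IHom(X,Y)$. This direction does not need the factorization hypothesis and holds for any ideal generating the guarded structure.

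For the converse $\IHom\subseteq G$, take $f\in\IHom(X,Y)$, so that $\tilde f = \upsilon_Y^{-1}\comp f\comp \hat{\upsilon}_X$ lies in $\GHom(X\tensor I,\,I\tensor Y)$. Apply Theorem~\ref{thm:t_to_p} to obtain a decomposition of $\tilde f$ as a composite involving finitely many $g_i\in G$ together with $\BC$-plumbing $p,q,f_i,h_i$. Now apply the factorization hypothesis to each $g_i$: since each $g_i$ has a tensor-product domain, it factors as a $\BC$-morphism precomposed with $\hat{g}_i\tensor \id$ for some $\hat{g}_i\in G$, letting the ``passive'' tensor factors of each $g_i$'s input be routed around the $G$-component. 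Propagating this rewriting through the normal form, the chain of $g_i$'s collapses: guided by the geometric characterization (Theorem~\ref{thm:gcompl}), only those $g_i$'s actually transmitting guarded data from the unguarded input $X$ to the guarded output $Y$ need to be consolidated, while any remaining $g_i$ contributes only to the $I\tensor I$ part and can be absorbed into the ambient $\BC$-plumbing. This rewrites $\tilde f$ as $q'\comp (g\tensor \id)\comp p'$ for a single $g\in G$ and arbitrary $p',q'$, and untransposing yields $f$ as a composition of $g$ with $\BC$-morphisms on either side. Closure of $G$ under composition with arbitrary morphisms and under tensoring with identities (the ideal axioms) then gives $f\in G(X,Y)$.

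The main obstacle is formalizing the collapse of several $g_i$'s into a single ideal morphism. I expect an induction on the number of $g_i$'s in the normal form, using at each step the factorization hypothesis to push one $g_i$ past the others and then merge it with the running consolidated $G$-morphism via the ideal closure properties. Theorem~\ref{thm:gcompl} is the key auxiliary tool, as it identifies precisely which paths through the diagram genuinely carry guardedness and hence which $g_i$'s must be merged rather than absorbed into the plumbing.
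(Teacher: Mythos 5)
Your overall strategy coincides with the paper's: the easy inclusion $G\subseteq\IHom$, then taking the Theorem~\ref{thm:t_to_p} normal form of the transpose $\tilde f$ of $f\in\IHom(X,Y)$ and collapsing the $g_i$'s into a single ideal morphism by induction, using the factorization hypothesis together with the ideal closure axioms. However, the two places where you commit to detail are exactly where the argument breaks. Your concluding step infers $\tilde f\in G$ from a presentation $\tilde f=q'\comp(g\tensor\id)\comp p'$ with $g\in G$, appealing to closure of $G$ under ``tensoring with identities''. That is not a guarded-ideal axiom (``closed under $\tensor$'' means closure under tensoring two ideal morphisms), and it fails in the motivating examples: a Hilbert--Schmidt $g$ tensored with $\id_H$ for infinite-dimensional $H$ is not Hilbert--Schmidt. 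Indeed, if that closure held, the lemma would be immediate from Theorem~\ref{thm:t_to_p} without any factorization hypothesis, since every normal-form section would already lie in $G$; the hypothesis exists precisely because $g\tensor\id$ leaves $G$. Nor does the shape $q'\comp(g\tensor\id)\comp p'$ by itself force membership in $G$: in the co-Cartesian case (where the hypothesis is automatic) take $p'=\inr$, so that $q'\comp(g+\id)\comp\inr=q'\comp\inr$ is an arbitrary morphism. So the endgame needs an account of where the bystander wire goes; in the paper it is capped off by unit-typed plumbing, which lies in $G$ because $G(I,I)=\Hom(I,I)$ and $G$ is closed under composition with arbitrary morphisms --- an axiom your sketch never invokes but cannot do without.

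Second, the collapse itself --- the entire content of the proof --- is only conjectured (``I expect an induction\dots''), and the guidance you give does not supply it: applying the hypothesis to each $g_i$ separately is not what is needed, and there is neither need nor licence to discard via Theorem~\ref{thm:gcompl} those $g_i$ ``not transmitting guarded data''. What the induction actually requires is an invariant ``tail form'': absorb the sections one at a time from the output end; at each step first merge the incoming $g_i$ with the accumulated ideal morphism ($g_i\tensor g\in G$ by tensor closure of two $G$-morphisms, then compose with the section's plumbing, staying in $G$ by composition closure), and only then apply the factorization hypothesis to this merged morphism, so as to route the wires of the not-yet-absorbed part around it and restore the tail shape for the next step; the base case is handled by $G(I,I)=\Hom(I,I)$ as above. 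Without establishing this invariant and repairing the final inference, the proposal does not yet constitute a proof.
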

If $\tensor=+$, the premise of the lemma is automatic, since
$f\in G(X+Y,Z)$ can be represented as
$[f\inl,f\inr] = [\id,f\inr]\comp (f\inl+\id)$ where
$f\inl\in G(X,Z)$ by the closure properties of guarded ideals. Hence, we
obtain
\begin{thm}\label{thm:cart-total}
  The guarded structure generated by a guarded ideal $G$ on a
  co-Cartesian category is equivalently described by
  $\Hom_{\inr}(X,Y+Z)= \{[\inl,g] h\mid g\in G(W, Y+Z), h:X\to Y+
  W\}$, and hence induces $G$.
\end{thm}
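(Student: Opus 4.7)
The plan is to handle the two claims separately. For the ``hence'' claim that $\IHom = G$, the approach is to invoke Lemma~\ref{lem:induce}, whose premise is automatic in the co-Cartesian setting: writing $f \in G(X+Y, Z)$ as $f = [\id_Z, f\inr] \comp (f\inl + \id_Y)$ yields $f\inl \in G(X, Z)$ by closure of $G$ under precomposition, which is exactly the factorization required.

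For the explicit description, set $\CR(X, Y+Z) := \{[\inl, g]h \mid g \in G(W, Y+Z),\ h : X \to Y+W\}$ and prove $\Hom_\inr = \CR$. The inclusion $\CR \subseteq \Hom_\inr$ follows directly from the generation rules: $g \in G$ yields $g \in \Hom_\inr(W, Y+Z)$ by weakening (since totally guarded morphisms are guarded for any decomposition of the codomain); by (vac${}_{\mplus}$), $\inl \in \Hom_\inr(Y, Y+Z)$; (par${}_{\mplus}$) then gives $[\inl, g] \in \Hom_\inr(Y+W, Y+Z)$; and precomposition with the arbitrary $h$ preserves $\Hom_\inr$-membership (a derived consequence of (vac${}_{\tensor}$) and (cmp${}_{\tensor}$) in the underlying guarded category).

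For the reverse inclusion $\Hom_\inr \subseteq \CR$, my strategy is to verify that $\CR$ already contains the $G$-generators and is closed under the generation rules (vac${}_{\mplus}$), (cmp${}_{\mplus}$), (par${}_{\mplus}$). The base case is immediate: for $g \in G(X, Y+Z)$, take $W = X$ and $h = \inr_X$, so $g = [\inl, g] \comp \inr_X$. For (vac${}_{\mplus}$), take $W = \iobj$ and $g$ the unique morphism $\iobj \to Y+Z$, which lies in the least ideal and hence in $G$. For (par${}_{\mplus}$), given $f_i = [\inl, g_i] h_i \in \CR(X_i, Y+Z)$ for $i = 1, 2$, take $W = W_1 + W_2$ and $g = [g_1, g_2]$ (in $G$ by closure under $\tensor$ and composition with arbitrary morphisms), combining $h_1, h_2$ via the injections into $W_1 + W_2$ to produce $h$.

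The main technical step will be closure under (cmp${}_{\mplus}$). Given $f = [\inl, g_1] h_1 \in \CR(X, Y+Z)$, $u = [\inl, g_2] h_2 \in \CR(Y, V_1+V_2)$, and $v : Z \to V_1+V_2$, the computation
\[
[u, v] \comp f = [u, k] \comp h_1 = [\inl, [g_2, k]] \comp (h_2 + \id_{W_1}) \comp h_1,
\]
with $k := [u, v] \comp g_1$, exhibits the desired form. Here $k \in G$ because $g_1 \in G$ and $G$ is closed under postcomposition, and consequently $[g_2, k] \in G$ by closure under $\tensor$ and composition, so that setting $W' = W_2 + W_1$ and $g' = [g_2, k]$ witnesses $[u, v] f \in \CR$. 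The main obstacle is precisely this step: threading two witness decompositions for $f$ and $u$ into a single one $[\inl, g']h'$ relies essentially on both closure properties of the guarded ideal --- under composition with arbitrary morphisms and under the coproduct bifunctor. Once (cmp${}_{\mplus}$) is handled, the explicit description is established, and the ``hence'' claim then follows directly from Lemma~\ref{lem:induce}.
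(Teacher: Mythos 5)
Your proposal is correct and follows essentially the same strategy as the paper: show that the set $\{[\inl,g]h\}$ is closed under the co-Cartesian guardedness rules (your key computation for \textbf{(cmp${}_{\mplus}$)}, absorbing $[u,v]\comp g_1$ into a single ideal morphism $[g_2,k]$ and rebracketing, is exactly the paper's), show the reverse containment, and invoke Lemma~\ref{lem:induce} via the factorization $f=[\id,f\inr]\comp(f\inl+\id)$. The only (harmless) divergence is that for the containment of $\{[\inl,g]h\}$ in the generated structure you derive it directly from the generation rules, whereas the paper instead checks that such morphisms fit the normal form of Theorem~\ref{thm:t_to_p}.
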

\begin{corollary}\label{cor:cart-total1}
  The guarded structure generated by a guarded ideal $G$ on a Cartesian
  category is equivalently described by
  $\Hom^{\fst}(X\times Y,Z)= \{h\comp \brks{g, \snd}\mid g\in
  G(X\times Y,W), h:W\times Y\to Z\}$, and hence induces $G$.
\end{corollary}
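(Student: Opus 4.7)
The plan is to derive the result from Theorem~\ref{thm:cart-total} by duality. Remark~\ref{rem:dual} furnishes a transposition bijection between guarded structures on $\BC$ and on $\BC^{op}$, and since $\BC$ is Cartesian, $\BC^{op}$ is co-Cartesian with coproduct given by the product of $\BC$. The three closure conditions defining a guarded ideal are manifestly self-dual (closure under $\tensor$ is symmetric, pre- and post-composition closures interchange under duality, and $G(I,I)=\Hom(I,I)$ is unchanged), so $G$ is equally a guarded ideal on $\BC^{op}$; a direct check will show that the guarded structure on $\BC$ generated by $G$ corresponds under the duality to the one generated by $G$ on $\BC^{op}$.

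I then apply Theorem~\ref{thm:cart-total} in $\BC^{op}$. Dualizing an intended $\BC$-arrow $X\times Y\to Z$ yields a $\BC^{op}$-arrow $Z\to Y+X$ in which $\inr\colon X\to Y+X$ marks the dual of $\BC$'s unguarded input $X$, so the theorem supplies
$\Hom_{\inr}^{\BC^{op}}(Z, Y+X) = \{[\inl,g]\comp h\mid g\in G(W, Y+X),\,h\colon Z\to Y+W\}$. Translating back via the dictionary $+\leftrightarrow\times$, $\inl,\inr\leftrightarrow\fst,\snd$, copair $\leftrightarrow$ pair, and reversed composition, the composite $[\inl,g]\comp h$ becomes $h^{op}\comp\brks{\fst, g^{op}}\colon Y\times X\to Z$ with $g^{op}\in G(Y\times X, W)$ and $h^{op}\colon Y\times W\to Z$.

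The last step is to massage this into the form stated in the Corollary, which requires absorbing two obvious product braidings: the one permuting $X\times Y\cong Y\times X$ is moved into $g^{op}$ using closure of $G$ under pre-composition, yielding $g\in G(X\times Y, W)$, and the one permuting $W\times Y\cong Y\times W$ is moved into the arbitrary $h^{op}$, yielding $h\colon W\times Y\to Z$. A short pair-calculus check then verifies $h^{op}\comp\brks{\fst, g^{op}}\comp\gamma_{X,Y} = h\comp\brks{g, \snd}$. The ``induces $G$'' clause will follow dually from the corresponding clause in Theorem~\ref{thm:cart-total}. The only real care will be in tracking these braidings; beyond duality and the closure properties of $G$, no new ideas are needed.
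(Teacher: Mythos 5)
Your proposal is correct and follows exactly the route the paper intends: the paper states this as an immediate corollary of Theorem~\ref{thm:cart-total} via the duality of Remark~\ref{rem:dual} (with no separate proof given), relying as you do on the self-duality of the guarded-ideal axioms and of the generating rules. Your explicit bookkeeping of the two product braidings, absorbed into $g$ and $h$ using closure of $G$ under pre-composition, is the only detail the paper leaves implicit, and you handle it correctly.
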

The description can be further simplified in the Cartesian closed
case.
\begin{corollary}\label{cor:cart-total2}
  Given a guarded ideal $G$ on a Cartesian closed category, put
  $f:X\times Y\to^{\pr_1} Z$ iff\/ $\curry f\in G(X,Z^Y)$. This
  describes the guarded structure induced by $G$ iff $G$ is
  \emph{exponential}, i.e.\ $f\in G(X,Y)$ implies $f^V\in G(X^V,Y^V)$.
\end{corollary}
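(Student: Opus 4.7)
The plan is to compare the candidate description $\{f : X \times Y \to Z \mid \curry f \in G(X, Z^Y)\}$ against the characterization of $\Hom^{\pr_1}(X\times Y, Z)$ furnished by Corollary~\ref{cor:cart-total1}, and to prove each direction of the iff separately.

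For the ``if'' direction, assume $G$ is exponential. The key intermediate fact to establish is that $g \in G(X\times Y, W)$ implies $\curry g \in G(X, W^Y)$: using the standard factorization $\curry g = g^Y \circ \eta_X$, where $\eta_X = \curry(\id_{X\times Y}) : X \to (X\times Y)^Y$ is the unit of the adjunction $\argument\times Y \dashv (\argument)^Y$, exponentiality of~$G$ yields $g^Y \in G((X\times Y)^Y, W^Y)$, and closure of $G$ under precomposition with arbitrary morphisms gives $\curry g \in G(X, W^Y)$. Combining this with the Corollary~\ref{cor:cart-total1} form $f = h \circ \brks{g, \snd}$ (with $g \in G$), one finds $\curry f = \psi \circ \curry g$ for $\psi = \curry(h \circ \brks{\ev, \pr_2}) : W^Y \to Z^Y$, so $\curry f \in G$ by ideal closure. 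The converse inclusion, which holds unconditionally, follows by writing $f = \ev \circ \brks{\curry f \circ \pr_1, \pr_2}$: this exhibits $f$ in the form required by Corollary~\ref{cor:cart-total1} with guarded part $\curry f \circ \pr_1 \in G(X\times Y, Z^Y)$ and $h = \ev$.

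For the ``only if'' direction, suppose the candidate description coincides with that of Corollary~\ref{cor:cart-total1}. Given $f \in G(X, Y)$ and an arbitrary object $V$, note that $f \circ \ev : X^V \times V \to Y$ lies in $G$ by ideal closure, hence matches Corollary~\ref{cor:cart-total1}'s form (take $g = f \circ \ev$ and $h = \pr_1 : Y \times V \to Y$). Therefore $f \circ \ev \in \Hom^{\pr_1}(X^V \times V, Y)$, so by the assumed equivalence $\curry(f \circ \ev) \in G(X^V, Y^V)$. Since $\curry(f \circ \ev) = f^V$ by the definition of the exponential functor, $f^V \in G$, and $G$ is exponential.

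The crux of the argument and the only genuinely nontrivial step is the implication $g \in G(X\times Y, W) \Rightarrow \curry g \in G(X, W^Y)$ in the ``if'' direction, because that is where exponentiality of $G$ is actually invoked; everything else is a direct consequence of Corollary~\ref{cor:cart-total1} or of ideal closure under composition. The key observation making the step work is that $\curry$ factors as $(\argument)^Y$ followed by the unit $\eta_X$, so that exponential closure of~$G$ transfers guardedness from $g$ to $\curry g$ via a single ideal-closure step.
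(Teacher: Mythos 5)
Your proof is correct and follows essentially the same route as the paper's: both directions reduce to the description in Corollary~\ref{cor:cart-total1}, using the identity $\curry g = g^Y\comp\curry(\id_{X\times Y})$ (so exponentiality transfers guardedness to $\curry g$) for the ``if'' part and $f^V=\curry(f\comp\ev)$ for the ``only if'' part, exactly as in the paper. The only cosmetic difference is that where the paper invokes weakening to get $f\in G$ into $\Hom^{\pr_1}$, you exhibit $f\comp\ev$ explicitly in the form $h\comp\brks{g,\snd}$, which amounts to the same thing.
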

(We leave it as an open question whether a similar characterization
holds in the monoidal closed case.) Natural examples of both ideal and
non-ideal guardedness are found in metric spaces:
\begin{expl}[Metric spaces]\label{expl:non-ideal}
  The guarded structure on metric spaces from
  Example~\ref{expl:m-space} fails to be ideal: It induces the guarded
  ideal of contractive maps, which however generates the (ideal)
  guarded structure described by $f:X\times Y\to^{\snd} Z$ iff
  $f(x,y)$ is \emph{uniformly} contractive in $y$, i.e.\ there is
  $c<1$ such that for every $x$, $\lambda y.\,f(x,y)$ is contractive
  with contraction factor $c$.
\end{expl}
\noindent A large class of ideally guarded structures arises as
follows.
\begin{prop}\label{prop:mult-guard}
Let\/ $\BC$ be a Cartesian category equipped with an endofunctor \mbox{$\grd:\BC\to\BC$} and 
a natural transformation $\oname{next}:\Id\to\grd$. Then the following definition
yields a guarded ideal in $\BC$: 
$G(X,Y) = \{f\oname{next}\mid f:\grd X\to Y\}$.
The arising guarded structure is
$\Hom^{\fst}(X\times Y,Z) = \{f\brks{\oname{next},\snd}\mid f:\grd
(X\times Y)\times Y\to Z\}$.
If moreover $\oname{next}:X\times Y\to\grd(X\times Y)$ factors through
$\oname{next}\times\id:X\times Y\to\grd X\times Y$, then
$\Hom^{\fst}(X\times Y,Z) = \{f\comp(\oname{next}\times\id)\mid f:\grd
X\times Y\to Z\}$.
\end{prop}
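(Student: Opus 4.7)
The plan is to dispatch the three assertions in order, with naturality of $\oname{next}$ as the recurring tool.

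For the first assertion, I would verify the closure properties of a guarded ideal. Post-composition with $g:Y\to Y'$ is trivial: $g\comp(f\oname{next}_X) = (g\comp f)\comp\oname{next}_X$. Pre-composition with $h:X'\to X$ uses naturality: $f\comp\oname{next}_X\comp h = f\comp\grd(h)\comp\oname{next}_{X'}$, which is again of the required form. Closure under $\times$ reduces to the identity $(f\oname{next}_X)\times(f'\oname{next}_{X'}) = \brks{f\comp\grd\pr_1,\,f'\comp\grd\pr_2}\comp\oname{next}_{X\times X'}$, obtained by applying naturality of $\oname{next}$ at $\pr_1$ and $\pr_2$ and contracting the pairing. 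Finally, $G(1,1)=\Hom(1,1)$ holds by terminality of~$1$, since then there is a unique arrow $\grd 1\to 1$, forcing $G(1,1)=\{\id_1\}$.

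For the second assertion, I would apply Corollary~\ref{cor:cart-total1}, which describes the generated guarded structure as $\Hom^{\fst}(X\times Y,Z) = \{h\comp\brks{g,\snd}\mid g\in G(X\times Y,W),\, h:W\times Y\to Z\}$. Writing $g = g'\comp\oname{next}_{X\times Y}$ and using the pairing identity $\brks{g'\comp\oname{next}_{X\times Y},\snd}=(g'\times\id)\comp\brks{\oname{next}_{X\times Y},\snd}$, this set rewrites as $\{h\comp(g'\times\id)\comp\brks{\oname{next}_{X\times Y},\snd}\}$, which visibly equals $\{f\comp\brks{\oname{next}_{X\times Y},\snd}\mid f:\grd(X\times Y)\times Y\to Z\}$ after absorbing $h$ and $g'$ into a single~$f$. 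The reverse inclusion is immediate by choosing $g:=\oname{next}_{X\times Y}$ and $h:=f$.

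For the third assertion, assume $\oname{next}_{X\times Y}=k\comp(\oname{next}_X\times\id)$ for some $k:\grd X\times Y\to\grd(X\times Y)$. Then $\brks{\oname{next}_{X\times Y},\snd}=\brks{k,\snd}\comp(\oname{next}_X\times\id)$, so any $f\comp\brks{\oname{next}_{X\times Y},\snd}$ becomes $(f\comp\brks{k,\snd})\comp(\oname{next}_X\times\id)$. Conversely, naturality alone gives $f'\comp(\oname{next}_X\times\id) = f'\comp\brks{\grd(\pr_1)\comp\oname{next}_{X\times Y},\snd} = f'\comp(\grd\pr_1\times\id)\comp\brks{\oname{next}_{X\times Y},\snd}$; notice that this direction does not use the factorization hypothesis. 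The main bookkeeping obstacle I anticipate is the product-closure step of the first assertion, where both factors must be synchronised via naturality and reassembled into a single pairing based at $\oname{next}_{X\times X'}$; once the correct substitutions are identified, the remaining manipulations in the other steps are routine diagram chases.
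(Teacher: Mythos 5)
Your proof is correct and follows essentially the same route as the paper: the key computation for closure under $\times$, namely $(f\comp\oname{next})\times(f'\comp\oname{next}) = \brks{f\comp\grd\pr_1, f'\comp\grd\pr_2}\comp\oname{next}$ via naturality at the projections, is exactly the one the paper records, and the description of the generated structure via Corollary~\ref{cor:cart-total1} is the intended argument for the second and third assertions. You actually supply more detail than the paper, which only exhibits the $\times$-closure and declares the remaining checks easy.
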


\begin{rem}\label{rem:ml}
  Proposition~\ref{prop:mult-guard} connects our approach to previous
  work based precisely on the assumptions of the
  proposition~\cite{MiliusLitak17} (in fact, the term guarded traced
  category is already used there, with different meaning). A
  limitation of the approach via a functor $\grd$ arises from the
  need to fix $\grd$ globally, so that, e.g., the ideal guarded
  structure on metric spaces (Example~\ref{expl:non-ideal}) is not
  covered -- capturing contractivity via $\grd$ requires fixing a
  single global contraction factor.
\end{rem}
\noindent The following instance of Proposition~\ref{prop:mult-guard}
has received extensive recent interest in programming semantics:
\begin{expl}[Topos of Trees]\label{expl:tt}
  Let $\BC$ be the \emph{topos of
    trees}~\cite{BirkedalMgelbergEtAl12}, i.e.\ the presheaf category
  $\Set^{\omega^{op}}$ where~$\omega$ is the preorder of natural
  numbers (starting from $1$) ordered by inclusion. An object $X$ of
  $\BC$ is thus a family $(X(n))_{n=1,2\ldots}$ of sets with
  restriction maps $r_n:X(n+1)\to X(n)$. The \emph{later}-endofunctor
  $\grd:\BC\to\BC$ is defined by $\grd X(1) = \{\unit\}$ and
  $\grd X(n+1) = X(n)$, and the natural transformation
  $\oname{next}_X:X\to\,\grd X$ by
  $\oname{next}_X (1)=\bang:X(1)\to \{\star\}$,
  $\oname{next}_X (n+1)=r_{n+1}:X{(n+1})\to X(n)$.  Guarded morphisms
  according to Proposition~\ref{prop:mult-guard} are called \emph{contractive},
  generalizing the metric setup.  Contractive morphisms form an
  exponential ideal, so partial guardedness is described as in
  Corollary~\ref{cor:cart-total2}, and hence agrees with contractivity
  in part of the input as in
  \cite[Definition~2.2]{BirkedalMgelbergEtAl12}.

\end{expl}

\section{Guarded Traces}%

As indicated previously, the main purpose of our notion of abstract
guardedness is to enable fine-grained control over the formation of
feedback loops, viz, \emph{traces}.
\begin{figure}[t!]
\begin{center}%
	\begin{subfigure}[t]{0.35\textwidth}
        \flushleft  
		\includegraphics[scale=0.2]{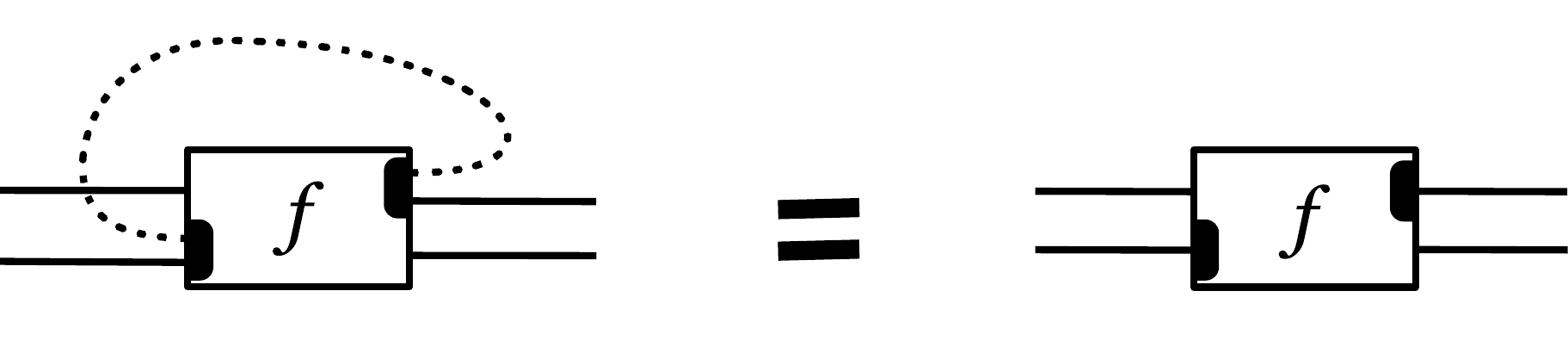}
        \caption{Vanishing I}
    \end{subfigure}%
\qquad
	\begin{subfigure}[t]{0.55\textwidth}
        \flushright
		\includegraphics[scale=0.2]{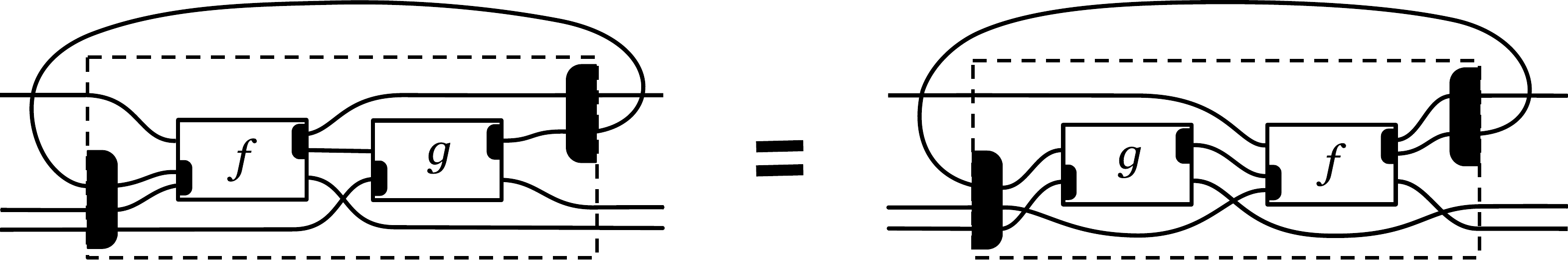}
        \caption{Sliding I}
    \end{subfigure}%
\\[1em]
	\begin{subfigure}[t]{0.4\textwidth}
        \flushleft
		\includegraphics[scale=0.2]{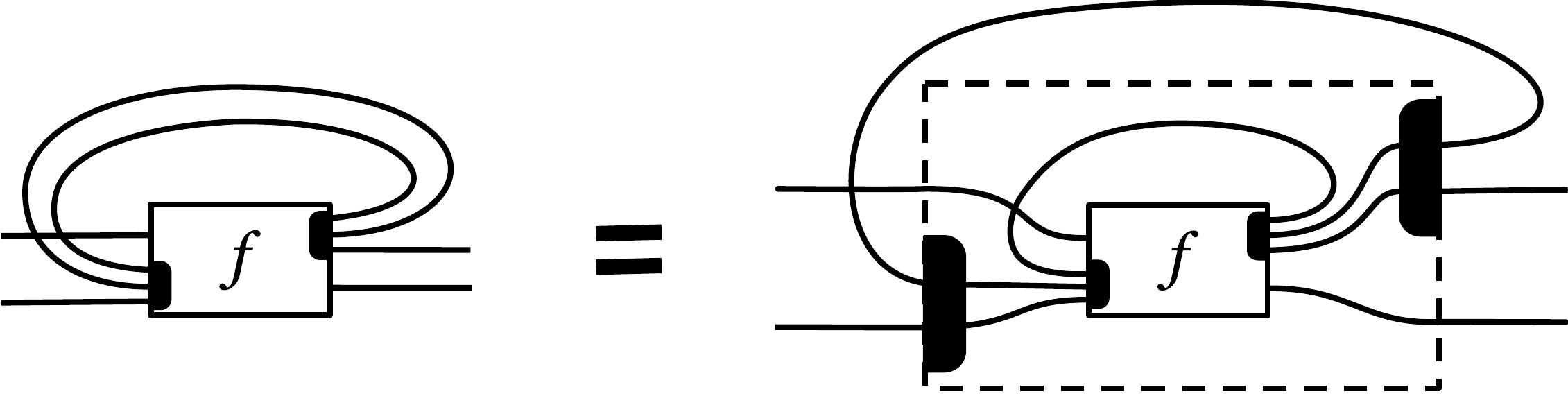}
        \caption{Vanishing II}
    \end{subfigure}%
\quad
	\begin{subfigure}[t]{0.55\textwidth}
        \flushright
		\includegraphics[scale=0.2]{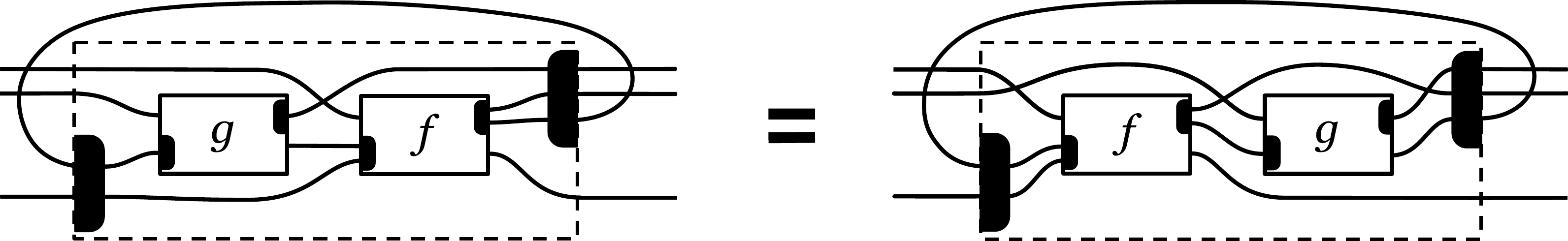}
        \caption{Sliding II}
    \end{subfigure}%
\\[1em]
	\begin{subfigure}[t]{0.41\textwidth}
        \flushleft
		\includegraphics[scale=0.2]{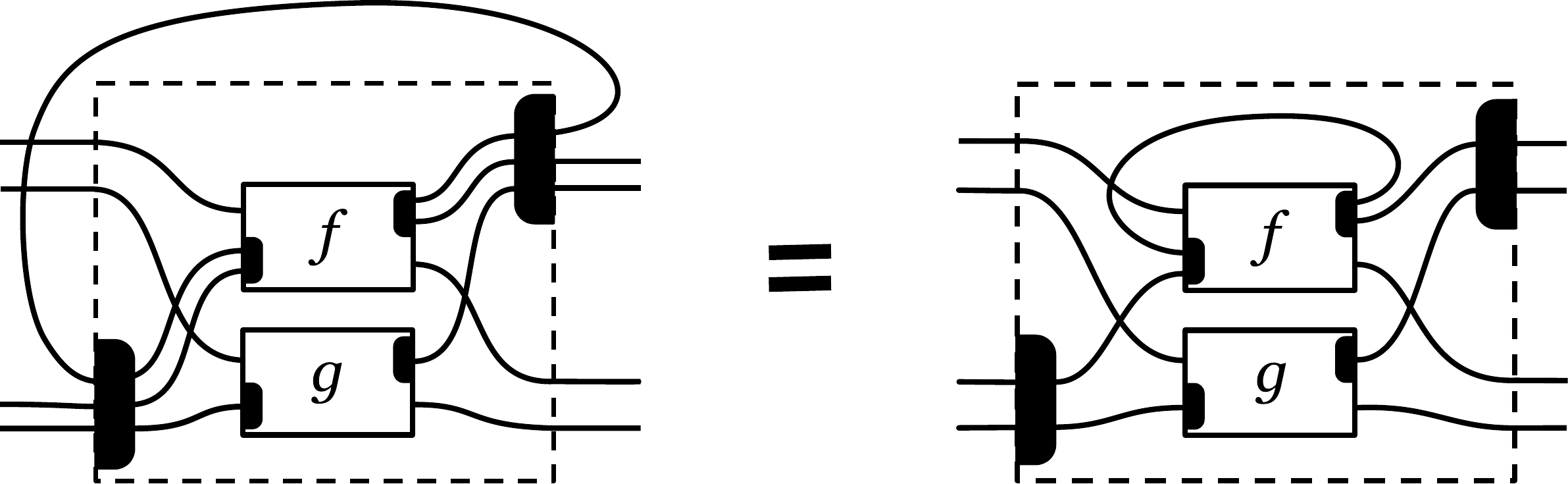}
        \caption{Superposing}
    \end{subfigure}%
\quad
	\begin{subfigure}[t]{0.53\textwidth}
        \flushright
		\includegraphics[scale=0.2]{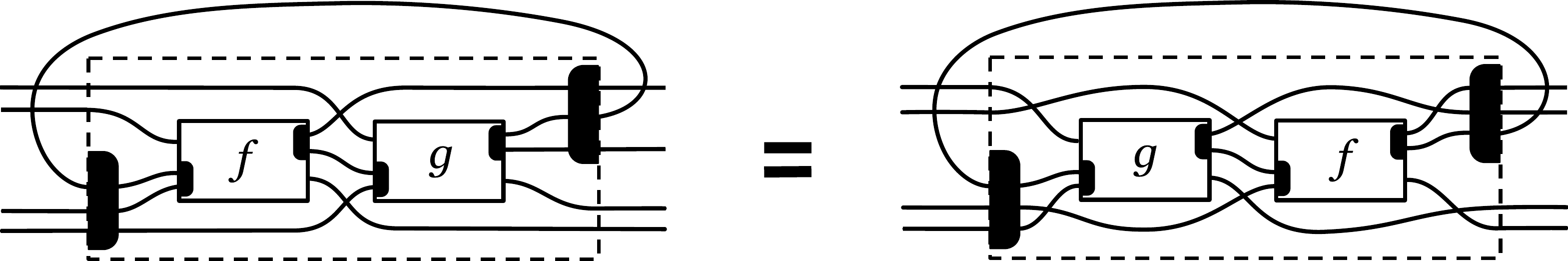}
        \caption{Sliding III}
    \end{subfigure}%
\\[1em]
\begin{subfigure}[t]{0.58\textwidth}
        \centering
\includegraphics[scale=0.2]{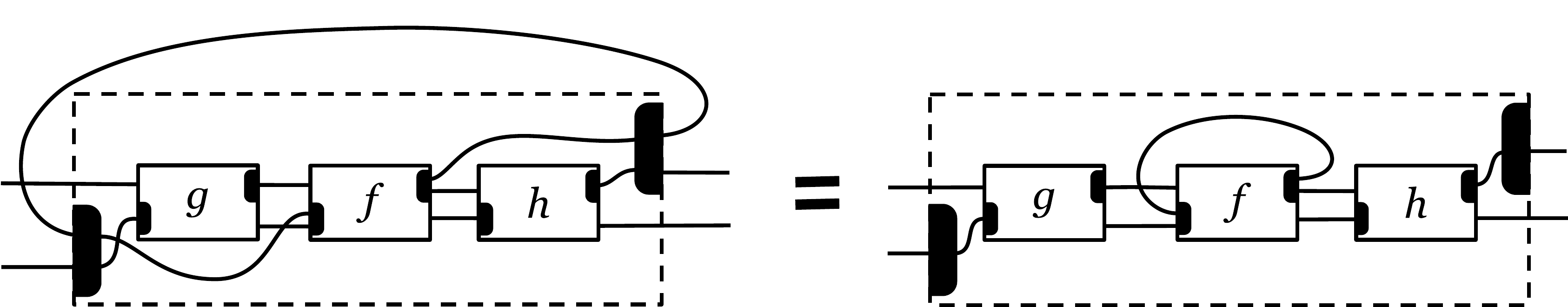}
        \caption{Tightening}
    \end{subfigure}%
\qquad
\begin{subfigure}[t]{0.35\textwidth}
        \centering
		\includegraphics[scale=0.2]{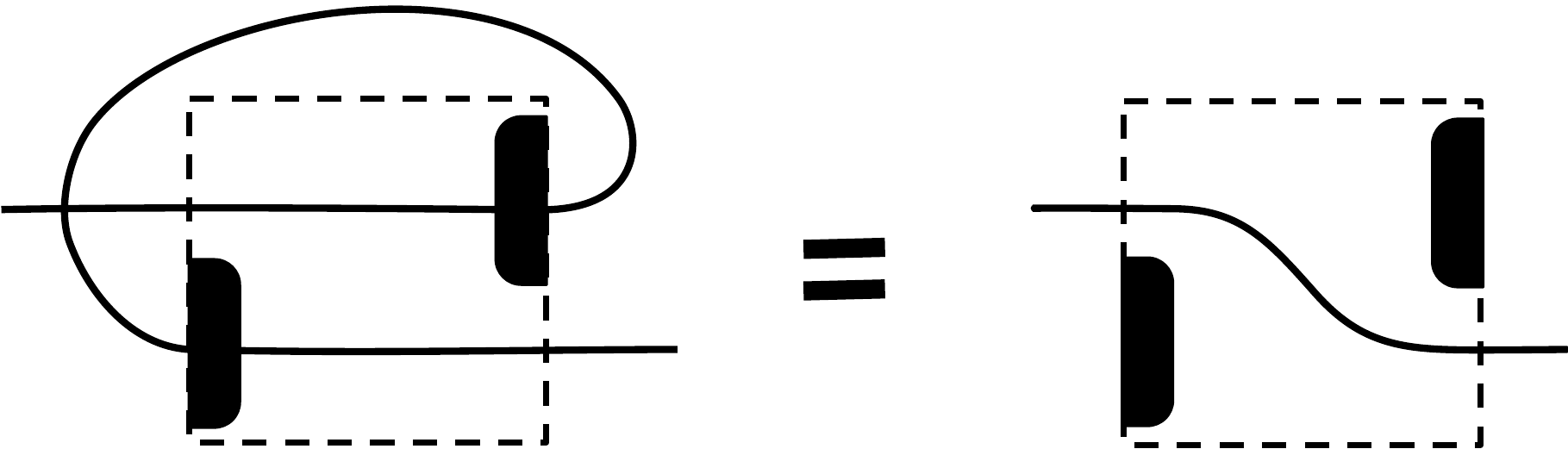}
        \caption{Yanking}
    \end{subfigure}%
\\[1em]
\caption{Axioms of guarded traced categories}
\label{fig:gtrace}
\end{center}
\vspace{-5ex}
\end{figure}%
\begin{defn}[Guarded traced category]\label{def:guarded-trace}
We call a guarded category $(\BC,\tensor,I)$ \emph{guarded traced} if it is 
equipped with a \emph{guarded trace operator}
\begin{displaymath}
\tr_{A,B,C,D}^U:\GHom((A\tensor U)\tensor B,C\tensor (D\tensor U)) \to \GHom(A\tensor B,C\tensor D),
\end{displaymath}
visually corresponding to the diagram formation rule in Figure~\ref{fig:guarded-trace},
so that the adaptation of the Joyal-Street-Verity axiomatization of
traced symmetric monoidal categories~\cite{JoyalStreetEtAl96} shown in
Figure~\ref{fig:gtrace} is satisfied.
\end{defn}
\begin{rem}
  The versions of the sliding axiom in Figure~\ref{fig:gtrace} differ in
  the way the loop is guarded. They are in line with duality
  (Remark~\ref{rem:dual}): Sliding II arises from Sliding I by
  $180^\degree$ rotation, and Sliding III is symmetric under
  $180^\degree$ rotation.
\end{rem}
We proceed to investigate the geometric properties of guarded traced
categories, partly extending Theorem~\ref{thm:gcompl}. The
syntactic setting extends the one for guarded categories by
additionally closing morphism expressions under the trace operator
(interpreted diagrammatically as in Figure~\ref{fig:guarded-trace}),
obtaining \emph{traced morphism expressions}. Term formation thus
becomes mutually recursive with guardedness typing: if $e$ is a traced
morphism expression such that
$e\in\GHom((A\tensor U)\tensor B,C\tensor (D\tensor U))$ is derivable,
then $\tr_{A,B,C,D}(e)$ is a traced morphism expression, and
$\tr_{A,B,C,D}(e)\in\GHom(A\tensor B,C\tensor D)$ is derivable.
\emph{Traced diagrams} consists of finitely many (decorated) basic
boxes and wires connecting output gates of basic boxes to input gates,
with each gate attached to at most one wire; open gates are regarded
as inputs or outputs, respectively, of the whole diagram. Of course,
acyclicity is not required. %
We first note that the easy direction of
Theorem~\ref{thm:gcompl} adapts straightforwardly to the setting
with traces:
\begin{prop}\label{prop:ggcompl}
  Let $e$ be a traced morphism expression such that
  $e\in\GHom(A\tensor B,C\tensor D)$ is derivable. Then in the diagram
  of $e$,
  all loops and all paths from input gates in $A$ to
  output gates in $D$ are guarded (p.~\pageref{def:guarded-path}).
\end{prop}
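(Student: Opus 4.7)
The plan is to proceed by structural induction on the derivation of $e\in\GHom(A\tensor B,C\tensor D)$ as a traced morphism expression. For the rules inherited from Definition~\ref{def:guard_sm} (uni, vac, cmp, par), the inductive step is literally the same as the easy direction of Theorem~\ref{thm:gcompl}: composition concatenates paths, $\tensor$-pairing disjointly unions diagrams, etc.; neither introduces new loops, so the joint invariant "all loops are guarded and all $A$-to-$D$ paths are guarded" is preserved by a routine diagrammatic bookkeeping that I would only sketch.

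The genuinely new case is the trace rule, where $e=\tr^U_{A,B,C,D}(e')$ for some $e'\in\GHom((A\tensor U)\tensor B,C\tensor(D\tensor U))$. The key observation is that in the guardedness typing of~$e'$, the object $U$ appears simultaneously as an unguarded input (the left summand of $(A\tensor U)\tensor B$) and as a guarded output (the right summand of $C\tensor(D\tensor U)$). Applying the induction hypothesis to $e'$ therefore yields that every loop of the diagram of~$e'$ is guarded, and every path from an input gate in $A\cup U$ to an output gate in $D\cup U$ is guarded. The diagram of~$e$ is obtained from that of~$e'$ solely by drawing feedback wires from the $U$-outputs to the matching $U$-inputs.

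Given a loop $\pi$ of the diagram of $e$, or a path $\pi$ from an $A$-input to a $D$-output, I would cut $\pi$ at each feedback-wire crossing, obtaining a finite sequence $\pi_1,\ldots,\pi_n$ of subpaths of the diagram of~$e'$. By construction each $\pi_i$ is either a loop of~$e'$, or runs from an input gate in $A\cup U$ to an output gate in $D\cup U$; hence each $\pi_i$ is guarded by the induction hypothesis. Since the definition of a guarded path (p.~\pageref{def:guarded-path}) is purely existential, namely that \emph{some} basic box is traversed from an unguarded input to a guarded output, guardedness of any single $\pi_i$ immediately propagates to $\pi$.

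The main care needed is the cutting argument in the case of a \emph{new} loop created entirely by the feedback wires: such a loop has no canonical starting point, but can be broken at an arbitrary feedback crossing, producing a cyclic list of subpaths each going from a $U$-input to a $U$-output of~$e'$, and the above reasoning still applies. Note that none of the trace equations of Figure~\ref{fig:gtrace} (vanishing, sliding, superposing, tightening, yanking) enters the proof: the proposition only asserts derivability of a guardedness judgement, so the expression-building rules together with the induction hypothesis for $e'$ are all that is needed.
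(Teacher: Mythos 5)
Your proof is correct and follows essentially the same route as the paper's: induction on the expression, with the non-trace cases handled as in Theorem~\ref{thm:gcompl} and the trace case handled by observing that $U$ is typed as an unguarded input and a guarded output of $e'$, so that every segment obtained by cutting a loop or an $A$-to-$D$ path at the feedback wires is guarded by the inductive hypothesis, and guardedness of a segment propagates to the whole path. If anything, your explicit cutting argument is slightly more careful than the paper's, which asserts that $A$-to-$D$ paths of $\tr(e')$ are already paths of $e'$ and thereby glosses over paths that cross the feedback wire.
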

\noindent Remarkably, the converse of Proposition~\ref{prop:ggcompl}
in general fails in several ways:
\begin{expl}\label{expl:n-idl-counter}
  The left diagram below
  \begin{equation}\label{eq:ctr-expls}
    \includegraphics[align=c,scale=.25]{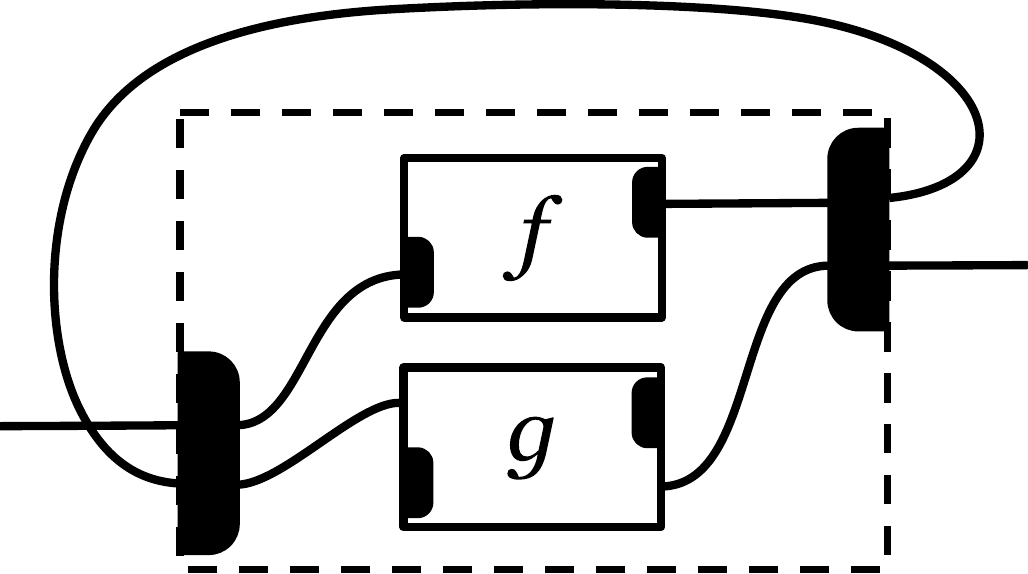}\qquad\qquad
    \includegraphics[align=c,scale=.25]{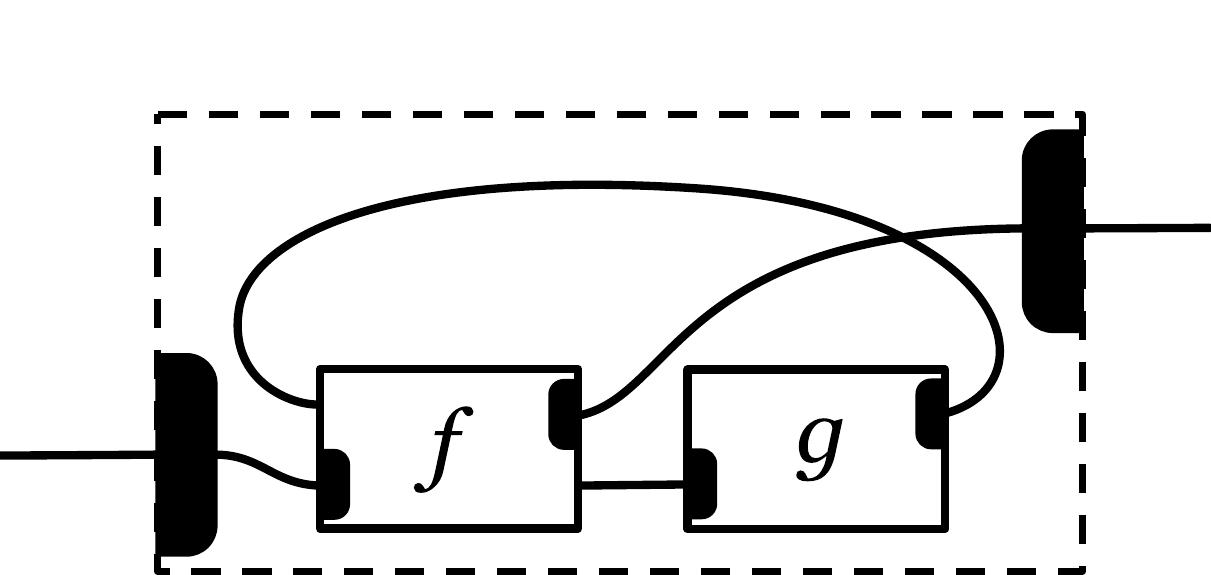}
  \end{equation}
  shows that guardedness typing is not closed under equality of traced
  morphism expressions: Write $e$ for the expression inducing the
  dashed box. By Proposition~\ref{prop:ggcompl},~$e$, and hence
  $\tr(e)$, fail to type as indicated. However, $\tr(e)=gf$, for which
  the overall guardedness typing indicated is easily derivable.

  Moreover, the diagram on the right above satisfies the necessary
  condition from Proposition~\ref{prop:ggcompl} but is not induced by
  an expression for which the indicated guardedness typing is
  derivable, essentially because both ways of cutting the loop violate
  the necessary condition from Proposition~\ref{prop:ggcompl}.
\end{expl}
\noindent However, if $\BC$ is ideally guarded over a guarded
ideal $G$, we do have a converse to Proposition~\ref{prop:ggcompl}: By
Theorem~\ref{thm:t_to_p}, we can then restrict basic boxes in diagrams
to be either \emph{guarded}, i.e.\ have only black gates, or
\emph{unguarded}, i.e.\ have only white gates. %
We call
the correspondingly restricted diagrams \emph{ideally guarded}. (We
emphasize that the guardedness typing of \emph{composite} ideally
guarded diagrams still needs to mix guarded and unguarded inputs and
outputs.) A path in an ideally guarded diagram is guarded iff it passes
through a guarded basic box.

The left-hand diagram in~\eqref{eq:ctr-expls} is in fact ideally
guarded, so guardedness typing fails to be closed under equality also
in the ideally guarded case. However, for ideally guarded diagrams we
have the following converse of Proposition~\ref{prop:ggcompl}.
\begin{thm}\label{thm:ggcompl-conv}
  Let $\Delta$ be an ideally guarded diagram, with sets of input and
  output gates disjointly decomposed as $A\cupdot B$ and $C\cupdot D$,
  respectively. If every loop in $\Delta$ and every path from a gate
  in $A$ to a gate in $D$ is guarded, then $\Delta$ is induced by a
  traced morphism expression $e$ such that
  $e\in\GHom(A\tensor B,C\tensor D)$ is derivable.
\end{thm}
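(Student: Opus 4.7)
My approach is to reduce Theorem~\ref{thm:ggcompl-conv} to Theorem~\ref{thm:gcompl} by peeling off cycles one at a time using the guarded trace. I would induct on the number of cycles in $\Delta$; in the base case $\Delta$ is acyclic and Theorem~\ref{thm:gcompl} directly produces a non-traced expression $e$ of the required type. In the inductive step I would choose a single wire $w$ on some cycle such that cutting $w$ --- exposing a new external unguarded input gate $u^-_w$ at its former destination and a new external guarded output gate $u^+_w$ at its former source, and introducing a fresh object $U_w$ --- yields a reduced diagram $\Delta'$ whose inputs split as $(A\cup\{u^-_w\})\cupdot B$ and whose outputs split as $C\cupdot(D\cup\{u^+_w\})$, and in which the hypotheses of the theorem remain satisfied. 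By induction $\Delta'$ is induced by a traced expression $e'$ with $e'\in\GHom((A\tensor U_w)\tensor B,C\tensor(D\tensor U_w))$ derivable, and $e=\tr^{U_w}(e')$ then has type $\GHom(A\tensor B,C\tensor D)$ and induces $\Delta$.

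The structural observation that makes a suitable cut available is that in an ideally guarded diagram each basic box is either fully guarded (only unguarded inputs, only guarded outputs) or fully unguarded (only guarded inputs, only unguarded outputs), so a passage through a box is guarded iff the box is fully guarded. Combined with the hypothesis that every loop in $\Delta$ is guarded, this forces every cycle of $\Delta$ to pass through at least one fully guarded basic box $g$, which will serve as a pivot for the cut.

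The main obstacle is to verify that the wire $w$ can always be chosen so that the hypotheses survive in $\Delta'$: loop-guardedness is automatic, since remaining loops in $\Delta'$ are loops in $\Delta$, but the enriched boundary-to-boundary condition demands guardedness for $A$-to-$u^+_w$, $u^-_w$-to-$D$, and $u^-_w$-to-$u^+_w$ paths in addition to the $A$-to-$D$ paths already given. Cutting $w$ immediately after a pivot $g$ makes the $A$-to-$u^+_w$ case trivial --- the path's last passage is through $g$ --- while cutting immediately before $g$ trivializes the $u^-_w$-sourced cases; neither choice covers both sides, so the remaining cases require routing through another guarded pivot that need not be available locally on the chosen cycle. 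In light of the pathologies illustrated by Example~\ref{expl:n-idl-counter}, the choice cannot be made purely locally, and I expect the crux to be a global argument --- plausibly a further induction on strongly connected components, using the $A$-to-$D$ hypothesis to control propagation of guardedness across the component --- showing that a pivot and a side can always be selected so that the boundary-to-boundary guardedness condition required for the next inductive step is preserved.
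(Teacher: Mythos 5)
Your overall strategy is the paper's: induct on the number of loops, with Theorem~\ref{thm:gcompl} as the base case, cut one wire on a cycle, apply the inductive hypothesis to the cut diagram, and recover $\Delta$ by one application of $\tr$. You also correctly isolate where the work lies. But the proposal stops exactly there: you never produce a wire $w$ whose cutting preserves the hypotheses, and you explicitly defer this to an unspecified ``global argument, plausibly a further induction on strongly connected components,'' without saying how the $A$-to-$D$ hypothesis would actually be used to locate the cut. Since this wire-selection step is the entire content of the inductive step beyond Theorem~\ref{thm:gcompl}, this is a genuine gap rather than a complete proof. (One smaller bookkeeping remark: of the three new boundary conditions you list, the $u^-_w$-to-$u^+_w$ one is automatic for any cut on a cycle, because such a path together with $w$ closes to a loop of $\Delta$, which is guarded by hypothesis and whose box passages all lie on the path; what genuinely needs the right choice of $w$ are the two mixed cases $A$-to-$u^+_w$ and $u^-_w$-to-$D$.)

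The paper closes the gap with a short global argument, not via strongly connected components. Let $U$ be the set of boxes from whose inputs there is an unguarded path to some gate in $D$, and $V$ the set of boxes to whose outputs there is an unguarded path from some gate in $A$; in an ideally guarded diagram every box in $U\cup V$ is white, and the hypothesis on $A$-to-$D$ paths forces $U\cap V=\emptyset$. Given a loop $\pi$, if every wire of $\pi$ were attached either to an output of a box in $V$ or to an input of a box in $U$, then starting from a wire attached to an input of a box in $U$ and using disjointness one walks around $\pi$ and finds that all its boxes lie in $U$, hence are white, contradicting guardedness of $\pi$. So some wire $w$ on $\pi$ runs from an output of a box not in $V$ to an input of a box not in $U$; cutting this $w$ makes the $A$-to-$u^+_w$ paths guarded (by the definition of $V$) and the $u^-_w$-to-$D$ paths guarded (by the definition of $U$), the $A$-to-$D$ paths and all remaining loops are inherited from $\Delta$, and the induction goes through, yielding $e'\in\GHom((A\tensor U_w)\tensor B, C\tensor(D\tensor U_w))$ and $e=\tr^{U_w}(e')$. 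Your local alternatives (cut just before or just after a guarded pivot) indeed each miss one of the two mixed cases, as you observed; the $U$/$V$ choice is precisely what resolves both simultaneously, and that idea is what your proposal is missing.
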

We next take a look at the Cartesian and co-Cartesian cases.  Recall
that by Proposition~\ref{prop:guard_equiv}, the definition of guarded
category can be simplified if $\tensor=+$ (and dually if
$\tensor=\times$). This simplification extends to guarded traced
categories by generalizing Hyland-Hasegawa's equivalence between
Cartesian trace operators and Conway fixpoint
operators~\cite{Hasegawa97,Hasegawa99}.
\begin{defn}[Guarded Conway operators]
Let $\BC$ be a guarded co-Cartesian category. We call an operator $(\argument)^\istar$ 
of profile
\begin{align}\label{eq:conway}
f\in\Hom_{\sigma+\id}(X,Y+X)\mapsto f^\dagger\in \Hom_{\sigma}(X,Y)
\end{align} 
a \emph{guarded iteration operator} if it satisfies
\begin{itemize}
  \item\emph{fixpoint:} $f^{\istar} = [\id,f^\istar]\comp f$ for $f:X\to_{\inr} Y+X$;
\end{itemize}
and a \emph{Conway iteration operator} if it additionally satisfies 
\begin{itemize}
  \item\emph{naturality:} $g\comp f^{\istar} = ((g+\id) \comp f)^{\istar}$ for $f:X\to_{\inr} Y+X$, $g : Y \to Z$;
  \item\emph{dinaturality:} $([\inl, h]\comp g)^{\istar} = [\id, ([\inl, g] \comp h)^{\istar}] \comp g$ for  $g : X \to_{\inr} Y + Z$ and $h:Z\to Y+X$ or $g : X \to Y + Z$ and $h:Z\to_{\inr} Y+X$;
  \item\emph{(co)diagonal:} $([\id,\inr] \comp f)^{\istar} = f^{\istar\istar}$ for  $f : X \to_{\inr+\id} (Y + X) + X$.
\end{itemize}
Furthermore, we distinguish the following principles:
\begin{itemize}
  \item\emph{squaring~\cite{Esik99}:} $f^{\istar} = ([\inl,f]\comp f)^\istar$ for $f:X\to_{\inr} Y+X$;
  \item\emph{uniformity w.r.t.\ a subcategory $\BS$ of $\BC$}:  
$(\id+ h)\comp f = g\comp h$ implies $f^{\istar} = g^{\istar}\comp h$ 
for all $f:X\to_{\inr} Z+X$, $g:Y\to_{\inr} Z+Y$ and $h: Y\to X$ from $\BS$;
\end{itemize}
and call $(\argument)^\istar$ \emph{squarable} or \emph{uniform}
if it satisfies squaring or uniformity, respectively.
\end{defn}
\emph{Guarded (Conway) recursion operators} $(\argument)_{\istar}$ on
guarded Cartesian categories are defined dually in a straightforward
manner. %
We collect the following facts about guarded iteration operators for further 
reference.
\begin{lem}\label{lem:iter}
Let $(\argument)^\istar$ be a guarded iteration operator on $(\BC,+,\iobj)$. 
\begin{enumerate}
\item If $(\argument)^\istar$ is uniform w.r.t.\ some co-Cartesian
  subcategory of $\BC$ and satisfies the codiagonal identity then
  it is squarable.
\item If $(\argument)^\istar$ is squarable and uniform w.r.t.\
  coproduct injections then it  is dinatural.
 \item If $(\argument)^\istar$ is Conway then 
it is uniform w.r.t.\  coproduct injections.
\end{enumerate}
\end{lem}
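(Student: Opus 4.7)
My plan is to prove the three items in the order (3), (2), (1), the last being the main obstacle.

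\textbf{For (3),} given a Conway operator and data $f:X\to_\inr Z+X$, $g:Y\to_\inr Z+Y$ and a coproduct injection $h:Y\to X$ with $(\id+h)\comp g=f\comp h$, I plan a direct calculation using the Conway laws. Writing $X\cong Y+W$ with $h=\inl$ so that $f\comp\inl=(\id+\inl)\comp g$, unfolding the fixpoint law yields $f^\istar\comp h=[\id,f^\istar\comp h]\comp g$, so $f^\istar\comp h$ is a fixpoint of $g$. To identify this fixpoint with $g^\istar$ I apply the codiagonal identity to the simple lifts $(\inl+\id)\comp g$ (on the $Y$-side) and $(\inl+\id)\comp f$ (on the $X$-side), linked along $h$ by the easily verified identity $((\id+h)+h)\comp((\inl+\id)\comp g)=((\inl+\id)\comp f)\comp h$, and then use naturality and dinaturality to absorb the $W$-component of $f$ and transport the equality from the doubly-iterated level back to the single-iterated level.

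\textbf{For (2),} I introduce the alternation map $\Phi:X+Z\to_\inr Y+(X+Z)$ with $\Phi\comp\inl=(\id+\inr)\comp g$ and $\Phi\comp\inr=(\id+\inl)\comp h$. From the fixpoint identity one reads off $\Phi^\istar\comp\inl=[\id,\Phi^\istar\comp\inr]\comp g$, which matches the right-hand side of dinaturality. It remains to identify $\Phi^\istar\comp\inl$ with $([\inl,h]\comp g)^\istar$ (and symmetrically $\Phi^\istar\comp\inr$ with $([\inl,g]\comp h)^\istar$). By squaring, $\Phi^\istar=([\inl,\Phi]\comp\Phi)^\istar$, and a short calculation yields
\begin{equation*}
  [\inl,\Phi]\comp\Phi\comp\inl=(\id+\inl)\comp([\inl,h]\comp g),
\end{equation*}
which is precisely the uniformity equation for the coproduct injection $\inl:X\to X+Z$. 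Uniformity w.r.t.\ coproduct injections then gives $([\inl,h]\comp g)^\istar=([\inl,\Phi]\comp\Phi)^\istar\comp\inl=\Phi^\istar\comp\inl$, and dinaturality follows. The mixed guardedness case (with $h$ rather than $g$ guarded) is handled symmetrically.

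\textbf{For (1) -- the main obstacle --} I plan a doubling construction. Given $f:X\to_\inr Y+X$, define two lifts into $(Y+X)+X$: the \emph{simple lift} $F_0=(\inl+\id)\comp f$, with $[\id,\inr]\comp F_0=f$, and the \emph{doubling lift} $F=[\inl\comp\inl,(\inl+\id)\comp f]\comp f$, with $[\id,\inr]\comp F=[\inl,f]\comp f$. Both admit $(\inr+\id)$-guardedness typings via the closure rules of Definition~\ref{def:guard_sm}, since the outer $X$-summand is either vacuously avoided or fed through the guarded $X$-output of~$f$. Codiagonal then gives $f^\istar=F_0^{\istar\istar}$ and $([\inl,f]\comp f)^\istar=F^{\istar\istar}$, so squaring reduces to the equality $F_0^{\istar\istar}=F^{\istar\istar}$. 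To establish this, I introduce the parity-tracking map $H:X+X\to_\inr Y+(X+X)$ with $H\comp\inl=(\id+\inr)\comp f$ and $H\comp\inr=(\id+\inl)\comp f$, which satisfies $(\id+[\id,\id])\comp H=f\comp[\id,\id]$ with $[\id,\id]:X+X\to X$ the fold. Since $[\id,\id]$ lies in every co-Cartesian subcategory of~$\BC$, uniformity yields $H^\istar=f^\istar\comp[\id,\id]$. I then plan a parallel doubly-iterable lift $\tilde H:X+X\to_{\inr+\id}(Y+(X+X))+(X+X)$ analogous to~$F$, so that codiagonal combined with uniformity w.r.t.\ $[\id,\id]$ matches $\tilde H^{\istar\istar}$ with $F_0^{\istar\istar}\comp[\id,\id]$ from one direction and with $F^{\istar\istar}\comp[\id,\id]$ from the other; post-composition with $\inl:X\to X+X$ then gives $F_0^{\istar\istar}=F^{\istar\istar}=f^\istar$. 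The delicate step is the construction of~$\tilde H$ together with its guardedness typing and uniformity equations; this is where the strength of uniformity w.r.t.\ a full co-Cartesian subcategory (rather than merely coproduct injections) is essential.
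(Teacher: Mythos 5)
Your item (2) is correct and is essentially the paper's own argument: your alternation map $\Phi$ is exactly the auxiliary morphism used there, and the sequence squaring $\to$ uniformity w.r.t.\ the injections $\to$ fixpoint unfolding is the same.

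Items (1) and (3), however, are plans rather than proofs, and in both cases the step you defer is the one carrying all the mathematical content. In (1), your parity map $H=[(\id+\inr)\comp f,(\id+\inl)\comp f]$ and the uniformity square $(\id+\nabla)\comp H=f\comp\nabla$ are exactly right (and correctly locate where the co-Cartesian-subcategory hypothesis is needed, namely for the fold $\nabla$). But you never produce the lift $\tilde H$. A morphism that works is $w=[\inl\comp(\id+\inr)\comp f,\,(\inl+\inl)\comp f]$, for which $[\id,\inr]\comp w=H$, so that codiagonal gives $w^{\istar\istar}=H^\istar=f^\istar\comp\nabla$; but the other half of the argument is the identity $w^\istar\comp\inr=(\id+\inr)\comp[\inl,f]\comp f$, which requires a multi-step computation by repeated fixpoint unfolding followed by an application of uniformity w.r.t.\ the coproduct injection $\inr$ (not the fold), yielding $([\inl,f]\comp f)^\istar=w^{\istar\istar}\comp\inr=f^\istar\comp\nabla\comp\inr=f^\istar$. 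Your proposed scheme --- matching $\tilde H^{\istar\istar}$ against both $F_0^{\istar\istar}\comp\nabla$ and $F^{\istar\istar}\comp\nabla$ by two $\nabla$-uniformity squares --- is not verified for any concrete $\tilde H$; note that $\tilde H^{\istar\istar}=F_0^{\istar\istar}\comp\nabla=f^\istar\comp\nabla$ is already forced by codiagonal once $[\id,\inr]\comp\tilde H=H$, so the second matching \emph{is} the goal and cannot be obtained ``from the other direction'' without an independent computation of the above kind.

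In (3) the gap is more serious. Observing that $f^\istar\comp h$ satisfies the fixpoint equation of $g$ gives nothing, since fixpoints are not assumed unique, and your proposed repair does not close: the linking identity $((\id+h)+h)\comp(\inl+\id)\comp g=((\inl+\id)\comp f)\comp h$ between the simple lifts is itself a uniformity square, now at the doubly-iterated level, so exploiting it presupposes exactly the uniformity you are trying to prove; ``use naturality and dinaturality to absorb the $W$-component'' is not an argument. What is actually needed is a three-stage proof: first derive uniformity w.r.t.\ isomorphisms from dinaturality; then establish the special case $[(\id+\inl)\comp f,(\id+\inl)\comp g]^\istar\comp\inl=f^\istar$, in which the complementary component also factors through $\id+\inl$, by pulling $\id+\inl$ out with dinaturality; and finally reduce the general case $[(\id+\inl)\comp f,g]^\istar\comp\inl=f^\istar$ to the special one via the auxiliary morphism $[\inl\comp(\id+\inl)\comp f,(\inl+\id)\comp g]$, codiagonal and naturality. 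None of these steps is present in your sketch.
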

\begin{prop}\label{prop:g-co-cart}
  A guarded co-Cartesian category\/ $\BC$ is traced iff it is equipped
  with a guarded Conway iteration operator $(\argument)^\istar$, with
  mutual conversions like in the total case~\cite{Hasegawa97,Hasegawa99}.
\end{prop}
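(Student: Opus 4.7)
The plan is to dualize Hasegawa's and Hyland's correspondence between trace operators and Conway fixpoint operators, adapting it to the guarded setting. By Proposition~\ref{prop:guard_equiv}, guardedness in a co-Cartesian category is equivalently captured by the simpler family $\Hom_\sigma$, and I would work throughout with that description, so that the guarded trace effectively reduces to a parameterized operation on morphisms whose feedback $U$-component lies in $\Hom_{\inr}$.

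For the direction from trace to iteration, given $f : X \to_{\inr} Y + X$, I would set $f^\istar := \tr^X_{X,\iobj,Y,\iobj}([f, f])$, after the obvious rearrangement via associators and unitors making $[f,f] : X + X \to Y + X$ fit the profile $(X + X) + \iobj \to Y + (\iobj + X)$ required by the trace; guardedness of~$f$ on its $X$-output ensures that $[f,f]$ lies in the requisite $\GHom$. The guarded Conway axioms then follow by transcribing the total-case derivations: fixpoint from Yanking and Vanishing~I, naturality from Tightening, the two sub-cases of dinaturality from Sliding~I and Sliding~II respectively, and the codiagonal from Vanishing~II combined with Superposing. At each step the intermediate guardedness typings have to be checked, but they all follow from the side conditions attached to the trace axioms in Figure~\ref{fig:gtrace}.

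For the converse direction, given $(\argument)^\istar$ and $g \in \GHom((A + U) + B, C + (D + U))$, I would first reshuffle~$g$ via associators and symmetries to $g' : (A + B) + U \to (C + D) + U$; by Proposition~\ref{prop:guard_equiv} applied to the typing of~$g$, the component $g' \comp \inr : U \to (C+D) + U$ lies in $\Hom_{\inr}$, so $k := (g' \comp \inr)^\istar : U \to C + D$ is well-defined, and I would set $\tr(g) := [\id_{C+D}, k] \comp g' \comp \inl$ (modulo the same reshuffling). The trace axioms of Figure~\ref{fig:gtrace} are then verified by standard Hasegawa-style manipulations: Vanishing~I is immediate, Vanishing~II matches the codiagonal identity, Sliding~I and Sliding~II correspond to the two sub-cases of dinaturality while Sliding~III follows by combining them, Superposing follows from naturality together with the co-Cartesian structure, Tightening combines fixpoint with naturality, and Yanking is a direct instance of fixpoint applied to $\inr$.

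The main obstacle is the careful bookkeeping of guardedness conditions in the Sliding/dinaturality correspondence. In Hasegawa's total setting a single sliding axiom suffices; here the asymmetric placement of guardedness forces the three sliding variants to play distinct roles, matched to the two sub-cases of dinaturality. In each derivation one has to verify that all intermediate trace expressions are guardedly typable, which can be done either by direct propagation through the rules of Definition~\ref{def:guard_sm} or, for ideally guarded expressions, by the geometric criterion of Theorem~\ref{thm:gcompl}. Once this correspondence is pinned down, showing that the two constructions are mutually inverse is a brief calculation using the fixpoint identity together with Vanishing~I.
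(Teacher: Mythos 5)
Your route is exactly the one the paper intends: it proves nothing beyond pointing to the total-case conversions of Hasegawa and Hyland, and your dualized formulas $f^\istar=\tr([f,f])$ and $\tr(g)=[\id,(g'\inr)^\istar]\comp g'\comp\inl$, with guardedness handled through the $\Hom_\sigma$-description of Proposition~\ref{prop:guard_equiv}, fill in precisely that outline. Two points, however, need repair. First, a guarded iteration operator must have the full profile~\eqref{eq:conway}, sending $f\in\Hom_{\sigma+\id}(X,Y+X)$ to $f^\istar\in\Hom_{\sigma}(X,Y)$ for an arbitrary injection $\sigma:Y_2\to Y_1+Y_2\cong Y$, not only for trivial $\sigma$. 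Your instantiation $\tr^X_{X,\iobj,Y,\iobj}$ yields $f^\istar$ only with typing $\GHom(X+\iobj,Y+\iobj)$, which carries no output guardedness; with that you can neither handle the codiagonal axiom (whose inner $f^\istar$ must lie in $\Hom_{\inr}(X,Y+X)$ so that it can be iterated again) nor, in your converse direction, derive the required typing $\tr(g)\in\GHom(A+B,C+D)$, for which $k=(g'\inr)^\istar$ must be guarded in $D$ --- note that there you silently rely on the refined profile, since $g'\inr$ lies in $\Hom_{\inr+\id}(U,(C+D)+U)$ and not merely in $\Hom_{\inr}$. The fix is immediate: for $\sigma=\inr:Y_2\to Y_1+Y_2$ define $f^\istar$ as $\tr^X_{X,\iobj,Y_1,Y_2}$ applied to the rearranged $[f,f]\in\GHom((X+X)+\iobj,\,Y_1+(Y_2+X))$, whose guardedness follows from \textbf{(par${}_{\mplus}$)}; then indeed $f^\istar\in\Hom_{\inr}(X,Y_1+Y_2)$, and the $D$-guardedness of $\tr(g)$ in the other direction is obtained from \textbf{(vac${}_{\mplus}$)} and \textbf{(cmp${}_{\mplus}$)} applied to the decomposition of the middle object into $C$ and $D+U$.

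Second, the claim that mutual inverseness is ``a brief calculation using the fixpoint identity together with Vanishing~I'' only covers one direction: starting from $(\argument)^\istar$, the induced trace gives back $[\id,f^\istar]\comp f=f^\istar$ by fixpoint, which is indeed one line. The other round trip --- showing that the trace reconstructed from $f^\istar=\tr([f,f])$ coincides with the original trace operator --- is the substantial part of Hasegawa's argument and uses tightening/naturality, sliding, yanking and vanishing, here additionally with the guardedness typings of all intermediate expressions tracked through Figure~\ref{fig:gtrace}; it is standard but should not be presented as a consequence of the fixpoint identity alone.
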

\begin{expl}[Guarded Conway operators] 
  We list some examples of guarded Conway iteration/recursion
  operators. In all cases except~\ref{item:conway}, Conwayness follows
  from uniqueness of fixpoints~\cite[Theorem
  17]{GoncharovSchroderEtAl17}.
\begin{cenumerate}
\item In a vacuously guarded co-Cartesian category
  (Remark~\ref{rem:triv-cocartesian}), $f:X\to_{\inr} Y+Z$ iff
  $f=\inl g$ for some $g:X\to Y$.  If coproduct injections are monic,
  then $g$ is uniquely determined, and $f^\istar = g$ defines a
  guarded Conway operator.
\item\label{item:conway} Every Cartesian category $\BC$ is guarded
  under $\Hom^{\pi}(X,Y) = \Hom(X,Y)$ (making every morphism
  guarded). Then $\BC$ has a guarded Conway recursion operator iff
  $\BC$ is a \emph{Conway category}~\cite{Esik15}, i.e.\ models
  standard total recursion.
\item The guarded Cartesian category of complete metric spaces as in
  Example~\ref{expl:m-space} is traced: For
  $f:X\times Y\to^{\pr_2} Y$, define $f^\dagger(x)$ as the unique
  fixpoint of $\lambda y.\,f(x,y)$ according to Banach's fixpoint
  theorem.
\item Similarly, the topos of trees, ideally guarded as in
  Example~\ref{expl:tt}, has a guarded Conway recursion operator
  obtained by taking unique
  fixpoints~\cite[Theorem~2.4]{BirkedalMgelbergEtAl12}.
\item The guarded co-Cartesian category $\BC_{\BBT_{\Sigma}}$ of
  side-effecting processes (Example~\ref{expl:pa}) has a guarded
  Conway iteration operator obtained by taking unique fixpoints,
  thanks to the universal property of the final coalgebra
  $T_{\Sigma}X$~\cite{PirogGibbons14}.
\end{cenumerate}
\end{expl}
\noindent\textbf{Guarded vs. unguarded recursion}
We proceed to present a class of examples relating guarded and unguarded
recursion. For motivation, consider the category $(\Cpo,\times,1)$ of
complete partial orders (cpos) and continuous maps.  This category nearly
supports recursion via least fixpoints, except that, e.g.,
$\id:X\to X$ only has a least fixpoint if $X$ has a bottom. The following equivalent approaches involve the \emph{lifting
  monad} $(\argument)_{\bot}$, which adjoins a fresh bottom 
$\bot$ to a given \mbox{$X\in |\Cpo|$}.
\begin{citemize}
\item[]\hspace{-1ex}\emph{Classical
    approach}~\cite{Winskel93,SimpsonPlotkin00}: Define a total
  recursion operator $(-)_\iistar$ on the category $\Cpo_{\bot}$ of
  \emph{pointed cpos} and continuous maps, using least fixpoints.
\item[]\hspace{-1ex}\emph{Guarded approach} (cf.\,\cite{MiliusLitak17}):
  Extend $\Cpo$ to a guarded category: $f:X\times Y\to^{\snd} Z$ iff
  $f\in \{g\comp (\id\times\eta)\mid g:X\times Y_{\bot}\to Z\}$ (see
  Proposition~\ref{prop:mult-guard}), and define a guarded recursion
  operator sending $f=g\comp (\id\times\eta):Y\times X\to^{\snd} X$ to
  $f_{\istar} = g\comp\brks{\id,\hat f} :Y\to X$ with
  $\hat f(y)\in X_{\bot}$ calculated as the least fixpoint of
  $\lambda z.\, \eta g(y, z)$.
\end{citemize}
Pointed cpos happen to be always of the form $X_{\bot}$ with
$X\in |\Cpo|$, which indicates that $(\argument)_{\iistar}$ is a
special case of $(\argument)_{\istar}$. This is no longer true in more
general cases when the connection between $(\argument)_{\iistar}$ and
$(\argument)_{\istar}$ is more intricate. We show that
$(\argument)_{\iistar}$ and $(\argument)_{\istar}$ are nevertheless
equivalent under reasonable assumptions.
\begin{defn}[\cite{CrolePitts90}] \label{def:let-ccc}
A \emph{let-ccc with a fixpoint object} is a tuple $(\BC, \BBT, \Omega, \omega)$,
consisting of a Cartesian closed category $\BC$, a strong monad $\BBT$ on it, 
an initial $T$-algebra  $(\Omega,\ini)$ and an equalizer $\omega:1\to\Omega$ 
of $\ini\eta:\Omega\to\Omega$ and $\id:\Omega\to\Omega$.
\end{defn}
The key requirement is the last one, satisfied, e.g., for $\Cpo$ and
the lifting monad.
Given a monad $\BBT$ on $\BC$, $\BC^{\BBT}_{\star}$ denotes the
category of $\BBT$-algebras and $\BC$-morphisms (instead of
$\BBT$-algebra homomorphisms).
\begin{prop}[{\cite[Theorem~4.6]{Simpson92}}]\label{prop:simpson}
Let $(\BC, \BBT, \Omega, \omega)$ be a let-ccc with a fixpoint object. 
Then $\BC^{\BBT}_{\star}$ has a unique $\BC^{\BBT}$-uniform recursion operator $(\argument)_{\iistar}$. 
\end{prop}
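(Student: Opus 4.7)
The plan is to follow Simpson's construction from~\cite{Simpson92}. First, the parametrized case reduces to the parameter-free one by Cartesian closedness: given $f:X\times A\to A$ in $\BC^\BBT_\star$ with $(A,\alpha)$ a $\BBT$-algebra, the exponential $A^X$ inherits a canonical $\BBT$-algebra structure from $A$ via the strength of $\BBT$, and an appropriate currying of $f$ yields an endomorphism $\hat f:A^X\to A^X$ in $\BC^\BBT_\star$; transposing $\hat f_\iistar:1\to A^X$ then gives $f_\iistar:X\to A$. So it suffices to define $(\argument)_\iistar$ on endomorphisms.

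For the parameter-free case, the goal is to construct, for each $g:A\to A$ with $(A,\alpha)$ a $\BBT$-algebra, a morphism $\bar g:\Omega\to A$ satisfying the iteration equation $\bar g\comp\ini = g\comp\alpha\comp T\bar g$, and to set $g_\iistar \mathrel{:=} \bar g\comp\omega$. The fixpoint property $g\comp g_\iistar = g_\iistar$ then follows mechanically: precomposing the defining equation with $\eta_\Omega$ yields $\bar g\comp\ini\comp\eta_\Omega = g\comp\alpha\comp\eta_A\comp\bar g = g\comp\bar g$ (using naturality of $\eta$ and the unit law of $\alpha$), and hence $\bar g\comp\omega = \bar g\comp\ini\comp\eta_\Omega\comp\omega = g\comp\bar g\comp\omega$ by the equalizer property of $\omega$. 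For $\BC^\BBT$-uniformity, given an algebra morphism $h:(A,\alpha)\to(B,\beta)$ with $h\comp g = g'\comp h$, both $h\comp\bar g$ and $\bar{g'}$ satisfy the same iteration equation over $B$, so they coincide by uniqueness of $\bar{g'}$; composing with $\omega$ yields $h\comp g_\iistar = g'_\iistar$. For uniqueness of the operator itself, any other $\BC^\BBT$-uniform recursion operator $(\argument)^\heartsuit$ must agree with $(\argument)_\iistar$ on the canonical endomorphism $\ini\comp\eta_\Omega:\Omega\to\Omega$ (forced to be $\omega$ by the equalizer property combined with uniformity against the identity), and uniformity applied along $\bar g$ as an algebra morphism transports this identification to arbitrary $g$.

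The main obstacle is the existence and uniqueness of $\bar g$: one cannot appeal directly to initiality of $(\Omega,\ini)$ in $\BC^\BBT$, because $g\comp\alpha$ need not be a $\BBT$-algebra structure on $A$ (indeed, it is one precisely when $g$ is itself an algebra morphism). Simpson's device is to iterate $g$ stagewise along a presentation of $\Omega$ as a suitable colimit of stages, or equivalently to lift to an auxiliary algebra that absorbs $g$ into its structure and extract $\bar g$ from the universal property there. This step is the crux of the argument; once it is established, verifying the recursion axioms and the uniqueness clause amounts to a routine diagram chase against the fixpoint object axioms.
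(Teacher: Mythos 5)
The paper does not actually prove this proposition---it is imported verbatim from \cite{Simpson92}---so your sketch can only be measured against the standard construction, which it does follow in outline: define $\bar g:\Omega\to A$ by the iteration equation $\bar g\comp\ini=g\comp\alpha\comp T\bar g$, set $g_{\iistar}=\bar g\comp\omega$, derive the fixpoint law from the equalizer property of $\omega$, and derive uniformity from uniqueness of $\bar g$. Those two computations are correct as you state them. However, the step you single out as ``the crux'' rests on a misreading of Definition~\ref{def:let-ccc}: there, $(\Omega,\ini)$ is an initial algebra for the \emph{endofunctor} $T$ (an ``$F$-algebra'' in the paper's terminology, $\ini:T\Omega\to\Omega$), not an initial Eilenberg--Moore algebra. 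Consequently $g\comp\alpha:TA\to A$ is a perfectly good $T$-algebra structure---no Eilenberg--Moore laws are required of it---and $\bar g$ is simply the unique mediating morphism $(\Omega,\ini)\to(A,g\comp\alpha)$ supplied by initiality. No ``colimit of stages'' or ``auxiliary algebra absorbing $g$'' is needed; the obstacle you defer is a phantom, but as written your argument leaves the central construction unestablished and points at the wrong repair.

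The uniqueness clause contains a genuine gap of a different kind. You propose to evaluate the competing operator $(\argument)^\heartsuit$ on $s=\ini\comp\eta_\Omega:\Omega\to\Omega$ and transport the result along $\bar g$ by uniformity. But the operator is only defined on objects of $\BC^{\BBT}_{\star}$, i.e.\ Eilenberg--Moore algebras, and Definition~\ref{def:let-ccc} equips $\Omega$ with no such structure ($\ini$ itself is not one: $\ini\comp\eta=\id$ would make $\omega$ an isomorphism and force $\Omega\cong 1$). Likewise, $\BC^{\BBT}$-uniformity may only be invoked along Eilenberg--Moore \emph{homomorphisms}, and $\bar g$ is a priori only a morphism of $T$-functor-algebras. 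So $s^\heartsuit$ is not even well formed, and the transport step is unjustified as stated; the actual argument has to detour through free algebras (e.g.\ using uniformity along $\alpha:(TA,\mu)\to(A,\alpha)$ to reduce to endomorphisms of $TA$ of the form $\eta\comp k$) before the equalizer property of $\omega$ can be brought to bear---this is where the substance of Simpson's Theorem~4.6 lies. Finally, your reduction of the parametrized case to endomorphisms via $A^X$ is plausible but incomplete: one must check that the transposed operator satisfies the \emph{parametrized} fixpoint and uniformity laws, not merely their instances at global elements.
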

By~\cite[Theorem 4]{SimpsonPlotkin00}, the operator~$(\argument)_{\iistar}$
in Proposition~\ref{prop:simpson} is Conway, in particular, by Lemma~\ref{lem:iter}, squarable, 
if $\BC$ has a natural numbers object and $\BBT$ is an \emph{equational lifting
monad}~\cite{BucaloFuhrmannEtAl03}, such as $(-)_\bot$. %
There are however further squarable operators obtained via
Proposition~\ref{prop:simpson}, e.g.\ for the partial state monad
$TX = {(X\times S)^S_{\bot}}$~\cite{CrolePitts90}. By
Lemma~\ref{lem:iter}, the following result applies in particular in
the setup of Proposition~\ref{prop:simpson} under the additional
assumption of squarability.
\begin{thm}\label{thm:grec_from_rec}
  Let $\BBT$
  be a strong monad on a Cartesian category~$\BC$.
  The following gives a bijective correspondence between squarable
  dinatural recursive operators $(\argument)_{\iistar}$
  on $\BC^\BBT_\star$
  and squarable dinatural guarded recursive operators
  $(\argument)_{\istar}$
  on $\BC$
  ideally guarded over $\IHom(X,Y) = \{f\comp\eta\mid f:TX\to Y\}$:
\begin{align}
\label{eq:def-iistar}(f:B\times A\to A)_{\iistar} =\;&\algebra\comp(\eta f(\id\times\algebra))_{\istar} && \text{for $(A,\algebra)\in|\BC^\BBT_\star|$}\\
\label{eq:def-istar}(f=g\comp (\id\times\eta):Y\times X\to X)_{\istar} =\;& g\brks{\id,(\eta g)_{\iistar}}
\end{align}
(in~\eqref{eq:def-istar} we call on a slight extension of
$(\argument)_{\iistar}$
(Lemma~\ref{lem:par_ext}); the right hand side
of~\eqref{eq:def-iistar} is defined because $\eta
f(\id\times\algebra)$ factors as $\eta f(\id\times\algebra
(T\algebra)\eta)$).  Moreover,
$(\argument)_\istar$ is Conway iff so is $(\argument)_\iistar$.
\end{thm}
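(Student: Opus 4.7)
The plan is to establish the bijection constructively and then verify that the Conway laws transfer between the two operators.

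First, I would verify well-definedness of the two constructions. For~\eqref{eq:def-iistar}, the parenthetical already exhibits $\eta f(\id\times\algebra)$ as $h\comp(\id\times\eta_{TA})$ for some $h$, using the identity $\algebra(T\algebra)\eta_{TA}=\algebra$ (an immediate consequence of naturality of $\eta$ together with $\algebra\eta=\id$); by the description of $\Hom^{\snd}$ via the ideal $\IHom$, this places $\eta f(\id\times\algebra)$ in the partially guarded hom-set, so that $(\argument)_{\istar}$ applies, and post-composition with $\algebra$ lands back in $A$. Dually, for~\eqref{eq:def-istar} the morphism $\eta g:Y\times TX\to TX$ has codomain the free algebra $(TX,\mu_X)$, so a parametric form of $(\argument)_{\iistar}$ (Lemma~\ref{lem:par_ext}) produces $(\eta g)_{\iistar}:Y\to TX$, and precomposing $g$ with $\brks{\id,(\eta g)_{\iistar}}$ yields a morphism $Y\to X$.

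Second, I would show that the defined operators satisfy the fixpoint identity; in both directions this is a calculation reducing to the fixpoint law for the source operator, combined with $\algebra\eta=\id$ (eliminating one layer of $T$) and the ideal form of the input (rewriting the argument).

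Third, the heart of the proof is verifying that the two conversions are mutually inverse. Starting from $(\argument)_{\iistar}$ on $\BC^{\BBT}_{\star}$ and going forward--backward, one obtains for $f:B\times A\to A$ the expression $\algebra\comp(\eta f(\id\times\algebra))_{\istar}$; after plugging in~\eqref{eq:def-istar} and invoking $\BC^{\BBT}$-uniformity of $(\argument)_{\iistar}$ along $\algebra:(TA,\mu_A)\to(A,\algebra)$ (applicable because $\algebra\comp\eta f(\id\times\algebra)=f\comp(\id\times\algebra)$ by $\algebra\eta=\id$), this collapses to $f_{\iistar}$. Conversely, the opposite roundtrip on a guarded $f=g\comp(\id\times\eta)$ reduces, after unfolding~\eqref{eq:def-iistar} at the free algebra $(TX,\mu_X)$, to $g\brks{\id,(\eta g)_{\iistar}}$ by repeated use of $\mu\eta=\id$ and the monad laws. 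Uniformity along algebra homomorphisms, which for squarable dinatural operators follows from the axioms by (a dual of) Lemma~\ref{lem:iter}, is the key lever.

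Finally, I would lift each Conway axiom across the bijection: fixpoint, naturality, and squaring transfer directly by substitution and naturality of $\eta$. The main technical obstacle, as I expect, is the \emph{codiagonal} and \emph{dinaturality} laws: these translate to identities for nested recursions in which one must juggle two layers of $\eta$ and $\algebra$ and invoke the parametric extension from Lemma~\ref{lem:par_ext} together with the closure of $\IHom$ under Kleisli-style composition; once that bookkeeping is set up correctly, both identities follow by rewriting in the algebra calculus.
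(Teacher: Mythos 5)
There is a genuine gap at the heart of your roundtrip argument. The theorem only assumes that $(\argument)_{\iistar}$ is squarable and dinatural, but you collapse the composite $\algebra\comp(\eta f(\id\times\algebra))_{\istar}$ to $f_{\iistar}$ by invoking $\BC^{\BBT}$-uniformity of $(\argument)_{\iistar}$ along $\algebra:(TA,\mu_A)\to(A,\algebra)$, and you justify that uniformity by appeal to ``(a dual of) Lemma~\ref{lem:iter}''. That lemma gives the implications in the opposite direction: uniformity plus codiagonal implies squarability, and squarability plus uniformity w.r.t.\ injections implies dinaturality; nowhere does it (nor could it) derive uniformity from squarability and dinaturality. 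If you assume uniformity you prove a strictly weaker statement than the claimed bijection between squarable dinatural operators. The paper's proof avoids uniformity altogether: both roundtrip computations are done directly, using dinaturality to shuttle the factors $\eta$, $\mu$, $\algebra$, $T\algebra$ across the fixpoint operator (e.g.\ rewriting $\algebra(T\algebra)\comp(\eta\eta f(\id\times\algebra(T\algebra)))_{\iistar}$ as $\algebra(T\algebra)\eta\eta\comp(f(\id\times\algebra(T\algebra)\eta\eta))_{\iistar}$) and then concluding with the fixpoint law and $\algebra\eta=\id$, $\mu\eta=\id$.

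A second gap: your well-definedness check for~\eqref{eq:def-istar} only verifies that the expression typechecks, but the factorization $f=g\comp(\id\times\eta)$ of a guarded morphism is not unique, so you must show that $g\brks{\id,(\eta g)_{\iistar}}$ does not depend on the choice of $g$. The paper devotes the opening of its proof to exactly this, via a squaring-then-dinaturality manoeuvre that replaces the outer and inner occurrences of $g$ by $g'$ one at a time using $g(\id\times\eta)=g'(\id\times\eta)$. Finally, the transfer of the diagonal law is not mere bookkeeping: the paper first proves it in the unparametrized case ($g:T(X\times X)\to X$) through a long calculation and then lifts it to parameters by passing to the co-Kleisli category of the comonad $(-)\times Y$, where the strength is used to lift $\BBT$; your sketch does not indicate how the parametrized case would be handled.
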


\section{Vacuous Guardedness and Nuclear Ideals}\label{sec:dag}
\noindent We proceed to discuss traces in vacuously guarded categories
(Lemma~\ref{lem:triv}), and show that the partial trace operation in
the category of (possibly infinite-dimensional) Hilbert
spaces~\cite{AbramskyBluteEtAl99} in fact lives over the vacuous
guarded structure. We first note that vacuous guarded structures are
traced as soon as a simple rewiring operation satisfies a suitable
well-definedness condition (similar to one defining traced nuclear
ideals~\cite[Definition~8.14]{AbramskyBluteEtAl99}):

\begin{prop}\label{prop:trivial-trace}
  Let $(\BC,\tensor,I)$ be vacuously guarded. If for
  $f\in\GHom(A\tensor B\comma{C\tensor D})$ with factorization
  $f=(h\tensor\id_{D\tensor U})(\id_{A\tensor U}\tensor g)$ (eliding
  associativity), $g:B\to E\tensor D\tensor U$,
  $h:A \tensor U\tensor E\to C$ as per Lemma~\ref{lem:triv}, the
  composite
  \begin{equation}\label{eq:trivial-trace}
    A\tensor B\xto{\id_A\tensor g} A\tensor E\tensor D\tensor U \cong A\tensor U\tensor E\tensor D\xto{h\tensor\id_D} C\tensor D
  \end{equation}
  depends only on $f$, then $\BC$ is guarded traced, with
  $\tr^U_{A,B,C,D}(f)$ defined as~\eqref{eq:trivial-trace}.
\end{prop}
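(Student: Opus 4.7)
The plan is to verify the eight axioms of Figure~\ref{fig:gtrace} one by one, working directly from the formula~\eqref{eq:trivial-trace}. Well-typedness of the output comes for free: the composite~\eqref{eq:trivial-trace} is already in the form $(h\tensor\id_D)(\id_A\tensor g')$ required by Lemma~\ref{lem:triv} (with $g'$ obtained from $g$ by transposing the $D$ and $U$ tensor factors via $\gamma$), so $\tr^U_{A,B,C,D}(f)\in\GHom(A\tensor B,C\tensor D)$ automatically; only the equational content remains.

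For each axiom the strategy is the same: (i)~pick a canonical factorization of the morphism to be traced, which always exists by the hypothesis that it lies in the vacuous $\GHom$; (ii)~unfold~\eqref{eq:trivial-trace} to a concrete composite in $\BC$; (iii)~reduce the two sides to a common form using only monoidal coherence. Yanking is essentially tautological: the symmetry $\gamma_{U,U}$ admits the factorization with $g=\id_U$ and $h=\id_U$ (modulo unitors), and~\eqref{eq:trivial-trace} then collapses to $\id_U$. Vanishing~I, where $U=I$, trivialises the rewiring since the $U$-wire disappears. Superposing adds an inert tensor factor that can be pushed through any chosen factorization. Tightening expresses that pre- and post-composition with morphisms touching only the external ports can be absorbed into $h$ or $g$; the assumed factorization-independence of~\eqref{eq:trivial-trace} is exactly what justifies this absorption.

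The substantive cases are Vanishing~II and the three Sliding axioms. For Vanishing~II I would compare two nested rewirings (first over $U$, then over $V$) against a single rewiring over $U\tensor V$ obtained from a joint factorization $g:B\to E\tensor D\tensor U\tensor V$, $h:A\tensor U\tensor V\tensor E\to C$; the two sides agree after a shuffle of tensor factors, which is a pure coherence calculation. For Sliding~I and~II, where an extra morphism $k$ appears on the $U$-wire outside the cycle, I would absorb $k$ into $h$ or $g$ respectively and again invoke factorization-independence. The main obstacle I expect is Sliding~III, in which $k$ is itself guarded and appears \emph{inside} the loop: here I would first apply $\mathbf{(cmp_{\tensor})}$ to assemble a factorization of the composed morphism, then carefully match the two resulting rewirings against each other. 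This is essentially a bookkeeping argument, but the need to juggle three separate factorizations (for $f$, for $k$, and for their composite) while tracking the action of the associator and symmetry is where the real work lies; the well-definedness hypothesis is precisely what allows one to reparenthesize these factorizations freely and thereby equate the two sides.
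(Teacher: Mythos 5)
Your strategy is viable and the conclusion is correct, but it takes a genuinely different route from the paper, which avoids the axiom-by-axiom case analysis entirely. The paper's proof is a single uniform diagrammatic argument: observe that in every axiom of Figure~\ref{fig:gtrace} the two sides are \emph{isomorphic diagrams}; substitute for each basic box the vacuous-guardedness pattern $(h\tensor\id)(\id\tensor g)$ of Lemma~\ref{lem:triv}; note that this pattern contains no path from an unguarded input to a guarded output, so the back-edge introduced by trace formation never closes a loop and all diagrams become acyclic; finally, note that the rewiring~\eqref{eq:trivial-trace} transforms a diagram into an isomorphic one, so both sides of every axiom reduce to isomorphic acyclic diagrams and all eight equations follow at once from the coherence theorem for symmetric monoidal categories. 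Your plan instead unfolds~\eqref{eq:trivial-trace} for each axiom separately and matches the two sides by hand, using the well-definedness hypothesis to switch between factorizations. This is more elementary and makes the role of that hypothesis explicit in each case (it is indeed what legitimizes absorbing tightening/sliding morphisms into $g$ or $h$, and comparing the nested against the joint factorization in Vanishing~II), but it front-loads exactly the bookkeeping that the coherence argument eliminates: in particular, the Sliding~III case you flag as the main obstacle presents no special difficulty in the paper's treatment, since the guarded box $k$ is expanded like any other box and the acyclicity observation applies unchanged. One small imprecision in your well-typedness remark: the permutation in~\eqref{eq:trivial-trace} moves $U$ cyclically past $E\tensor D$ rather than merely transposing $D$ and $U$, so the witnessing factorization has $E'=U\tensor E$ (or $E\tensor U$ after a further swap absorbed into $h$); the conclusion that $\tr^U(f)$ lands in the vacuous $\GHom(A\tensor B,C\tensor D)$ is nevertheless correct.
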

Diagrammatically, the trace in a vacuously guarded category is thus given
by
\begin{center}
  \includegraphics[scale=.3]{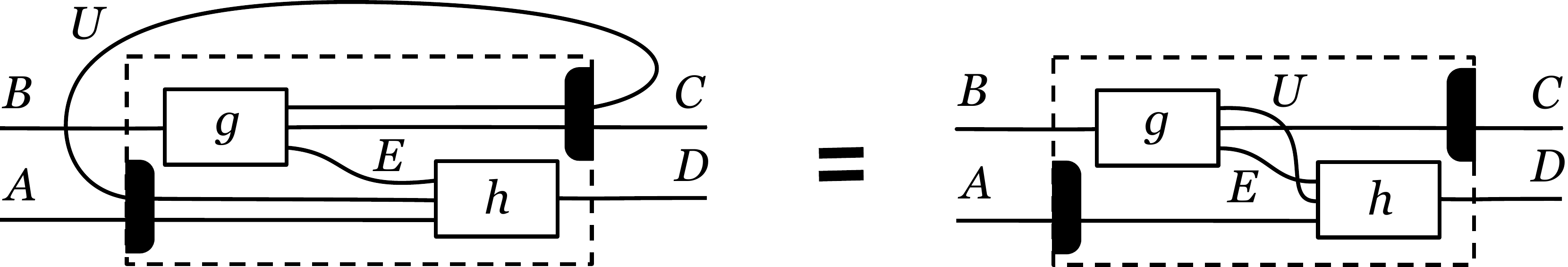}
\end{center}
We proceed to instantiate the above to Hilbert spaces. On a more
abstract level, a \emph{dagger symmetric monoidal
  category}~\cite{Selinger07} (or \emph{tensored
  $*$-category}~\cite{AbramskyBluteEtAl99}) is a symmetric monoidal
category~$(\BC,\tensor,I)$ equipped with an identity-on-objects
strictly involutive functor $(\argument)^\dagger:\BC\to\BC^{op}$
coherently preserving the symmetric monoidal structure. %
The main motivation for
dagger symmetric monoidal categories is to capture categories that are
similar to (dagger) compact closed categories in that they admit a
canonical trace construction for certain morphisms, but fail to be
closed, much less compact closed. The ``compact closed part'' of a
dagger symmetric monoidal category is axiomatized as follows.
\begin{defn}[Nuclear Ideal,~\cite{AbramskyBluteEtAl99}]\label{def:nucl}
  A \emph{nuclear ideal} $\N$ in a dagger symmetric monoidal category
  $(\BC,\tensor, I, (\argument)^\dagger)$ is a family of subsets
  $\N(X,Y)\subseteq\Hom_{\BC}(X,Y)$, $X,Y\in |\BC|$, satisfying the
  following conditions:
\begin{cenumerate}
\item $\N$ is closed under $\tensor$, $(\argument)^\dagger$, and
  composition with arbitrary morphisms on both sides;
\item There is a bijection
  $\theta:\N(X,Y)\to\Hom_{\BC}(I,X^\dagger\tensor Y)$, natural in~$X$
  and~$Y$, coherently preserving the dagger symmetric monoidal
  structure. %
\item (\emph{Compactness}) For $f\in\N(B,A)$ and $g\in\N(B,C)$, the
  following diagram commutes:
\begin{equation*}
\begin{tikzcd}
A
	\ar[r, "\cong"]
	\ar[d, "g\comp f^\dagger"']& 
A\tensor I
	\ar[r, "\id_A\tensor\theta(g)"] &[3em]
A\tensor (B^\dagger\tensor C)
	\ar[d, "\cong"] \\[-.5em]
C&
I\tensor C
	\ar[l, "\cong"']&
( B^\dagger\tensor A)\tensor C
	\ar[l, "(\theta(f))^\dagger\tensor\id_C"]
\end{tikzcd}
\end{equation*}
\end{cenumerate} 
\end{defn}
The above definition is slightly simplified in that we elide a
covariant involutive functor $\overline{(\argument)}:\BC\to\BC$,
capturing, e.g.\ complex conjugation; i.e., we essentially restrict to
spaces over the reals.

We proceed to present a representative example of a nuclear ideal in
the category of Hilbert
spaces. %
Recall that a \emph{Hilbert space}~\cite{KadisonRingrose97} $H$ over
the field $\real$ of reals is a vector space with an \emph{inner
  product} $\brks{\argument,\argument}:H\times H\to \real$ that is
complete as a \emph{normed space} under the induced \emph{norm}
$\norm{x}=\sqrt{\brks{x,x}}$.  %
Let $\Hilb$ be the category of Hilbert spaces and bounded linear
operators. %

Clearly, $\real$ itself is a Hilbert space; linear operators $X\to\real$
are conventionally called \emph{functionals}. More generally, we
consider \emph{(multi-)linear} functionals
$X_1\times \ldots\times X_n\to \real$, i.e.\ maps that are linear in every
argument.  Such a functional is \emph{bounded} if
$|f(x_1,\ldots,x_n)|\leq c\comp \norm{x_1}\cdots \norm{x_n}$ for some
constant $c\in\real$. %
We can move between bounded linear operators and bounded linear
functionals, similarly as we can move between relations and functions
to the Booleans:
\begin{prop}[{\cite[Theorem 2.4.1]{KadisonRingrose97}}]\label{prop:fun_op}
  Given a bounded linear operator $f:X\to Y$,
  $f^\circ(x,y) = \brks{fx, y}$ defines a bounded linear
  functional~$f^\circ$, and every bounded linear functional
  $X\times Y\to\real$ arises in this way.
\end{prop}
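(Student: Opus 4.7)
The plan is to prove the two directions separately, with the forward direction being a direct calculation and the converse relying on the Riesz representation theorem.

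For the forward direction, given a bounded linear operator $f:X\to Y$, I would first verify that $f^\circ(x,y) = \brks{fx,y}$ is bilinear: linearity in $x$ follows from linearity of $f$ combined with linearity of the inner product in its first argument, and linearity in $y$ is immediate from linearity of the inner product in its second argument. Boundedness then follows from two successive applications of elementary estimates: first Cauchy--Schwarz gives $|\brks{fx,y}|\leq \norm{fx}\cdot\norm{y}$, and then boundedness of $f$ gives $\norm{fx}\leq \norm{f}\cdot\norm{x}$, yielding $|f^\circ(x,y)|\leq \norm{f}\cdot\norm{x}\cdot\norm{y}$.

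For the converse, let $g:X\times Y\to\real$ be a bounded bilinear functional with bound $c$. For each fixed $x\in X$, the map $g_x = \lambda y.\,g(x,y):Y\to\real$ is a bounded linear functional on $Y$ of norm at most $c\cdot\norm{x}$. By the Riesz representation theorem for Hilbert spaces, there exists a unique vector $fx\in Y$ such that $g_x(y) = \brks{fx, y}$ for all $y\in Y$, and moreover $\norm{fx} = \norm{g_x}\leq c\cdot\norm{x}$. This defines a map $f:X\to Y$ satisfying $f^\circ = g$ by construction. Linearity of $f$ follows from the uniqueness clause in Riesz: for $x_1,x_2\in X$ and scalars $\alpha,\beta$, both $f(\alpha x_1+\beta x_2)$ and $\alpha fx_1 + \beta fx_2$ represent the same functional $\lambda y.\,g(\alpha x_1+\beta x_2, y)$ via the inner product, hence must coincide. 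Boundedness of $f$ was already obtained above.

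The key step, and the only nontrivial input, is the Riesz representation theorem — but this is a standard result for Hilbert spaces and may simply be cited. The remaining verifications are routine and largely formal, amounting to transporting bilinearity and boundedness back and forth through the correspondence $f\leftrightarrow f^\circ$. Since the paper is interested in the nuclear ideal of Hilbert--Schmidt operators rather than in the functional-analytic content per se, I would keep the exposition brief and emphasize that the correspondence is natural in $X$ and $Y$, since this naturality is what makes it usable in Definition~\ref{def:nucl}.
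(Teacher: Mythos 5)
Your proof is correct: the forward direction via Cauchy--Schwarz and the converse via the Riesz representation theorem (with uniqueness of the representing vector giving linearity of $f$, and the norm bound giving boundedness) is exactly the standard argument. The paper itself offers no proof but cites \cite[Theorem 2.4.1]{KadisonRingrose97}, whose proof proceeds in essentially the same way, so your proposal matches the intended justification.
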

\begin{defn}[Hilbert-Schmidt operators/functionals]
  A bounded linear functional $f:X_1\times \ldots\times X_n\to \real$ is
  \emph{Hilbert-Schmidt} if the sum
  \begin{displaymath}\textstyle
    \sum\nolimits_{x_1\in B_1}\ldots\sum\nolimits_{x_n\in B_n} (f(x_1, \ldots, x_n))^2
  \end{displaymath}
  is finite for some, and then any, orthonormal bases $B_1,\ldots,B_n$
  of $X_1,\ldots,X_n$, respectively. A bounded linear operator
  $f:X\to Y$ is \emph{Hilbert-Schmidt} if the induced functional
  $f^\circ$ (Proposition~\ref{prop:fun_op}) is Hilbert-Schmidt,
  equivalently if $\sum_{x\in B} \norm{f x}^2$ is finite for some, and
  then any, orthonormal basis $B$ of $X$. We denote by $\HS(X,Y)$ the
  space of all Hilbert-Schmidt operators from $X$ to $Y$.
\end{defn}
For $X,Y\in|\Hilb|$, the space of Hilbert-Schmidt functionals
$X\times Y\to\real$ is itself a Hilbert space, denoted $X\tensor Y$,
with the pointwise vector space structure and the inner product
$\brks{f,g} = \sum\nolimits_{x\in B}\sum\nolimits_{y\in B'} f(x,
y)\comp g(x, y)$. %
where $B$ and $B'$ are orthonormal bases of $X$ and $Y$, respectively.
By virtue of the equivalence between $f$ and $f^\circ$, this induces a
Hilbert space structure on $\HS(X,Y)$, with induced norm
$\norm{f}_2 =\sqrt{\sum\nolimits_{x\in B} \norm{fx}^2}$. The operator
$\tensor$ forms part of a dagger symmetric monoidal structure on
$\Hilb$, with unit $\real$. For a bounded linear operator $f:X\to Y$,
$f^\dagger:Y\to X$ is the \emph{adjoint operator} uniquely determined
by equation $\brks{x, f^\dagger y} = \brks{fx, y}$.  The tensor
product of $f:A\to B$ and $g:C\to D$ is the functional sending
$h:A\times C\to\real$ to
$h\comp (f^\dagger\times g^\dagger):B\times D\to\real$.  Given
$a\in A$ and $c\in C$, let us denote by $a\tensor c\in A\tensor C$ the
functional $(a',c')\mto \brks{a,a'}\comp\brks{c,c'}$, and so, with the
above $f$ and $g$, $(f\tensor g)(a\tensor c) = f(a)\tensor g(c)$.
\begin{prop}\label{prop:hs-nucl}
  \cite{AbramskyBluteEtAl99} The Hilbert-Schmidt operators form a
  nuclear ideal in $\Hilb$ with
  $\theta:\HS(X,Y)\cong\Hom(\real,X^\dagger\tensor Y)$ defined by
\begin{equation*}
\theta(f:X\to Y)(r:\real)(x:X, y:Y) = r\comp\brks{fx,y}.
\end{equation*}
\end{prop}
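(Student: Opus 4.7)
The plan is to verify, in order, the three clauses of Definition~\ref{def:nucl}.

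First, for the closure properties, I would rely on classical Hilbert-Schmidt theory. Closure under $(\argument)^\dagger$ follows from $\norm{f^\dagger}_2=\norm{f}_2$, obtained via Parseval: $\sum_{y\in B'}\norm{f^\dagger y}^2=\sum_{y,x}|\brks{f^\dagger y,x}|^2=\sum_{x}\norm{fx}^2$. Closure under composition with arbitrary bounded operators uses $\norm{gfx}\le\norm{g}\norm{fx}$ on one side and the dagger identity on the other. For closure under $\tensor$, I would observe that $\{x\tensor x'\mid x\in B,x'\in B'\}$ is an orthonormal basis of $X\tensor X'$ whenever $B,B'$ are orthonormal bases of $X,X'$, so that $\norm{f\tensor g}_2=\norm{f}_2\norm{g}_2$ by a Fubini-style computation.

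Next, for the bijection $\theta$, I would first check that $\theta(f)(r)$ is well-defined as an element of $X^\dagger\tensor Y$: it is evidently bilinear and bounded, and Hilbert-Schmidt because $\sum_{x\in B,y\in B'}r^2\brks{fx,y}^2=r^2\sum_x\norm{fx}^2<\infty$ by Parseval applied inside. The inverse sends $\phi:\real\to X^\dagger\tensor Y$ to the operator $f$ determined by $\brks{fx,y}=\phi(1)(x,y)$; existence and boundedness of such an $f$ are exactly Proposition~\ref{prop:fun_op}, and the finiteness of $\sum_x\norm{fx}^2$ falls out of the Hilbert-Schmidt condition on $\phi(1)$. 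Naturality in $X$ and $Y$ and coherent preservation of the symmetric monoidal and dagger structure are straightforward once one unwinds how $(\argument)^\dagger$ and $\tensor$ act on morphisms and tensors $x\tensor y$, using $(a\tensor b,a'\tensor b')\mto\brks{a,a'}\brks{b,b'}$.

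For compactness, my approach is to chase an element $a\in A$ around the diagram, expanded against an arbitrary $c\in C$ to compare with $\brks{gf^\dagger a,c}$. Going right then down sends $a$ to $a\tensor\theta(g)(1)$, i.e.\ to the functional $(b,c')\mto\brks{a,-}\brks{gb,c'}$ in $A\tensor(B^\dagger\tensor C)$, reassociated to $(B^\dagger\tensor A)\tensor C$. Applying $(\theta(f))^\dagger\tensor\id_C$ then contracts the $B^\dagger\tensor A$ factor: by definition of the adjoint, $(\theta(f))^\dagger$ computes the inner product $\brks{\theta(f)(1),\argument}$ in $B^\dagger\tensor A$, which against the tensor $b\tensor a'$ gives $\brks{fb,a'}$, so the contraction of $b\mto\brks{a,-}\brks{gb,-}$ with $\theta(f)(1)(b,a')=\brks{fb,a'}$ evaluates, after expansion in an orthonormal basis of $B$, to $\brks{gf^\dagger a,\argument}\in C$. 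This is exactly the other route $gf^\dagger(a)$ read through the unit isomorphism.

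The main obstacle will be the compactness step: one has to carry out a careful bookkeeping of the associator and unitor isomorphisms, of how $(\argument)^\dagger$ interacts with $\tensor$, and of the defining inner-product identity of the adjoint, in order to see that the contraction realized by $(\theta(f))^\dagger\tensor\id_C$ is indeed summation against $f$ in the $B$-coordinate, yielding $gf^\dagger$. Once this identification is made, the remaining verifications reduce to Parseval and to well-known properties of Hilbert-Schmidt operators.
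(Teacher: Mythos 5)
The paper does not actually prove this proposition: it is imported verbatim from the cited source \cite{AbramskyBluteEtAl99} (Abramsky, Blute, Panangaden), so there is no in-paper argument to compare yours against. That said, your sketch is the standard verification and, as far as I can see, sound: the closure of $\HS$ under $(\argument)^\dagger$, composition, and $\tensor$ via Hilbert--Schmidt norm identities is correct; the bijectivity of $\theta$ is exactly the combination of Proposition~\ref{prop:fun_op} with the observation that the paper's $X^\dagger\tensor Y$ is by definition the space of Hilbert--Schmidt functionals, so the square-summability conditions match on both sides; and your element chase for compactness does land on $g\comp f^\dagger$, since $(\theta(f))^\dagger(\xi)=\brks{\theta(f)(1),\xi}$ gives $\brks{fb,a'}$ on tensors $b\tensor a'$, and expanding $\theta(g)(1)=\sum_{b,c}\brks{gb,c}\,b\tensor c$ yields $\sum_b\brks{b,f^\dagger a}\,gb=g(f^\dagger a)$. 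Two points you should make explicit when writing this up: the middle isomorphism $A\tensor(B^\dagger\tensor C)\cong(B^\dagger\tensor A)\tensor C$ uses the symmetry as well as the associator (you only say ``reassociated''), and the interchange of the double basis sum with the application of the bounded operator $g$ needs the absolute convergence supplied by the Hilbert--Schmidt hypotheses; likewise the naturality and coherence clauses for $\theta$, which you assert rather than check, are routine but should be spelled out. None of these is a gap in the idea, only bookkeeping you have already flagged.
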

A crucial fact underlying the proof of Proposition~\ref{prop:hs-nucl}
is that $\HS(X,Y)$ is isomorphic to $X^\dagger\tensor Y$, naturally in
$X$ and $Y$. We emphasize that what makes the case of $\Hilb$
significant is that we do not restrict to finite-dimensional Hilbert
spaces. In that case all bounded linear operators would be
Hilbert-Schmidt and the corresponding category would be (dagger)
compact closed~\cite{Selinger07}. In the infinite-dimensional case,
identities need not be Hilbert-Schmidt, so $\HS$ is indeed only an
ideal and not a subcategory.

Let $\N^2(X,Y) = \{g^\dagger h:X\to Y\mid h\in\N(X,Z), g\in\N(Y,Z)\}$ for any 
nuclear ideal $\N$. The main theorem of the section now can be stated as follows.
\begin{thm}\label{thm:hilbert-schmidt}
\begin{cenumerate}
\item\label{item:hs2} The guarded ideal induced by the vacuous guarded
  structure on $\Hilb$ (see~\eqref{eq:trivial-ideal}) is precisely $\HS^2$, and $\Hilb$ is guarded
  traced over $\HS^2$.
\item\label{item:hilb-trace-dagger} Guarded traces in $\Hilb$ commute
  with $(-)^\dagger$ in the sense that if
  $f\in\GHom((A\tensor U)\tensor B,C\tensor (D\tensor U))$, then
  $\gamma_{B,A\tensor U} f^\dagger \gamma_{D\tensor U,C}\in
  \GHom((D\tensor U)\tensor C,B\tensor (A\tensor U))$
  and
  $\tr^U_{D,C,B,A}(\gamma_{B,A\tensor U} f^\dagger \gamma_{D\tensor
    U,C}) =
  \gamma_{A,B}\comp(\tr_{A,B,C,D}^U(f))^\dagger\comp\gamma_{C,D}$.
\end{cenumerate}
\end{thm}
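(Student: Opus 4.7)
The plan is to reduce both parts to the compactness axiom of the nuclear ideal $\HS$ on $\Hilb$.

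For part~(\ref{item:hs2}), I first establish $\IHom=\HS^2$. By Remark~\ref{rem:triv-ideal}, a morphism $A\to D$ in the vacuously induced guarded ideal factors (modulo unitors) as $(h\tensor\id_D)(\id_A\tensor g)$ for some $g:\real\to E\tensor D$ and $h:A\tensor E\to\real$. Since $(\argument)^\dagger$ is identity on objects in $\Hilb$, the bijection $\theta$ of Proposition~\ref{prop:hs-nucl} takes the form $\HS(X,Y)\cong\Hom(\real,X\tensor Y)$, so $g=\theta(\hat g)$ for a unique $\hat g\in\HS(E,D)$, and $h^\dagger$ corresponds, modulo the symmetry $\gamma_{A,E}$, to $\theta(\hat h)$ for a unique $\hat h\in\HS(A,E)$. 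The compactness diagram of Definition~\ref{def:nucl}, applied with $B:=E$, $f:=\hat h$, $g:=\hat g$, then unravels precisely to the identity $(h\tensor\id_D)(\id_A\tensor g)=\hat g\hat h^\dagger$, which lies in $\HS^2(A,D)$. Conversely, every $q^\dagger p\in\HS^2(A,D)$ with $p\in\HS(A,Z)$ and $q\in\HS(D,Z)$ is of the required form via $E:=Z$, with $g$ and $h$ recovered from $\theta(q)$ and $\theta(p)^\dagger$ up to suitable symmetries.

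For the traced part of~(\ref{item:hs2}), I apply Proposition~\ref{prop:trivial-trace}, whose only obligation is that the composite~(\ref{eq:trivial-trace}) depends only on $f$. Given $f\in\GHom((A\tensor U)\tensor B,C\tensor(D\tensor U))$, any factorization in the sense of Lemma~\ref{lem:triv} exhibits $f$ via Hilbert-Schmidt witnesses obtained from $\theta$ applied to $g$ and $h^\dagger$; the composite~(\ref{eq:trivial-trace}) then coincides with the canonical partial trace of $f$ over $U$ in the sense of traced nuclear ideals~\cite[Def.~8.14]{AbramskyBluteEtAl99}, and is in particular independent of the factorization. For part~(\ref{item:hilb-trace-dagger}), starting from a factorization $f=(h\tensor\id_{D\tensor U})(\id_{A\tensor U}\tensor g)$, functoriality of $\dagger$ and its coherence with $\tensor$ give $f^\dagger=(\id_{A\tensor U}\tensor g^\dagger)(h^\dagger\tensor\id_{D\tensor U})$; conjugating by the symmetries $\gamma_{B,A\tensor U}$ and $\gamma_{D\tensor U,C}$ produces a factorization of $\gamma_{B,A\tensor U}f^\dagger\gamma_{D\tensor U,C}$ of the form required by Lemma~\ref{lem:triv}, with the roles of $g$ and $h$ exchanged via dagger. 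Evaluating~(\ref{eq:trivial-trace}) on this new factorization and invoking functoriality of $\dagger$ together with naturality of the symmetries yields the claimed equality, modulo standard bookkeeping with associators and unitors.

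The main obstacle is the well-definedness step in the traced part of~(\ref{item:hs2}): distinct factorizations of $f$ generally use different intermediate objects $E$, so one must argue that the resulting composites agree, and this is exactly the content that passing through $\theta$ yields canonical Hilbert-Schmidt representatives of $f$ determined up to coherent isomorphism. Once this is settled, the dagger-commutation of~(\ref{item:hilb-trace-dagger}) follows by routine coherence bookkeeping with the symmetric monoidal dagger structure of $\Hilb$.
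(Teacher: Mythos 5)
Your identification of the induced guarded ideal with $\HS^2$ is essentially the paper's argument: both reduce the factorization of Remark~\ref{rem:triv-ideal} to the compactness axiom via the bijections $\theta$ and $(\argument)^\dagger$. Likewise, your treatment of part~(2) via the dagger-reversed factorization and the duality of Remark~\ref{rem:dual} matches the paper.

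However, there is a genuine gap in the traced part of~(1). The entire content of applying Proposition~\ref{prop:trivial-trace} is to verify that the composite~\eqref{eq:trivial-trace} is independent of the chosen factorization, and you do not actually prove this: you assert that the composite ``coincides with the canonical partial trace of $f$ over $U$ in the sense of traced nuclear ideals'' and is ``in particular independent of the factorization.'' But in \cite[Definition~8.14]{AbramskyBluteEtAl99} this independence is itself a \emph{condition} to be verified, not an automatic consequence, and the result established there (\cite[Theorem~8.16]{AbramskyBluteEtAl99}) covers only the unparametrized trace; the paper explicitly states that clause~(1) is a \emph{generalization} of that result to parametrized traces, so it cannot simply be cited. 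Your fallback claim that the Hilbert--Schmidt representatives obtained via $\theta$ are ``determined up to coherent isomorphism'' is not justified and is not true in any useful sense: two factorizations of the same $f$ may use genuinely different intermediate spaces $E$ (one can always enlarge $E$), and no canonical isomorphism relates them. What is actually needed, and what the paper supplies, is a concrete computation: one first observes that a bounded operator out of a tensor product is determined by its values on elementary tensors, and then expands $w(a\tensor b)(c,d)$ over an orthonormal basis $(u_i)$ of $U$ (using the inner-product identity of Lemma~\ref{lem:comp-hilb}) to arrive at
\begin{equation*}
  w(a\tensor b)(c,d)=\sum\nolimits_i\brks{f(a\tensor u_i\tensor b),\,c\tensor d\tensor u_i},
\end{equation*}
a formula in which the factorization no longer appears. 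Without some such computation (or an equivalent argument), the well-definedness hypothesis of Proposition~\ref{prop:trivial-trace} remains unverified and the claim that $\Hilb$ is guarded traced over $\HS^2$ is not established.
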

Clause~\ref{item:hs2} is a generalization of the result
in~\cite[Theorem 8.16]{AbramskyBluteEtAl99} to parametrized
traces. Specifically, we obtain agreement with the conventional
mathematical definition of trace:
given $f\in\HS^2(X,X)$,  $\tr(f) = \sum_i \brks{f(e_i), e_i}$
for any choice of an orthonormal basis $(e_i)_i$, and $\HS^2(X,X)$
contains precisely those $f$ for which this sum is absolutely
convergent independently of the basis. 

\section{Conclusions and Further Work}

\noindent We have presented and investigated a notion of abstract
\emph{guardedness} and guarded \emph{traces}, focusing on foundational
results and important classes of examples. We have distinguished a
more specific notion of \emph{ideal guardedness}, which in many
respects appears to be better behaved than the unrestricted one, in
particular ensures closer agreement between structural and geometric
guardedness. An unexpectedly prominent role is played by `vacuous'
guardedness, characterized by the absence of paths connecting
unguarded inputs to guarded outputs; e.g., partial traces in Hilbert
spaces~\cite{AbramskyBluteEtAl99} turn out to be based on this form of
guardedness. Further research will concern a coherence theorem for
guarded traced categories generalizing the well-known unguarded
case~\cite{JoyalStreetEtAl96,Selinger04}, and a generalization of the
Int-construction~\cite{JoyalStreetEtAl96}, which would relate guarded
traced categories to a suitable guarded version of compact closed
categories. Also, we plan to investigate guarded traced categories as
a basis for generalized Hoare logics, extending and unifying previous
work~\cite{ArthanMartinEtAl09,GoncharovSchroder13a}.

\newpage
\bibliographystyle{myabbrv}
\bibliography{monads}
\newpage
\appendix

\section{Appendix: Omitted Details and Proofs}

\subsection{Derivability of Weakening (Section~\ref{sec:guarded})}
\noindent We show that we can weaken on the right (output) side; by
duality, we can then also weaken on the input side, and the claim
follows by weakening first on the output and then on the input
side. That is, we assume that
$f\in\GHom(A\tensor B, C\tensor (D'\tensor D))$ and derive
$f\in\GHom(A\tensor B,(C\tensor D')\tensor D)$.

First note that by \textbf{(cmp${}_{\tensor}$)} and
\textbf{(vac${}_{\tensor}$)}, guardedness annotations are stable under
rearranging guarded output gates via monoidal isomorphims, and
similarly for the unguarded output gates and both types of input
gates. We obtain by \textbf{(vac${}_{\tensor}$)} that
$\id_C\tensor I\in\GHom(C\tensor I,C\tensor I)$ and
$I\tensor\id_D\in\GHom(I\tensor D,I\tensor D)$. By
\textbf{(uni${}_{\tensor}$)}, \textbf{(par${}_{\tensor}$)}, and
stability under monoidal isomorphisms, we derive
\begin{equation*}
  \id_C\tensor \gamma_{I,D'}\tensor  \id_D\in\GHom((C\tensor I)\tensor (D'\tensor D),(C\tensor D')\tensor(I\tensor D)),  
\end{equation*}
(eliding associativity throughout) and hence, again using stability
under monoidal isomorphisms,
\begin{equation*}
  \id_C\tensor \id_{D'}\tensor  \id_D\in\GHom(C \tensor (D'\tensor D),(C\tensor D')\tensor D).  
\end{equation*}
Our goal then follows by \textbf{(cmp${}_{\tensor}$)}.

\subsection{Proof of Theorem~\ref{thm:gcompl}}
For purposes of this proof, call a path leading from an input gate
in~$A$ to an output gate in $D$ as in the claim \emph{critical}. That
is, we are to show that $e$ types as requested iff all critical paths
in its diagram are guarded.

\emph{`Only if':} By induction on the derivation of
$e\in\GHom(A\tensor B,C\tensor D)$. The base case (introduction of
morphism symbols) is trivial. The cases for the rules from
Definition~\ref{def:guard_sm}, diagrammatically represented according
to Figure~\ref{fig:gmon}, are as follows. In the cases for rules
\textbf{(uni${}_{\tensor}$)} and \textbf{(vac${}_{\tensor}$)}, there
are no critical paths. For rule \textbf{(par${}_{\tensor}$)}, just
note that every critical path in the diagram for $f\tensor g$ is
either a critical path in the diagram for $f$ or a critical path in
the diagram for~$g$. For \textbf{(cmp${}_{\tensor}$)}, let $\pi$ be a
critical path in the diagram for $gf$. We distinguish cases on whether
$\pi$ leaves $f$ through a guarded or an unguarded output gate. By the
symmetry manifest in Figure~\ref{fig:gmon}, we can w.l.o.g.\ assume
the latter. As can, again, be seen in Figure~\ref{fig:gmon}, $\pi$
then enters $g$ through an unguarded input gate and leaves~$g$ through
a guarded output, so by the inductive hypothesis, the part of~$\pi$
that leads through $g$ is guarded, and then of course $\pi$ itself is
guarded.

\emph{`If':} We can regard the diagrammatic rules in
Figure~\ref{fig:gmon} as a set of rules for establishing guardedness
of diagrams (essentially, this lets us use the known coherence theorem
for symmetric monoidal categories to avoid bookkeeping with
associativity etc.). In terms of diagrams, object expressions (such as
$A$ and $D$ in the claim) correspond to sets of gates, and we will
henceforth conflate the two notions. Let us denote by $G_e(E,O)$ the
statement that the diagram of $e$ is provably unguarded in a set $E$
of input gates and simultaneously guarded in a set $O$ of output gates
(i.e.\ that the corresponding gates can be marked black according to
the rules in Figure~\ref{fig:gmon}). We thus have to show
$G_{e}(A,D)$.  We proceed by structural induction over $e$. For the
case where $e$ is a basic box~$f$, note that the assumption implies
that the unguarded gates of the given diagram are contained in those
given in the basic guardedness assumption for $f$, similarly for the
guarded outputs, so that $G_e(A,D)$ by weakening. The other base cases
are straightforward, as they do not contain any basic boxes, so that
the assumption implies that there are no critical paths; to make one
example implicit: if $e$ is an identity, then absence of critical
paths implies that one of $A$ and $D$ is empty, so that $G_e(A,D)$ by
\textbf{(vac${}_{\tensor}$)}. The other cases are as follows.
\begin{citemize}
\item The expression $e$ is a composite $e_2\comp e_1$. Let $M$ be the
  set of joint gates $W$ of~$e_1$ and $e_2$ such that all paths from
  $W$ to gates in $D$ in the diagram of $e_2$ are guarded, and
  analogously, let $N$ be the set of joint gates $W$ of $e_1$ and
  $e_2$ such that all paths from gates in $A$ to $W$ are guarded. Note
  that the union $N\cup M$ consists of all joint gates of $e_1$ and
  $e_2$: If there was a joint gate $W\notin N\cup M$, then there would
  be an unguarded path from some $X$ in $A$ to $W$ and an unguarded
  path from $W$ to some $Y$ in $D$; then the concatenated path would
  be critical (for $e_2e_1$) and also unguarded, contradicting the
  assumption. Now by induction $G_{e_1}(A,N)$ and $G_{e_2}(M,D)$, and
  by the above, the complement $\overline N$ of $N$ is
  $\overline N= (M\cup N)\setminus N\subseteq M$. By weakening, we
  therefore have $G_{e_2}(\overline N,D)$, so $G_e(A,D)$ by (the
  diagrammatic version of)~\textbf{(cmp${}_{\tensor}$)}.
\item The expression $e$ is a tensor $e_1\tensor e_2$. Then $A$ and
  $D$ are disjoint unions $A=A_1\cup A_2$, $D=D_1\cup D_2$ where the
  gates in $A_1$ and $D_1$ are contributed by $e_1$ and those in $A_2$
  and $D_2$ by $e_2$. Every path in the diagram of $e_1$ from a gate
  in $A_1$ to a gate in $D_1$ is a critical path in $e_1\tensor e_2$,
  hence guarded by assumption; hence $G_{e_1}(A_1,D_1)$ by
  induction. Analogously, $G_{e_2}(A_2,D_2)$, and thus $G_e(A,D)$ by
  (the diagrammatic version of)~\textbf{(par${}_{\tensor}$)}.  \qed
\end{citemize}

\subsection{Proof of Lemma~\ref{lem:triv}}

Any morphism that factors as $(h\tensor\id_D)(\id_A\tensor g)$ as in
the statement is guarded in any guarded structure by
rules~\textbf{(vac${}_{\tensor}$)} and~\textbf{(cmp${}_{\tensor}$)}
(plus weakening). This proves that the putative guarded structure
described is contained in all guarded structures on $\BC$. It remains
to show that the axioms of Definition~\ref{def:guard_sm} are
satisfied. The rules~\textbf{(uni${}_{\tensor}$)}
and~\textbf{(vac${}_{\tensor}$)}) are clear, and closure under
rule~\textbf{(par${}_{\tensor}$)} is easily seen by rearranging boxes
and gates using commutativity and associativity of~$\tensor$. For
closure under rule~\textbf{(cmp${}_{\tensor}$)}, finally, assume that
$f_1=(h_1\tensor\id)(\id\tensor g_1)$,
$f_2=(h_2\tensor\id)(\id\tensor g_2)$ such that $f_2f_1$ is
defined. Then $f_2f_1$ factors, omitting associativity isomorphisms,
into $\id\tensor((\id\tensor g_2)g_1)$ and
$(h_2(h_1\tensor\id))\tensor\id$.

\subsection{Proof of Proposition~\ref{prop:guard_equiv}}
For brevity, given coproduct injections
$\sigma:X_1\to X_1+X_2\cong X$, $\theta:Y_2\to Y_1+Y_2\cong Y$, we
write 
\begin{equation*}
  f:X\to_{\sigma,\theta} Y\quad\text{for}\quad f\inl:X_1\to_\theta Y,
\end{equation*}
i.e.\
for $f\inl\in\GHom(X_1+X_2,Y_1+Y_2)$.

We need the following lemma:
\begin{lem}\label{lem:guard_pair}
Given $f:X\to_{\sigma,\theta} Z$, $g:Y\to_{\sigma',\theta} Z$, then $[f,g]:X+Y\to_{\sigma+\sigma',\theta} Z$.
\end{lem}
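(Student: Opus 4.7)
The plan is to unfold the notation $\to_{\sigma,\theta}$ introduced immediately before the lemma and reduce the claim to a single application of rule \textbf{(par${}_\mplus$)} from Figure~\ref{fig:co-cart-g}. The whole argument is essentially bookkeeping, so I do not expect a real obstacle; the only thing to be careful about is matching up the coproduct conventions.

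First I would translate the hypotheses. By the shorthand, $f:X\to_{\sigma,\theta} Z$ unpacks to $f\sigma \in \Hom_{\theta}(X_1, Z)$, where $\sigma : X_1 \to X \cong X_1 + X_2$ is the coproduct injection exhibiting the decomposition of $X$; similarly $g:Y\to_{\sigma',\theta} Z$ unpacks to $g\sigma' \in \Hom_{\theta}(Y_1, Z)$. The conclusion $[f,g]:X+Y\to_{\sigma+\sigma',\theta} Z$ unpacks to $[f,g]\comp(\sigma+\sigma') \in \Hom_{\theta}(X_1+Y_1, Z)$, where $\sigma + \sigma' : X_1 + Y_1 \to X + Y$ is the coproduct of the two injections, playing the role of the left injection in the induced decomposition $X+Y \cong (X_1+Y_1)+(X_2+Y_2)$.

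The routine coproduct identity
\begin{equation*}
[f,g]\comp (\sigma+\sigma') \;=\; [\,f\sigma,\; g\sigma'\,]
\end{equation*}
follows from $\sigma + \sigma' = [\inl\comp\sigma,\, \inr\comp\sigma']$ together with the universal property of the coproduct. Hence the conclusion reduces to showing $[f\sigma, g\sigma'] \in \Hom_{\theta}(X_1+Y_1, Z)$, which is exactly what \textbf{(par${}_\mplus$)} delivers from the two premises $f\sigma \in \Hom_\theta(X_1,Z)$ and $g\sigma' \in \Hom_\theta(Y_1,Z)$ already in hand. The only subtlety worth verifying explicitly is that the ``left part'' of the $(\sigma+\sigma')$-induced decomposition is indeed $X_1 + Y_1$, so that the unfolded conclusion really is the one supplied by \textbf{(par${}_\mplus$)}; everything else is immediate.
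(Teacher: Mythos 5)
Your proposal is circular in the context where this lemma is needed. The lemma is stated in terms of the notation $f:X\to_{\sigma,\theta}Z$, which unfolds to a statement about the sets $\GHom(X_1+X_2,Z_1+Z_2)$ of Definition~\ref{def:guard_sm}, and it is invoked in the proof of Proposition~\ref{prop:guard_equiv} precisely in order to \emph{derive} the rules \textbf{(cmp${}_{\mplus}$)} and \textbf{(par${}_{\mplus}$)} from the axioms \textbf{(uni${}_{\tensor}$)}--\textbf{(par${}_{\tensor}$)} of a guarded category. At that point the rules of Figure~\ref{fig:co-cart-g} are not yet available: they are the conclusion, not the hypothesis. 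So reducing the lemma to a single application of \textbf{(par${}_{\mplus}$)} assumes exactly what the lemma is there to establish. (Your bookkeeping --- the identity $[f,g]\comp(\sigma+\sigma')=[f\comp\sigma,g\comp\sigma']$ and the identification of $X_1+Y_1$ as the unguarded part of $X+Y$ --- is fine, and the argument would be legitimate in the converse direction of Proposition~\ref{prop:guard_equiv}, where the $\Hom_\theta$ rules are assumed; but the lemma is not used in that direction.)

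The proof has to be carried out with the monoidal-level axioms of Definition~\ref{def:guard_sm}. Write $[f,g]=\nabla\comp(f+g)$ with $\nabla=[\id_Z,\id_Z]$. By \textbf{(par${}_{\tensor}$)}, $f+g$ is partially guarded as a morphism $(X_1+Y_1)+(X_2+Y_2)\to(Z_1+Z_1)+(Z_2+Z_2)$ (after the evident transposition of summands), with unguarded inputs $X_1+Y_1$ and guarded outputs $Z_2+Z_2$. The codiagonal, viewed as $\nabla_{Z_1}+\nabla_{Z_2}:(Z_1+Z_1)+(Z_2+Z_2)\to Z_1+Z_2$, is partially guarded by \textbf{(vac${}_{\tensor}$)}, and \textbf{(cmp${}_{\tensor}$)} then yields $[f,g]:X+Y\to_{\sigma+\sigma',\theta}Z$. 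This is the paper's argument; you need some version of it rather than an appeal to \textbf{(par${}_{\mplus}$)}.
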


\begin{proof}
  By~\textbf{(par${}_{\tensor}$)},
  $f+g:X+Y\to_{\sigma+\sigma',\theta+\theta} Z+Z$.  Assume w.l.o.g.\
  that $Z=Z'+Z''$, $\theta=\inr:Z''\cpto Z$ and
  $\bar\theta=\inr:Z'\cpto Z$.  Then $Z+Z$ is a coproduct of $Z'+Z'$
  and $Z''+Z''$, hence we obtain
  $f+g:X+Y\to_{\sigma+\sigma',\inr}
  (Z'+Z')+(Z''+Z'')$.
  By~\textbf{(cmp${}_{\tensor}$)},
  $\nabla+\nabla:(Z'+Z')+(Z''+Z'')\to_{\inr,\inr} Z'+Z''$, and so
  by~\textbf{(cmp${}_{\tensor}$)},
  $[f,g]:X+Y\to_{\sigma+\sigma',\theta} Z$.  \qed
\end{proof}

The proof of Proposition~\ref{prop:guard_equiv} then proceeds as
follows. Suppose that $(\BC, +,\iobj)$ is guarded and let us show
first of all that the above condition uniquely determines the
$\Hom_{\inr}(X,Y+Z)$. Indeed, on the one hand we obtain as the
definition: $f\in\Hom_{\inr}(X,Y+Z)$ if
$[f,\bang]\in\GHom(X+\iobj,Y+Z)$.
On the other hand if $f=g\comp\inl$ then $g\in\GHom(X+X',Y+Z)$ implies 
$[f,\bang]=[g\inl,\bang]=[g\inl,\bang\inr]=g\comp (\id+\bang)\in\GHom(X+\iobj,Y+Z)$
by~\textbf{(vac${}_{\tensor}$)} and~\textbf{(cmp${}_{\tensor}$)}, that is,
decomposition of $X$ different than $X+\iobj$ do not affect the definition of $\Hom_{\inr}(X,Y+Z)$.

We proceed to prove the required properties.
\begin{citemize}
  \item \textbf{(vac${}_{\mplus}$)} Let $f:X\to Z$. Then by~\textbf{(vac${}_{\tensor}$)}, 
$f+\bang:X+\iobj\to_{\inr,\inr} Z+Y$. Modulo the fact that $X$ is a coproduct of
$X$ and $\iobj$, this is equivalent to $f\comp\inl:X\to_{\bang,\inr} Z+Y$.
  \item \textbf{(cmp${}_{\mplus}$)} Suppose that $f\in\Hom_{\inr}(X,Y+Z)$, $g\in\Hom_{\sigma}(Y,V)$ and $h:Z\to V$. Then 
$f:X\to_{\bang,\inr} Y+Z$ and we would be done by Lemma~\ref{lem:guard_pair} if we showed
$g:Y\to_{\bang,\sigma} V$ and $h:Z\to_{\id,\sigma} V$. The former of these two
judgements is an assumption. To prove the latter one, note that $h:Z\to_{\id,\id} V$
by~\textbf{(vac${}_{\tensor}$)} and $\id:V\to_{\id,\sigma} V$ by~\textbf{(uni${}_{\tensor}$)}.
Hence, indeed  $h:Z\to_{\id,\sigma} V$ by~\textbf{(cmp${}_{\tensor}$)}.
  \item \textbf{(par${}_{\mplus}$)} The assumption read as $f:X\to_{\bang,\sigma} Z$, $g:Y\to_{\bang,\sigma} Z$.
By Lemma~\ref{lem:guard_pair}, $[f,g]:X+Y \to_{\bang,\sigma} Z$, which is the goal.
\end{citemize}

We proceed to show the converse implication.
\begin{citemize}
 \item \textbf{(uni${}_{\tensor}$)} Note that $\gamma_{\iobj,A}=[\bang,\inl]:\iobj + A\to A+\iobj$,
hence $\gamma_{\iobj,A}\inl = \bang = \inl\bang\in\Hom_{\inr}(A+\iobj)$ 
by~\textbf{(vac${}_{\mplus}$)}.
\item\textbf{(vac${}_{\tensor}$)} Let $f:A\to B$ and $g:C\to D$. Note that $\inl:A\to_{\inr} A+C$ and $\inl\comp f:A\to_{\inr} B+D$
by~\textbf{(vac${}_{\mplus}$)}, and therefore, by~\textbf{(cmp${}_{\mplus}$)}, $(f+g)\inl = 
[\inl f,\inr g]\inl:A\to_{\inr} B+D$, which is equivalent to the goal.
 \item\textbf{(cmp${}_{\tensor}$)} Since
$f\comp g\inl = [f\inl, f\inr]\comp g\inl$ and by assumption
$g\inl\in\Hom_{\inr}(A, C+D)$, by~\textbf{(cmp${}_{\mplus}$)},
we reduce the goal to $f\inl\in\Hom_{\inr}(C, E+F)$, which is again part of the
assumption.
 \item\textbf{(par${}_{\tensor}$)} By assumption, $f\inl\in\Hom_{\inr}(A, C+D)$ 
and\/ $g\inl\in\Hom_{\inr}(A', C'+D')$. And we need to show that for 
$[(\inl+\inl)\comp f,(\inr+\inr)g](\inl+\inl)\in\Hom_{\inr}(A+A', (C+C')+(D+D'))$.	
Indeed, by assumption $f\inl\in\Hom_{\inr}(A, C+D)$ and hence by~\textbf{(vac${}_{\mplus}$)} 
and~\textbf{(cmp${}_{\mplus}$)}, $(\inl+\inl)\comp f\inl=[\inl\inl,\inr\inl]\comp f\comp\inl\in\Hom_{\inr}(A, (C+C')+(D+D'))$.
Symmetrically, $(\inr+\inr)\comp g\inl\in\Hom_{\inr}(A', (C+C')+(D+D'))$,
and thus we are done by~\textbf{(par${}_{\mplus}$)}.\qed
\end{citemize}

\subsection{Proof of Lemma~\ref{lem:total}}
Note that by~\textbf{(vac${}_{\tensor}$)}, $\IHom(I,I)=\Hom(I,I)$
The closure conditions are  
instances of diagrams from Figure~\ref{fig:gmon}.

\begin{center}
\includegraphics[width=.20\textwidth]{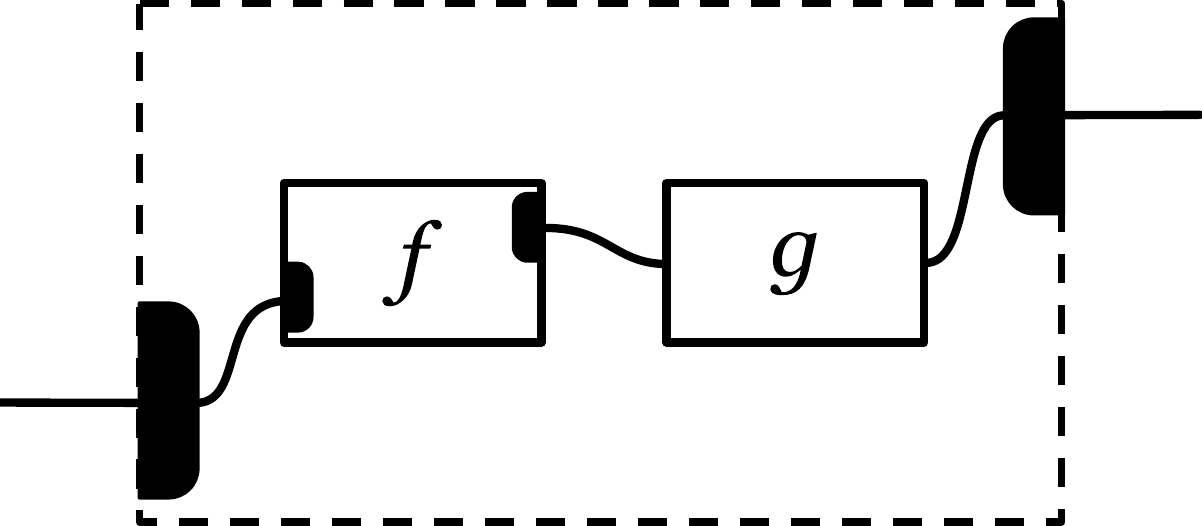}
\qquad\qquad
\includegraphics[width=.20\textwidth]{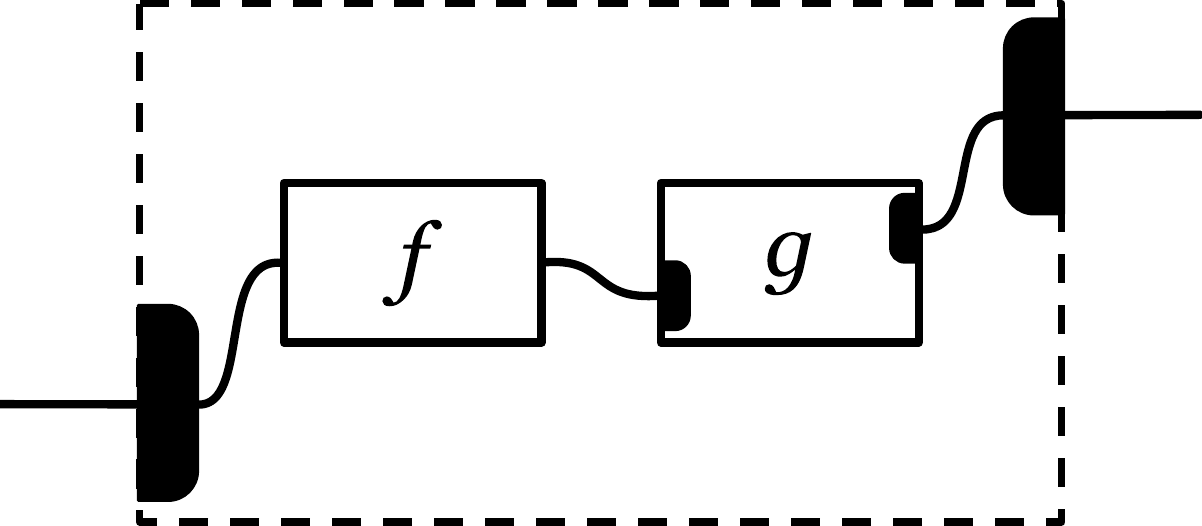}
\qquad\qquad
\includegraphics[width=.20\textwidth]{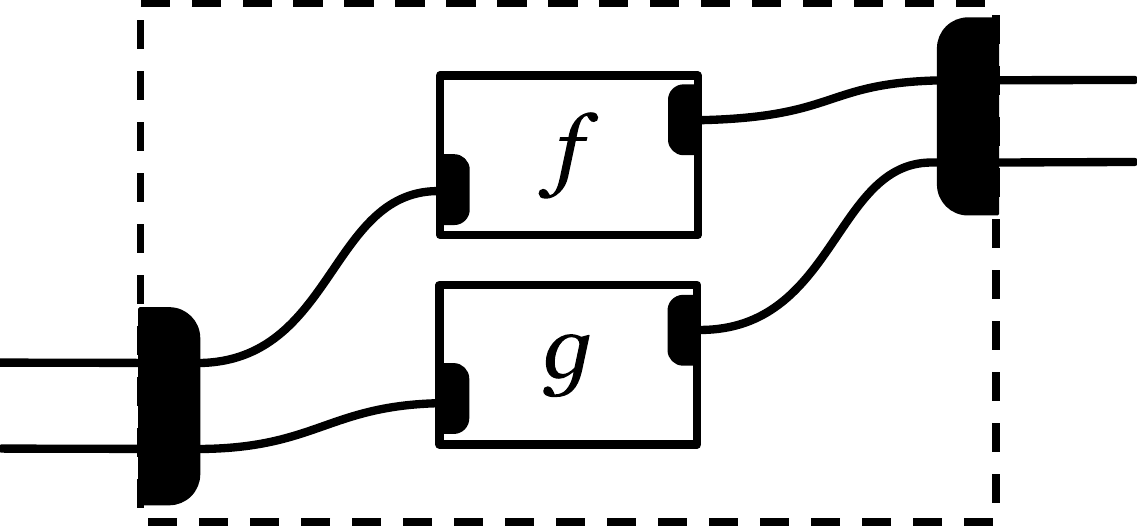}
\end{center}
\qed

\subsection{Existence of Non-Ideal Guarded Structures
  (Section~\ref{sec:ideals})}

\begin{expl}\label{expl:counter}
  Let $\BBT$ be the monad on $\Set$ for the algebraic theory of
  commutative semigroups with the additional law
$x * y = x$.
The Kleisli category $\Set_{\BBT}$ is co-Cartesian with coproducts
inherited from $\Set$, and so we put $f:X\to_2 T(Y+Z)$ iff
$f = (T\inl)\comp g$ for some $g:X\to TY$. According to this
definition, $f\in\IHom(X,TY)$ iff $f$ factors through
$T\emptyset =\emptyset$, i.e.\ when $X=\emptyset$ and $f=\bang_{TY}$.
This induces a different guarded category structure on $\Set_{\BBT}$:
$f:X\to_2 T(Y+Z)$ iff $Z=\emptyset$. Consider the term $x*y\in T(X+Y)$
(seen as a morphism $1\to T(X+Y)$) with $x\in X$ and $y\in Y$. It is
$\inj_2$-guarded under the original definition, for it is equivalent
to the term $x\in T(X+\emptyset)$, but not under the new definition
unless $Y=\emptyset$.
\end{expl}

\subsection{Proof of Theorem~\ref{thm:t_to_p}}
By the axioms of guarded categories (or more quickly by
Theorem~\ref{thm:gcompl}), it is clear that morphisms of the
given form must be in $\GHom(A\tensor B,C\tensor D)$. It remains to
check closure under the axioms of Definition~\ref{def:guard_sm}. To that
end, consider a generic morphism $(q\tensor\id)\comp w_n\comp\ldots w_1\comp (\id\tensor p)$ where
$p:B\to B_1\tensor C_1$, $q:B_n\tensor C_n\to C$, $A_1=A$, $C_n=D$
and each $w_i:A_i\tensor (B_i\tensor C_i)\to A_{i+1}\tensor (B_{i+1}\tensor
C_{i+1})$ has the form
\begin{equation}\label{eq:tguard_i}
\vcenter{\hbox{\includegraphics[scale=.3]{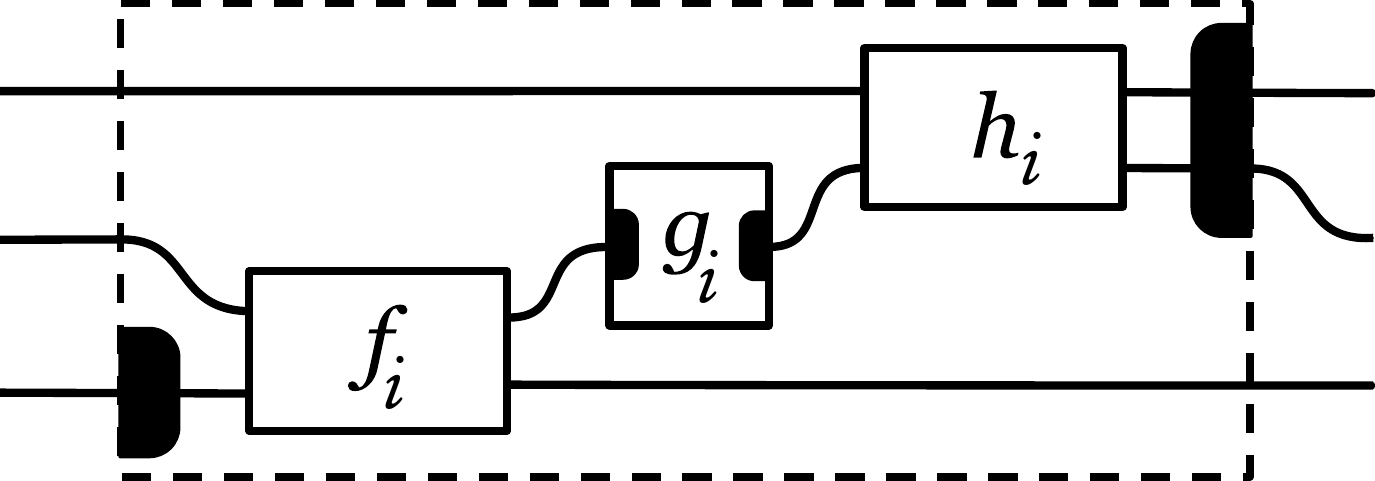}}}
\medskip
\end{equation}

In
order to capture~\textbf{(uni${}_{\tensor}$)}
and~\textbf{(vac${}_{\tensor}$)} it suffices to take $n=0$, and select
$p$ and $q$ in the obvious way.  Axiom~\textbf{(cmp${}_{\tensor}$)} is
clear by definition. Let us verify~\textbf{(par${}_{\tensor}$)}. Given
$f=(q\tensor\id)\comp w_n\comp\ldots w_1\comp (\id\tensor p)$ and
$f'=(q'\tensor\id)\comp w_m'\comp\ldots w_1'\comp (\id\tensor p')$, we
assume them to be an input to the~\textbf{(par${}_{\tensor}$)}
rule. W.l.o.g.\ we assume that $n=m$ (missing sections of the
form~\eqref{eq:tguard_i} with a middle wire of type $I$ can obviously
be added by need either to $f$ or to $f'$). Note that the tensor
product of two sections of the form~\eqref{eq:tguard_i} can again be
arranged in a diagram in the same form:
\medskip
\begin{displaymath}
\includegraphics[scale=.3]{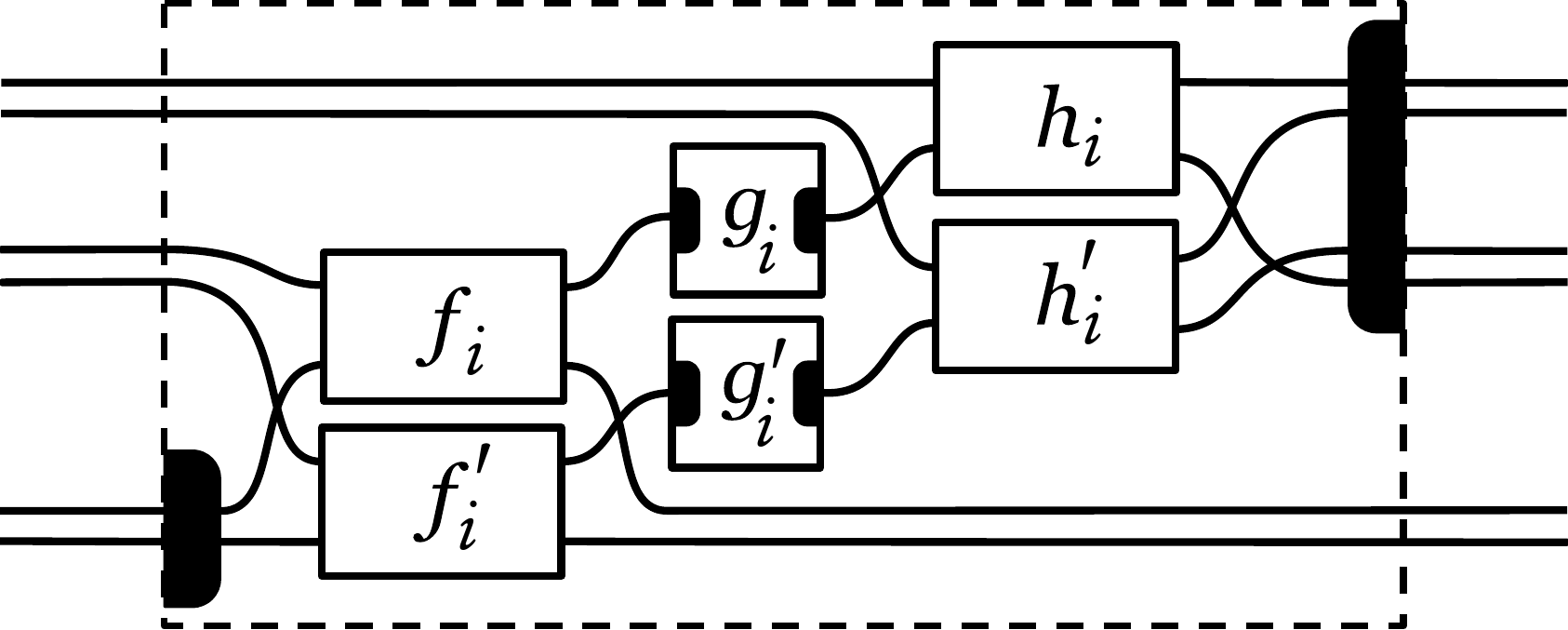}
\medskip
\end{displaymath}
where we make use of the fact that $g_i\tensor g'_i$ belongs to the guarded ideal,
for $g_i$ and $g_i'$ individually do. By induction over $n$ this implies that
the combination of $f$ and $f'$ figuring in~\textbf{(par${}_{\tensor}$)} rule 
has the specified format.

\subsection{Proof of Lemma~\ref{lem:induce}}

Consider a composite of~\eqref{eq:tguard_i} with a diagram of the form
\medskip
\begin{equation}\label{eq:tguard_cmp}
\vcenter{\hbox{\includegraphics[scale=.3]{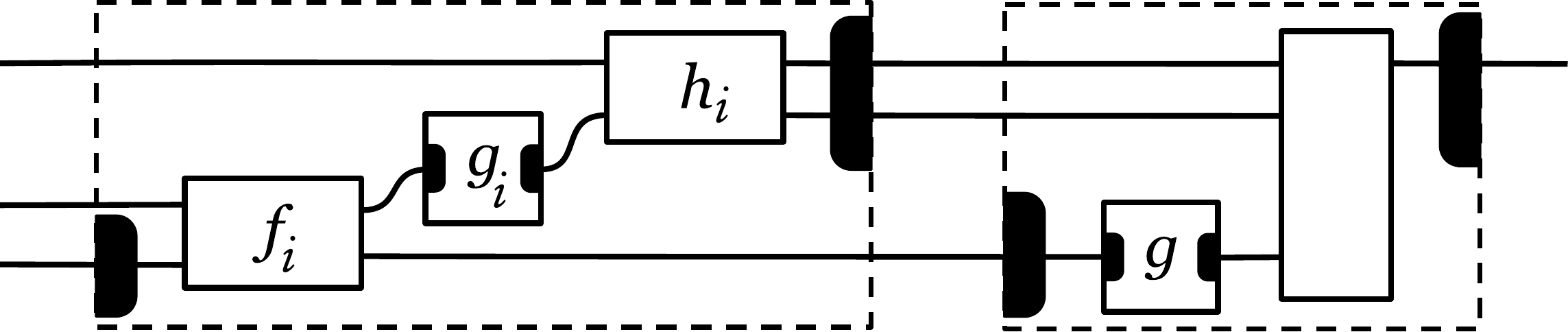}}}
\medskip
\end{equation}
We argue that this composite is equivalent to a diagram of the same form as
on the right of~\eqref{eq:tguard_cmp}. Indeed by the axioms of guarded ideals,
we can replace the tensor product of $g_i$ and $g$ with a single guarded morphism,
and then compose the result with $f_i$ to obtain another guarded morphism, say 
$h\in G(X,Y\tensor Z)$. By assumption, the latter can be represented as 
$e\comp (\hat h\tensor\id)$, i.e.\ in summary we obtain  
\medskip
\begin{displaymath}
\includegraphics[scale=.3]{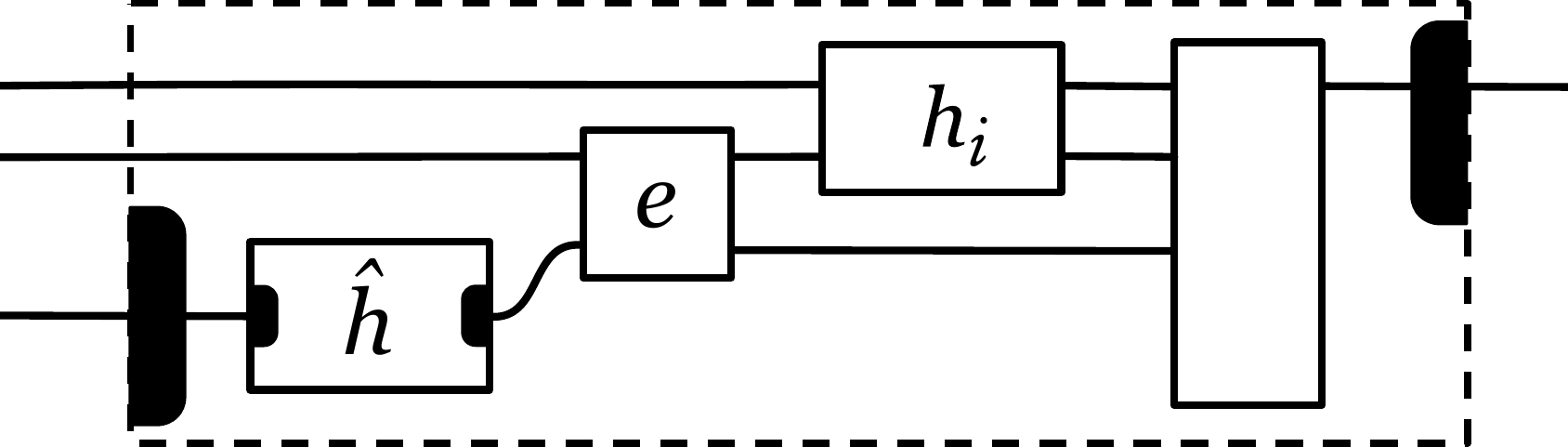}
\medskip
\end{displaymath}
This is clearly reducible to the a diagram in the same form as on the right of~\eqref{eq:tguard_cmp}.

Now, assuming a morphism $f\in\GHom(A\tensor B,C\tensor D)$ as defined in clause~(1) with 
$B=C=I$, note that $(q\tensor\id):(A_n\tensor B_n)\tensor D \to I\tensor D$ falls into
the format specified by the diagram on the right of~\eqref{eq:tguard_cmp} (one takes
$g=\id:I\to I$, which belongs to $G(I,I)$). By inductively applying the above 
argument we contract $f=(q\tensor\id)\comp w_n\comp\ldots w_1\comp (\id\tensor p)$
to the form
\medskip
\begin{displaymath}
\includegraphics[scale=.3]{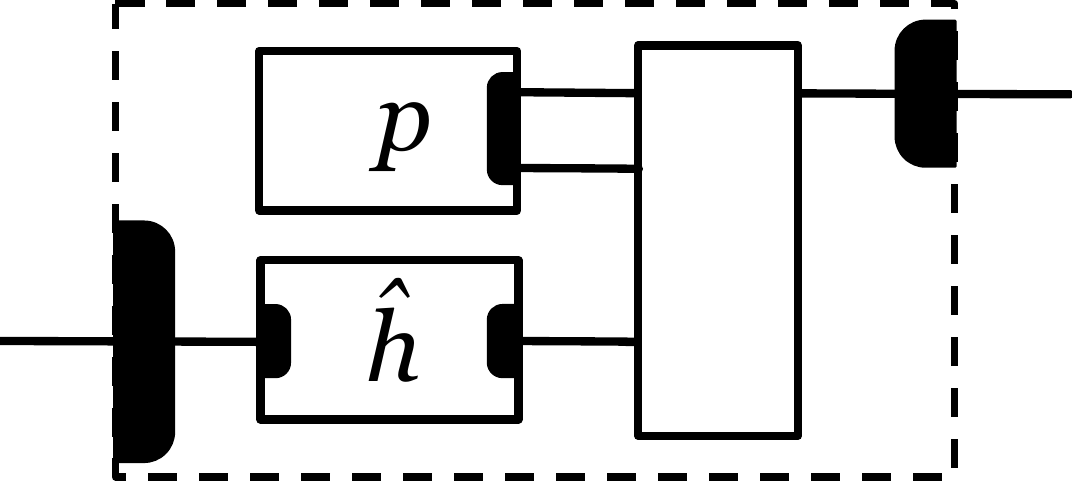}
\medskip
\end{displaymath}
Here $p$ is a guarded morphism because it factors through $\id:I\to I\in G(I,I)$. 
The obtained diagram clearly yields a guarded morphism, and we are done.
\qed

\subsection{Proof of Proposition~\ref{prop:triv-ideal}}

Immediate from Lemma~\ref{lem:ideally-guarded}.\ref{item:induced-less}
below and the assumption that $\BC$ is equipped with the least guarded
structure.

\begin{lem}\label{lem:ideally-guarded}
  Let $(\BC,\tensor,I)$ be a guarded category, with induced guarded
  ideal $G(X,Y)=\IHom(X,Y)$. Then
  \begin{enumerate}
  \item \label{item:induced-less} The guarded structure on $\BC$
    induced by $G$ is contained in the original one.
  \item \label{item:induced} If $\BC$ is ideally guarded, then $G$
    induces the guarded structure of $\BC$.
  \end{enumerate}

\end{lem}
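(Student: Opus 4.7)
My plan is to prove both clauses using Theorem~\ref{thm:t_to_p}, which gives a canonical form for morphisms in the guarded structure generated by a guarded ideal, together with the observation that the ``generation'' of a guarded structure from an ideal $G$ starts from treating each $g\in G(X,Y)$ as a fully partially-guarded morphism $\upsilon_Y^{\mone}\comp g\comp\hat\upsilon_X\in\GHom(X\tensor I,I\tensor Y)$.

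For clause~\ref{item:induced-less}, set $G=\IHom$, which is a guarded ideal by Lemma~\ref{lem:total}. By Theorem~\ref{thm:t_to_p}, every morphism in the guarded structure generated by $G$ factors in the canonical form exhibited there, with guarded building blocks $g_i$ drawn from $\IHom$. By definition of $\IHom$, each such $g_i$ (conjugated with unitors) already lies in the original $\GHom$. Using the rules of the original guarded structure — \textbf{(vac${}_{\tensor}$)} to treat the arbitrary $p$, $q$, $f_i$, $h_i$ as partially guarded in the trivial way, \textbf{(par${}_{\tensor}$)} to put each $g_i$ in parallel with surrounding unguarded wires, and \textbf{(cmp${}_{\tensor}$)} to concatenate the sections — together with the derivable weakening rule to rearrange guarded and unguarded gates as needed, the whole canonical composite lies in the original $\GHom(A\tensor B,C\tensor D)$. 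This gives the containment.

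For clause~\ref{item:induced}, assume $\BC$ is ideally guarded over some ideal $G'$. Clause~\ref{item:induced-less} already yields one containment, so it suffices to show that every $f\in\GHom(A\tensor B,C\tensor D)$ of the original structure also belongs to the structure generated by $\IHom$. I first check that $G'(X,Y)\subseteq\IHom(X,Y)$: for $g\in G'(X,Y)$, the generation of the original structure from $G'$ yields $\upsilon_Y^{\mone}\comp g\comp\hat\upsilon_X\in\GHom(X\tensor I, I\tensor Y)$, which is exactly the condition $g\in\IHom(X,Y)$. Given this inclusion, apply Theorem~\ref{thm:t_to_p} to~$G'$: $f$ has the canonical form with guarded pieces $g_i\in G'\subseteq\IHom$, and therefore the very same canonical form witnesses $f$ as a member of the guarded structure generated by $\IHom$. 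Combining the two containments gives equality.

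The main obstacle I anticipate is making clause~\ref{item:induced-less} precise: one must verify that the canonical composite from Theorem~\ref{thm:t_to_p}, with its middle wires threading guarded outputs of the $g_i$ forward and unguarded inputs backward through the $f_i$ and $h_i$, really does assemble into a derivable guardedness typing in the original structure. This is essentially a diagrammatic bookkeeping argument, but requires careful repeated invocations of \textbf{(par${}_{\tensor}$)}, \textbf{(cmp${}_{\tensor}$)} and weakening to align the gate decorations at each section boundary. The remainder of the argument is a clean consequence of Theorem~\ref{thm:t_to_p} together with the minimal observation $G'\subseteq\IHom$.
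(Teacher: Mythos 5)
Your proof is correct and in substance matches the paper's: both clauses reduce to the two observations that the structure generated by an ideal is the least guarded structure containing it (so clause~\ref{item:induced-less} follows because the original structure contains $\IHom$ by definition), and that any generating ideal $G'$ is contained in $\IHom$ (giving the reverse containment in clause~\ref{item:induced}). The only difference is presentational: where the paper invokes minimality of the generated structure directly, you route both containments through the normal form of Theorem~\ref{thm:t_to_p}, which is sound but heavier than necessary.
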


\begin{proof}
  \begin{enumerate}
  \item Immediate from the fact that the guarded structure induced by
    $G$ is the least one containing $G$.
  \item The given guarded ideal inducing the guarded structure of
    $\BC$ is contained in $G$, so the given guarded structure on $\BC$
    is contained in the one induced by
    $\BC$. Part~\ref{item:induced-less} then implies equality. \qed
  \end{enumerate}
\end{proof}
\subsection{Proof of Theorem~\ref{thm:cart-total}}
First of all, note that Lemma~\ref{lem:induce} applies to the case at 
hand, for any $f\in G(X+Y,Z)$ can be represented as follows 
$f = [f\inl,f\inr] = [\id,f\inr]\comp (f\inl+\id)$ where $f\inl\in G(X,Z)$, by 
closure properties of guarded ideals. It remains to prove (1) that the 
\begin{align}\label{eq:co-cart-guard}
	\Hom_{\inr}(X,Y+Z)= \{[\inl,g] h\mid g\in G(W, Y+Z), h:X\to Y+
  W\}
\end{align}
is a correct definition of a guardedness structure, and (2) that it is contained 
in the guardedness structure generated by $G$.
\begin{cenumerate}
 \item It suffices to verify closure under the rules on the right of 
Figure~\ref{fig:co-cart-g}. E.g.\ for~\textbf{(cmp${}_{\mplus}$)}, we have to prove that
$[[\inl,g'] u ,f]\comp [\inl,g] h\in\Hom_{\inr}(X,V+W)$ provided 
$g\in G(Y', Y+Z)$, $h:X\to Y+Y'$, $f:Z\to V+W$, $u:Y\to V+V'$, $g'\in G(V', V+W)$.
Indeed,
\begin{flalign*}
&& [[\inl,g'] u ,f]\comp [\inl,g] h
&  \;= [[\inl,g'] u,[[\inl,g'] u ,f]\comp g] h &~\\
&&&\;= [[\inl,g'],[[\inl,g'] u ,f]\comp g]\comp (u+\id)\comp h \\
&&&\;= [\inl, [g',[[\inl,g'] u ,f]\comp g]]\comp\alpha_{V,V',Y'} (u+\id)\comp h
\end{flalign*}
and the latter is in $\Hom_{\inr}(X,V+W)$ by definition and the fact that 
$[g',[[\inl,g'] u ,f]\comp g]\in G(V'+Y', V+W)$ by axioms of guarded ideals.
  \item By Proposition~\ref{prop:guard_equiv} the general form of a partially guarded morphism 
induced by~\eqref{eq:co-cart-guard} is $[[\inl,g]\comp h, u]:A+B\to C+D$ with $u:B\to C+D$,
$h:A\to C+D'$, $g:D'\to C+D$, which can be structured as follows:
\medskip
\begin{displaymath}
\includegraphics[scale=.3]{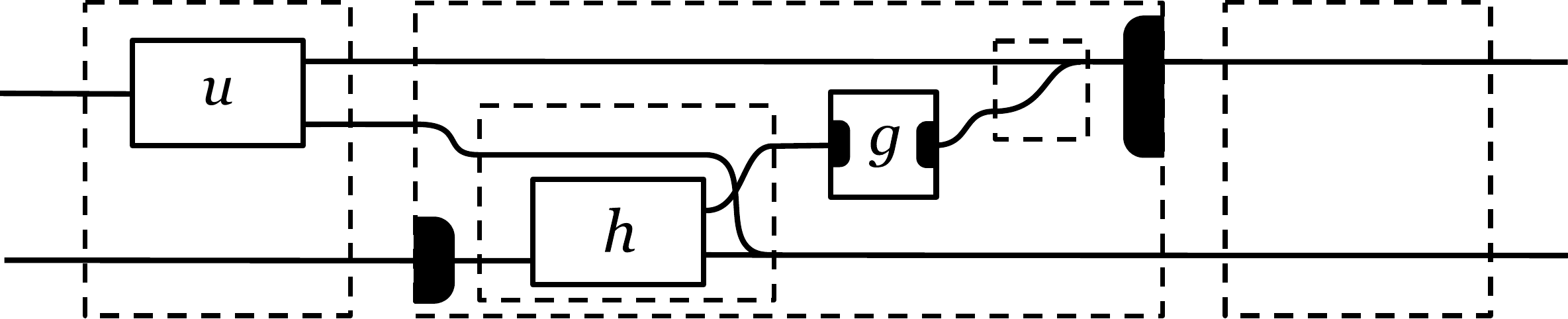}
\medskip
\end{displaymath}
and this indeed fits the format specified by Theorem~\ref{thm:t_to_p}.
\qed

\end{cenumerate}

\subsection{Proof of Corollary~\ref{cor:cart-total2}}

  Record first of all that $G$ is exponential iff
  $f\in G(X\times Y,Z)$ implies $\curry f\in G(X,Z^Y)$, for given
  $f\in G(X\times Y,Z)$, $\curry f = f^Y\comp\curry(\id_{X\times Y})$,
  and given $g\in G(X,Y)$, $g^V = \curry (g\comp\ev)$.

  The given construction produces a guarded category only if $G$ is
  exponential, for $f\in G(X\times Y,Z)$ must by weakening imply
  $f:X\times Y\to^{\pr_1} Z$, whence, by definition,
  $\curry f\in G(X,Z^Y)$.

  Conversely, suppose that $G$ is exponential. We proceed to show that
  the description of the guarded structure on $\BC$ according to
  Corollary~\ref{cor:cart-total1} is equivalent to the current one,
  which will finish the argument. On the one hand, if
  $\curry f\in G(X,Z^Y)$ then
  $f = \ev\comp(\curry f\times\id)= \ev\comp\brks{(\curry
    f)\fst,\snd}$,
  i.e.\ $f$ is $\pr_1$-guarded in the sense of
  Corollary~\ref{cor:cart-total1}; on the other hand, if
  $f=h\brks{g,\snd}$ for some $g\in G(X\times Y,W)$, then
  $\curry f= \curry (h\comp\brks{\ev,\snd})\comp (\curry g)\in
  G(X,Z^Y)$. \qed

\subsection{Proof of Proposition~\ref{prop:mult-guard}}
The axioms of guarded ideals are easy to check. As an example let us verify 
closedness under $\times$: given $f:A\to B$, $g:C\to D$,
\begin{flalign*}
f\oname{next}\times g\oname{next} 
=\brks{f\comp\oname{next}\fst,g\comp\oname{next}\snd}
=\brks{f\comp(\grd\fst),g\comp(\grd\snd)}\oname{next}.&\text{\qed}
\end{flalign*}
\subsection{Proof of Proposition~\ref{prop:ggcompl}}

Induction on $e$. All cases except the one for the trace operation are
analogous to Theorem~\ref{thm:gcompl}. So let $e$ have the form
$\tr_{A,B,C,D}(e')$ where
$e'\in\GHom((A\tensor U)\tensor B,C\tensor (D\tensor U))$. Every path
from an input gate in $A$ to an output gate in $D$ in the diagram of
$\tr_{A,B,C,D}(e')$ is also such a path in $e'$, hence guarded by
induction. The only new loops in the diagram of $\tr_{A,B,C,D}(e')$
are the ones generated by the current application of the trace
operator. Every such loop $\pi$ incorporates a path from an input gate
in $U$ to an output gate in $U$, which is guarded by induction; thus,
$\pi$ itself is guarded.

\subsection{Details for Example~\ref{expl:n-idl-counter} (Right Hand
  Diagram)}

To see that the necessary condition from
Proposition~\ref{prop:ggcompl} holds, note that both the loop
through~$f$ and~$g$ and the path from the unguarded input to the
guarded output of the diagram are guarded. We show that the diagram is
not induced by an expression for which the indicated overall
guardedness typing of the diagram (one unguarded input, one guarded
output) is derivable: The paths connecting the unguarded input and the
guarded output of the diagram with the loop preclude a derivation
using \textbf{(vac${}_{\tensor}$)}; the only way that remains is to
apply the rule for $\tr$. But both ways of cutting the loop (in either
case marking the newly open input gate of the diagram as unguarded and
the new output gate as guarded in order to enable application of
$\tr$) lead to diagrams that have an unguarded path from an unguarded
input to a guarded output, violating the necessary condition from
Proposition~\ref{prop:ggcompl}.

\subsection{Proof of Theorem~\ref{thm:ggcompl-conv}}
Induction on the number of loops in $\Delta$, with
Theorem~\ref{thm:gcompl} (plus the standard fact that,
disregarding guardedness, every acyclic diagram is induced by some
trace-free morphism expression) as the base case. The inductive step
is as follows.

Recall that there are only two types of basic boxes regarding their
decoration, the basic generic guards and boxes with only guarded
inputs and only unguarded outputs. In reference to the colour of the
decorations, we call the former \emph{black} and the latter
\emph{white}.

Let $U$ denote the set of nodes $n$ in $\Delta$ that have an unguarded
path from their inputs to some output gate in $D$ (i.e.\ the unguarded
path includes $n$ itself); dually, let $V$ denote the set of nodes in
$\Delta$ that have an unguarded path from some input gate in $A$ to
their outputs. By the simplified characterization of guarded paths in
ideally guarded diagrams, all nodes in $U\cup V$ must be white. 

Then the assumption implies that  
\begin{equation}\label{eq:uv-disjoint}
  U\cap V = \emptyset.
\end{equation}
Since we are in the inductive step, there exists a loop $\pi$ in
$\Delta$.  \medskip

\noindent\textbf{Claim 1:}  There is some wire $w$ belonging
to $\pi$ that connects an output gate $O$ of a basic box $f\notin V$
to an input gate $I$ of a basic box $g\notin U$.\medskip

\noindent To see this, assume for a contradiction that $w$ fails to
exist, i.e.\ every wire in~$\pi$ is attached either to an output of a
box in $V$ or to an input of a box in $U$. Pick some wire $v$ on
$\pi$, and assume w.l.o.g.\ that $v$ is attached to an input of a box
in~$U$. Then by~\eqref{eq:uv-disjoint}, the same must hold for the
next wire on $\pi$. Continuing around the loop, we find that all boxes
on $\pi$ are in $U$, in particular are white, contradicting the
assumption that $\pi$ is guarded. This proves Claim~1.\medskip

\noindent Now take $w$ as in Claim~1. Briefly, we can cut $w$, apply
the inductive assumption and then reintroduce~$w$ by means of the
trace operator. In detail, let the diagram~$\Delta'$ arise from
$\Delta$ by cutting $w$, let $A'$ consist of the gates in $A$ and the
newly open input gate~$I$, and let $D'$ consist of the gates in $D$
and the newly open output gate $O$. Now since $f\notin V$ and
$g\notin U$, every path $\pi'$ from an input gate in~$A'$ to an output
gate in $D'$ falls within one of the following cases.
\begin{itemize}
\item $\pi'$ runs from a gate in $A$ to a gate in $D$. Since $\pi'$ is
  already present in $\Delta$, $\pi'$ is then guarded by assumption.
\item $\pi'$ runs from $I$ to $O$. Then the nodes of $\pi'$ form a loop
  in $\Delta$, so that $\pi'$ is guarded by assumption.
\item $\pi'$ runs from $I$ to a gate in $D$. Since $g\notin U$, $\pi'$
  is guarded.
\item Dually, $\pi'$ is guarded if it runs from a gate in $A$ to $O$.
\end{itemize}
Finally, all loops in $\Delta'$ are already present in $\Delta$, hence
guarded by assumption. 

By the inductive hypothesis, we therefore have
$e\in\GHom(A'\tensor B,C\tensor D')=\GHom((A\tensor U)\tensor
B,C\tensor (D\tensor U))$
inducing $\Delta'$, where $U$ is the joint type of $I$ and $O$. Then,
$\Delta$ is induced the expression $\tr(e')$.\qed

\subsection{Proof of Lemma~\ref{lem:iter}}
\begin{cenumerate}
\item Suppose, $(\argument)^\istar$ is uniform. Given, $f:X\to_{\inr} Y+X$,
let 
\begin{align*}
w = [\inl (\id+\inr) f, (\inl+\inl) f] : X+X\to_{\inr} (Y+(X+X))+(X+X)).
\end{align*}
We are going to show that $([\inl,f] f)^\istar=w^{\istar\istar}\inr$
and $([\id,\inr] w)^\istar\inr = f^\istar$ which implies the identity in question
by the codiagonal axiom. On the one hand,
\begin{flalign*}
&&w^\istar\inr 
=&\; [\id,w^\istar] w\inr &\by{fixpoint}\\
&&=&\; [\id,w^\istar] (\inl+\inl) f\\
&&=&\; [\inl,w^\istar\inl] f\\
&&=&\; [\inl,[\id,w^\istar] w\inl] f&\by{fixpoint}\\
&&=&\; [\inl,[\id,w^\istar] \inl  (\id+\inr) f] f\\
&&=&\; [\inl, (\id+\inr) f] f\\
&&=&\; (\id+\inr) [\inl, f] f.
\end{flalign*}
By uniformity this implies $w^{\istar\istar}\inr = ([\inl, f] f)^\istar$.
To show $([\id,\inr] w)^\istar\inr = f^\istar$, observe that
\begin{align*}
([\id,\inr] w)^\istar\inr 
=&\; [(\id+\inr) f, [\inl,\inr\inl] f]^\istar \inr\\
=&\; [(\id+\inr) f, (\id+\inl) f]^\istar \inr.
\end{align*}
Since 
\begin{align*}
(\id+\nabla)\comp [(\id+\inr) f, (\id+\inl) f]=[f,f]=f\nabla,
\end{align*}
by uniformity, $[(\id+\inr) f, (\id+\inl) f]^\istar = f^\istar\nabla$,
and therefore $([\id,\inr] w)^\istar\inr = f^\istar\nabla\inr = f^\istar$.
\item Assume that $(\argument)^{\istar}$ is squarable
and uniform w.r.t.\ coproduct injections. Consider $g : X \to_{\inr} Y + Z$ and $f:Z\to Y+X$ 
(and omit the analogous symmetric option $g : X \to Y + Z$ and $f:Z\to_{\inr} Y+X$). 
We introduce
\begin{align*}
h = [(\id + \inr)\comp g, (\id + \inl) \comp f] : X + Z \to_{\inr} (Y+(X+Z))
\end{align*}
where the guardedness annotation is provable by~\textbf{(cmp${}_{\mplus}$)} 
and~\textbf{(par${}_{\mplus}$)} 
and apply the squaring identity to it. Thus, 
$h^\istar = [[\inl, (\id+\inl)\comp f]\comp g, [\inl, \comp (\id+\inr)\comp g]\comp f]^\istar$. Since 
$[[\inl, (\id+\inl)\comp f]\comp g, [\inl, \comp (\id+\inr)\comp g]\comp f]\inl = (\id+\inl)\comp[\inl, f]\comp g$, by uniformity 
this implies $h^\istar\inl = ([\inl, f]\comp g)^\istar$ and, analogously, 
 $h^\istar\inr = ([\inl, g]\comp f)^\istar$. Now,
\begin{align*}
 ([\inl, f]\comp g)^\istar = h^\istar\inl = [\id,h^\istar]\comp h\inl = [\id,h^\istar\inr] g = [\id, ([\inl, g]\comp f)^\istar] g,
\end{align*}
and we are done. 
 \item Assume that $(\argument)^{\istar}$ is Conway. First, we show uniformity w.r.t.\
isomorphisms. Let $f:X\to_{\inl} Y+X$ and let for some $i:X\to X'$, $j:X'\to X$,
$i\comp j =\id$, $j\comp i = \id$, $(\id + i)\comp f = g\comp i$. Then
\begin{flalign*}
&&f^\istar 
=&\; (f\comp j\comp i)^\istar\\
&&=&\; ((\id+i)\comp f\comp j)^\istar\comp i&\by{dinaturality}\\
&&=&\; (g\comp i\comp j)^\istar\comp i\\
&&=&\; g^\istar\comp i.
\end{flalign*} 
We proceed with the proof of the general case, and now we can stick w.l.o.g.\
to the coproduct injections of the form $\inl:X\to X+Y$. Suppose that for some 
$f:X\to_{\inl} Z+X$ and $h:X+Y\to_{\inl} Z+(X+Y)$, $(\id+\inl)\comp f = h\comp\inl$ and show that
$f^\istar = h^\istar\comp\inl$. Note that by assumption, 
$h = [(\id+\inl)\comp f, g]$ where $g = h\inr$, so we only need to show
\begin{align}\label{eq:inj_uni}
[(\id+\inl)\comp f, g]^\istar\inl = f^\istar.
\end{align}
First we tackle the following special case of~\eqref{eq:inj_uni}:
\begin{align}\label{eq:inj_uni1}
[(\id+\inl)\comp f, (\id+\inl)\comp g]^\istar\inl = f^\istar
\end{align}
We have
\begin{flalign*}
&& [(\id+\inl)\comp f,& (\id+\inl)\comp g]^\istar \inl\\
&&=&\;  ((\id+\inl)\comp [f,g])^\istar \inl &\by{dinaturality}\\
&&=&\;  ([f,g]\inl)^\istar\\
&&=&\;  f^\istar.
\end{flalign*}

Finally, let us show~\eqref{eq:inj_uni} in the general form.
Let $w = [\inl(\id+\inl)f, (\inl +\id)\comp g]: X+Y \to_{\inr} (Z + (X+Y)) + (X+Y)$. It 
is straightforward that $([\id,\inr]\comp w)^\istar = [(\id+\inl)\comp f, g]^\istar$,
so, by the codiagonal identity, we are left to show that $w^{\istar\istar}\inl = f^\istar$.
Indeed,
\begin{flalign*}
&& w^{\istar\istar}\inl
=&\; (([\inl,(\id+\inl) f] +\id)\comp [\inl\inr, (\inl + \id)\comp g])^{\istar\istar}\inl\\
&&=&\; ([\inl,(\id+\inl) f] \comp [\inl\inr, (\inl + \id)\comp g]^{\istar})^{\istar}\inl&\by{naturality}\\
&&=&\; f^\istar. &\by{\eqref{eq:inj_uni1}}
\end{flalign*}
and we are done.\qed
\end{cenumerate}

\subsection{Proof of Theorem~\ref{thm:grec_from_rec}}
We denote the strength of~$T$ by
$\tau_{X,Y}:X\times TY\to T(X\times Y)$, and its transpose
$(T\gamma_{Y,X})\comp\tau_{Y,X}\comp\gamma_{TX,Y}$ by
$\hat\tau_{X,Y}:TX\times Y\to T(X\times Y)$.

We need the following technical fact. 
\begin{lem}\label{lem:par_ext}
  Let $\BC$ be a Cartesian category, and let $\BBT$ be a strong
  monad on~$\BC$.  Suppose $(\argument)_{\iistar}$ is a recursion
  operator on $\BC^\BBT_{\star}$ satisfying
  naturality. Then~$(\argument)_{\iistar}$ extends to morphisms of the
  form $f:V\times A\to A$ with $(A,\algebra)\in |\BC^\BBT_\star|$ and
  $V\in |\BC|$. The extended operator
  satisfies
  $f_{\iistar} = (\algebra (Tf)\comp\hat\tau:TV\times A\to A)_{\iistar}\comp\eta$ for $f:V\times A\to A$.
\end{lem}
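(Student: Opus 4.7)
The plan is to take the displayed equation as the \emph{definition} of the extended operator and then verify that on the overlap $V\in|\BC^\BBT|$ it reproduces the original $(\argument)_{\iistar}$. Two ingredients will do all the work: the standard strength/unit coherence $\hat\tau_{V,A}\comp(\eta_V\times\id_A)=\eta_{V\times A}$ for the (transposed) strength $\hat\tau$, together with the assumed parameter-naturality of the original operator with respect to arbitrary $\BC^\BBT_\star$-morphisms in the parameter slot.

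First I would dispose of well-definedness. For every $V\in|\BC|$, the free $\BBT$-algebra $(TV,\mu_V)$ is an object of $\BC^\BBT_\star$, so $\tilde f:=\algebra\comp Tf\comp\hat\tau_{V,A}\colon TV\times A\to A$ is a $\BC$-morphism between underlying objects of two algebras, and hence a legitimate arrow of $\BC^\BBT_\star$ (which takes all $\BC$-morphisms as arrows). The original operator therefore produces $\tilde f_{\iistar}\colon TV\to A$, and we set $f_{\iistar}:=\tilde f_{\iistar}\comp\eta_V$.

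The only real content is to check consistency when $V$ itself already carries an algebra structure $(V,b)$, so that the original $(\argument)_{\iistar}$ is defined on $f$. Using the strength coherence, naturality of $\eta$, and the Eilenberg--Moore unit law $\algebra\comp\eta_A=\id_A$, one computes
\[
  \tilde f\comp(\eta_V\times\id_A)
  = \algebra\comp Tf\comp\hat\tau_{V,A}\comp(\eta_V\times\id_A)
  = \algebra\comp Tf\comp\eta_{V\times A}
  = \algebra\comp\eta_A\comp f
  = f.
\]
Since $\eta_V\colon V\to TV$ is a $\BC$-morphism between two algebras, parameter-naturality of the original $(\argument)_{\iistar}$ gives $(\tilde f\comp(\eta_V\times\id_A))_{\iistar}=\tilde f_{\iistar}\comp\eta_V$. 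The left-hand side is the original $f_{\iistar}$, the right-hand side is the extended $f_{\iistar}$, so the two definitions agree.

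I do not expect any serious obstacle: the whole argument reduces to the observation that $\tilde f$ is, via $\hat\tau(\eta\times\id)=\eta$, the canonical free-algebra extension of $f$ along $\eta_V\times\id_A$, and naturality of $(\argument)_{\iistar}$ then forces the formula. Should the subsequent use of the lemma (in the proof of Theorem~\ref{thm:grec_from_rec}) require the extension to inherit further properties such as the fixpoint identity, naturality on all of $\BC$, squarability, or Conwayness, each of these follows by an analogous short calculation on $\tilde f$, invoking again only $\hat\tau\comp(\eta\times\id)=\eta$, $\algebra\comp\eta=\id$, and the corresponding property of the original $(\argument)_{\iistar}$.
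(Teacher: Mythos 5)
Your proposal is correct and follows essentially the same route as the paper: define the extension by $f_{\iistar}:=(\algebra\comp(Tf)\comp\hat\tau)_{\iistar}\comp\eta$ (legitimate since $(TV,\mu)$ is a free algebra in $\BC^\BBT_\star$), and verify agreement on the overlap via parameter-naturality applied to $\eta$, together with $\hat\tau\comp(\eta\times\id)=\eta$, naturality of $\eta$, and the algebra unit law — exactly the paper's chain of equalities, merely reorganized.
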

\begin{proof}
Given a monad algebra $(A,\algebra)$ for $T$ and $f:V\times A\to A$ as in the claim,  put $f_{\iistar} = (\algebra\comp (Tf)\comp\hat\tau)_{\iistar}\comp\eta$,
where the application of the original~$(\argument)_{\iistar}$ on the right hand side is defined  because it involves the free algebra $(TV,\mu)$
instead of $V$. We have to check that this definition agrees with the original one on $\BC^\BBT_\star$.
So let $f:B\times A\to A$ with $(B,\beta)\in |\BC^\BBT_\star|$. Then 
\begin{align*}
(\algebra\comp (Tf)\comp\hat\tau)_{\iistar}\comp\eta
&\;= (\algebra\comp (Tf)\comp\hat\tau\comp (\eta\times\id))_{\iistar}\\
&\;= (\algebra\comp (Tf)\comp\eta)_{\iistar}\\
&\;= (\algebra\eta\comp f)_{\iistar}\\
&\;= f_{\iistar}.
\end{align*}
This argument also shows that the extended operator satisfies $f_{\iistar} = (\algebra (Tf)\comp\hat\tau)_{\iistar}\comp\eta$.
\qed
\end{proof}

\noindent The proof of Theorem~\ref{thm:grec_from_rec} then proceeds
as follows.  Let us first check that the definition of
$(\argument)_{\istar}$ via $(\argument)_{\iistar}$ does not depend on
the factorization of $f$ as $g\comp (\id\times\eta)$.  Suppose
$f=g'\comp (\id\times\eta)=g\comp (\id\times\eta)$. Then
\begin{flalign*}
&&g\brks{\id,(\eta g)_{\iistar}}
=&\; g\brks{\id,(\eta g\comp\brks{\fst,\eta g})_{\iistar}} &\by{squaring}\\
&&=&\; g\brks{\id,(\eta g'\comp\brks{\fst,\eta g})_{\iistar}}&\by{assumption}\\
&&=&\; g\brks{\id,\eta g'\comp\brks{\id,(\eta g\comp\brks{\fst,\eta g'})_{\iistar}}}&\by{dinaturality}\\
&&=&\; g'\brks{\id,\eta g'\comp\brks{\id,(\eta g'\comp\brks{\fst,\eta g'})_{\iistar}}}\qquad& \by{assumption (twice)}\\
&&=&\; g'\brks{\id,\eta g'\comp\brks{\id,(\eta g')_{\iistar}}}&\by{squaring}\\
&&=&\; g'\brks{\id,(\eta g')_{\iistar}}&\by{fixpoint}
\end{flalign*}

Let us check that the mutual transformations between $(\argument)_\iistar$ and 
$(\argument)_\istar$ are mutually inverse.
\begin{flalign*}
 \intertext{$(\argument)_\iistar\to (\argument)_\istar\to (\argument)_\iistar$: given $f:B\times A\to A$ with $(A,\algebra)\in |\BC^{\BBT}|$,}
&&\algebra\eta\comp f\comp  (\id\times\algebra(T&\algebra))\comp\brks{\id,(\eta\eta\comp f\comp (\id\times\algebra (T\algebra)))_{\iistar}}\\ 
&&=&\;f\comp  \brks{\id,\algebra(T\algebra)\comp(\eta\eta\comp f\comp (\id\times\algebra (T\algebra)))_{\iistar}}\\ 
&&=&\;f\comp  \brks{\id,\algebra(T\algebra)\eta\eta\comp(f\comp (\id\times\algebra (T\algebra)\eta\eta))_{\iistar}} &\by{dinaturality}\\
&&=&\;f\comp  \brks{\id, f_{\iistar}}\\
&&=&\; f_{\iistar}.&\by{fixpoint}
\intertext{$(\argument)_\istar\to (\argument)_\iistar\to (\argument)_\istar$: given $f=g\comp (\id\times\eta):Y\times X\to X$,}
&&g\brks{\id,\mu\comp(\eta\eta g(\id&\times\mu))_{\istar}}\\
&&=&\;g\brks{\id,\mu\eta\eta\comp(g(\id\times\mu\eta\eta))_{\istar}} &\by{dinaturality}\\
&&=&\;g\brks{\id,\eta\comp(g\comp (\id\times\eta))_{\istar}}\\
&&=&\;(g\comp (\id\times\eta))_{\istar}&\by{fixpoint}\\
&&=&\;f_\istar.
\end{flalign*}
Next, let us verify that the properties of fixpoints transfer along the transitions
$(\argument)_\iistar\to (\argument)_\istar$ and $(\argument)_\istar\to (\argument)_\iistar$.
It suffices to handle the fixpoint, naturality, squaring, and diagonal laws. Consider the transition
$(\argument)_\istar\to (\argument)_\iistar$.
\begin{flalign*}
\intertext{\emph{fixpoint:}}
&&f_{\iistar}
=&\; \algebra\comp(\eta f(\id\times\algebra))_{\istar}\\
&&=&\; f(\id\times\algebra)\brks{\id,(\eta f(\id\times\algebra))_{\istar}}&\by{fixpoint}\\
&&=&\; f\brks{\id,\algebra\comp(\eta f(\id\times\algebra))_{\istar}}\\
&&=&\; f\brks{\id,f_\istar}\\
\intertext{\emph{naturality:}}
&&f_{\iistar}\comp g
=&\; \algebra\comp(\eta f(\id\times\algebra))_{\istar}\comp g\\
&&=&\; \algebra\comp(\eta f(g\times\algebra))_{\istar}&\by{naturality}\\
&&=&\; (f\comp (g\times\id))_{\iistar}
\intertext{\emph{squaring:}}
&&f_{\iistar}
=&\; \algebra\comp(\eta f(\id\times\algebra))_{\istar}\\
&&=&\; \algebra\comp(\eta f(\id\times\algebra))_{\istar}\\
&&=&\; \algebra\comp(\eta f(\id\times\algebra)\comp\brks{\fst,\eta f(\id\times\algebra)})_{\istar}&\by{squaring}\\
&&=&\; \algebra\comp(\eta f\comp\brks{\fst,f(\id\times\algebra)})_{\istar}\\
&&=&\; \algebra\comp(\eta f\comp\brks{\fst, f}\comp(\id\times\algebra))_{\istar}\\
&&=&\; (f\comp\brks{\fst, f})_{\iistar}.
\intertext{\emph{dinaturality:}}
&&(g\comp\brks{\fst,h})_{\iistar} 
=&\; \algebra\comp(\eta g\comp\brks{\fst,h}(\id\times\algebra))_{\istar}\\
&&=&\; \algebra\eta\comp g\comp\brks{\id,(h\comp(\id\times\algebra)\comp\brks{\fst, \eta g})_{\istar}}&\by{dinaturality}\\
&&=&\; g\comp\brks{\id,(h\comp\brks{\fst,g})_{\istar}} \\
&&=&\; g\comp\brks{\id,(\algebra\eta h\comp\brks{\fst,g})_{\istar}} \\
&&=&\; g\comp\brks{\id,\algebra\comp(\eta h\comp\brks{\fst,g}(\id\times\algebra))_{\istar}} &\by{dinaturality}\\
&&=&\; g\comp\brks{\id,(h\comp\brks{\fst,g})_{\iistar}}. 
\intertext{\emph{diagonal:}}
&&(f\comp\brks{\id,\snd})_{\iistar} 
=&\; \algebra\comp(\eta f\comp\brks{\id,\snd}(\id\times\algebra))_{\istar}\\
&&=&\; \algebra\comp(\eta f\comp\brks{(\id\times\algebra),\algebra\snd})_{\istar}\\
&&=&\; \algebra\comp(\eta f\comp((\id\times\algebra)\times\algebra)\comp\brks{\id,\snd})_{\istar}\\
&&=&\; \algebra((\eta f\comp((\id\times\algebra)\times\algebra))_{\istar\istar}&\by{diagonal}\\
&&=&\; \algebra((\eta f\comp(\id\times\algebra))_{\istar}\comp(\id\times\algebra))_{\istar}&\by{naturality}\\
&&=&\; \algebra\comp\eta\algebra((\eta f\comp(\id\times\algebra))_{\istar}\comp(\id\times\algebra\eta\algebra))_{\istar}\\
&&=&\; \algebra\comp(\eta\algebra(\eta f\comp(\id\times\algebra))_{\istar}\comp(\id\times\algebra))_{\istar}&\by{dinaturality}\\
&&=&\;f_{\iistar\iistar}.
\end{flalign*}
Next, consider the transition $(\argument)_\iistar\to (\argument)_\istar$.
\begin{flalign*}
\intertext{\emph{fixpoint:}}
&&(g\comp (\id\times\eta))_{\istar}
=&\; g\brks{\id,(\eta g)_{\iistar}}\\
&&=&\; g\brks{\id,\eta g\brks{\id,(\eta g)_{\iistar}}}&\by{fixpoint}\\
&&=&\; (g\times\eta) \brks{\id,g\brks{\id,(\eta g)_{\iistar}}}\\
&&=&\; (g\times\eta) \brks{\id,(g\comp (\id\times\eta))_{\istar}}\\
\intertext{\emph{naturality:}}
&&(g\comp (\id\times\eta))_{\istar}\comp f
=&\; g\brks{\id,(\eta g)_{\iistar}}\comp f\\
&&=&\; g\brks{f,(\eta g\comp (f\times\id))_{\iistar}}&\by{naturality}\\
&&=&\; g\comp (f\times\id) \brks{\id,(\eta g\comp (f\times\id))_{\iistar}}\\
&&=&\; (g\comp (f\times\eta))_{\istar}
\intertext{\emph{squaring:}}
&&(g\comp (\id\times\eta))_{\istar}
=&\; g\brks{\id,(\eta g)_{\iistar}}\\
&&=&\; g\brks{\id,\eta g\comp\brks{\id,(\eta g)_{\iistar}}}&\by{fixpoint}\\
&&=&\; g\brks{\id,\eta g\comp\brks{\id,(\eta g\comp\brks{\fst,\eta g})_{\iistar}}}&\by{squaring}\\
&&=&\; g\comp\brks{\fst, \eta g}\brks{\id,(\eta g\comp\brks{\fst, \eta g})_{\iistar}}\\
&&=&\; (g\comp\brks{\fst, \eta g}\comp (\id\times\eta))_{\istar}\\
&&=&\; (g\comp (\id\times\eta)\comp\brks{\fst, g\comp (\id\times\eta)})_{\istar}.
\intertext{\emph{dinaturality:}}
&&(g\comp\brks{\fst,h\comp (\id&\times\eta)})_{\istar}\\ 
&&=&\; (g\comp\brks{\fst,h}\comp  (\id\times\eta))_{\istar} \\
&&=&\; g\comp\brks{\fst,h}\brks{\id,(\eta g\comp\brks{\fst,h})_{\iistar}}\\
&&=&\; g\comp\brks{\id,h\brks{\id,((Tg)\comp\tau\comp\brks{\fst,\eta h})_{\iistar}}}\\
&&=&\; g\comp\brks{\id,h\brks{\id,(Tg)\comp\tau\comp\brks{\id, (\eta h\comp\brks{\fst,(Tg)\comp\tau})_{\iistar}}}}\quad &\by{dinaturality}\\
&&=&\; g\comp\brks{\id, h\comp  \brks{\fst, (Tg)\comp\tau}\comp (\eta h\comp  \brks{\fst, (Tg)\comp\tau})_{\iistar}} \\
&&=&\; g\comp\brks{\id, (h\comp  \brks{\fst, (Tg)\comp\tau}\comp (\id\times\eta))_{\istar}} \\
&&=&\; g\comp\brks{\id, (h\comp  \brks{\fst, \eta g})_{\istar}} \\
&&=&\; g\comp\brks{\id, (h\comp  (\id\times\eta)\comp\brks{\fst, g})_{\istar}}.
\end{flalign*}
(the case of $g\comp (\id\times\eta)$ and $h$ instead of $g$ and $h\comp (\id\times\eta)$
is analogous.)

We are left to check the diagonal identity. First, we do it in the non-parametrized
case, i.e.\ for the morphisms of type $g:1\times T(X\times X)\to X$. We identify
such $g$ for brevity with $g:T(X\times X)\to X$ with the implied simplification 
of notation therefore. Our goal thus is the equation 
\begin{align*} %
(g\comp\eta\Delta)_\istar = ((g\comp\eta)_\istar)_\istar.
\end{align*}
We transform the left and the right hand sides as follows:
\begin{flalign*}
&&((g\comp\eta)_\istar)_\istar 
=&\; ((g\comp\tau\comp(\id\times\eta))_\istar)_\istar \hspace{25ex}{}\\
&&=&\; (g\comp\tau\comp\brks{\id,(\eta g\comp\tau)_\iistar})_\istar &\by{definition of $(\argument)_{\iistar}$}\\
&&=&\; (g\comp\hat\tau^\klstar\tau\comp\brks{\eta,(\eta g\comp\hat\tau^\klstar\tau\comp (\eta\times\id))_\iistar})_\istar \\
&&=&\; (g\comp\hat\tau^\klstar\tau\comp\brks{\eta,(\eta g\comp\hat\tau^\klstar\tau)_\iistar\comp\eta})_\istar &\by{naturality}\\
&&=&\; (g\comp\hat\tau^\klstar\tau\comp\brks{\id,(\eta g\comp\hat\tau^\klstar\tau)_\iistar}\comp\eta)_\istar \\
&&=&\; (g\comp\hat\tau^\klstar\tau\comp\brks{\id,\eta g(\hat\tau^\klstar\tau\comp (\id\times \eta g))_\iistar}\comp\eta)_\istar  &\by{dinaturality}\\
&&=&\; (g\comp\hat\tau\comp\brks{\id, g(\hat\tau\comp (\id\times g))_\iistar}\comp\eta)_\istar \\
&&=&\; (g\comp\hat\tau\comp(\id\times g)\brks{\id, (\hat\tau\comp (\id\times g))_\iistar}\comp\eta)_\istar \\
&&=&\; (g\comp(\hat\tau\comp (\id\times g))_\iistar\comp\eta)_\istar  &\by{fixpoint}\\
&&=&\; g\comp(\hat\tau\comp (\id\times g))_\iistar(\eta g\comp(\hat\tau\comp (\id\times g))_\iistar)_\iistar &\by{definition of $(\argument)_{\iistar}$}\\
&&=&\; g\comp(\hat\tau\comp (\id\times g))_\iistar\eta g\comp ((\hat\tau\comp (\id\times g))_\iistar\comp \eta g)_\iistar &\by{dinaturality}\\
&&=&\; g\comp(\hat\tau\comp (\eta g\times g))_\iistar\comp ((\hat\tau\comp (\eta g\times g))_\iistar)_\iistar &\by{naturality}\\
&&=&\; g\comp(\eta\comp (g\times g))_\iistar\comp ((\eta ( g\times g))_\iistar)_\iistar \\
&&=&\; g\comp ((\eta ( g\times g))_\iistar)_\iistar&\by{fixpoint}\\[1.5ex]
&&(g\comp\eta\Delta)_\istar 
=&\; g\comp (T\Delta)\comp (\eta g\comp (T\Delta))_\iistar \\
&&=&\; g\comp ((T\Delta)\comp \eta g)_\iistar &\by{dinaturality}\\
&&=&\; g\comp (\eta\comp (g\times g)\comp\Delta)_\iistar.
\end{flalign*}
The computed vales are equal by the diagonal axiom for $(\argument)_{\iistar}$.

To extend this calculation to the parametrized case, we observe that
the whole situation lifts to the co-Kleisli category $\BC_Y$ for the
comonad $(-)\times Y$ for any parametrizing object $Y\in|\BC|$. In
more detail, the monad $\BBT$ lifts to a strong monad $\bar\BBT$ on
$\BC_Y$ by means of the strength, with $\bar Tf=Tf\hat\tau$ and all
other components of the monad structure (unit, multiplication,
strength) arising by precomposition with $\fst$. The monad and
strength laws are checked straightforwardly in the internal languages
of $\BBT$~\cite{Moggi91}. Then a co-Kleisli morphism $X\times Z\to W$
is guarded in $X$ iff the corresponding $\BC$-morphism
$X\times Z\times Y\to W$ is guarded in $X$, using the simplified
description from Proposition~\ref{prop:mult-guard}, which applies here
because $\BBT$ is strong. A recursion operator on $\BC^\BBT_\star$ is
then essentially the same as a recursion operator on the full
subcategory $\BC^{\bar\BBT}_{Y,\star}$ of $\BC^{\bar\BBT}_{Y}$ spanned
by the objects in $\BC^\BBT_\star$ -- the only difference is that the
latter operator has a parameter of type $Y$ hardwired into the base
category, and recursion operators are parametrized to begin
with. (Going in the other direction, the parameter $Y$ can just be
projected out.) The same applies to guarded recursive operators on
$\BC$ and $\BC_Y$, respectively, thanks to the previous observation
that guardedness is the same in $\BC$ and in $\BC_Y$. Equational laws
of recursion operators transfer between $\BC$ and $\BC_Y$ in the
process, as the operators essentially do not change. By the previous
calculations applied to $\BC_Y$, we obtain that $(-)_\dagger$
satisfies the unparametrized diagonal law in $\BC_Y$. But this implies
that $(-)_\dagger$ satisfies the parametrized diagonal law in
$\BC$. We do this last step in more detail: Let
$f:X\times X\times Y \to X$ in $\BC$, guarded in the first two
arguments. Then $f:X\times X\to X$ in $\BC_Y$ (guarded in both
arguments), so by the unparametrized diagonal law in $\BC_Y$, we have
$(f_\dagger)_\dagger=(f\bar\Delta)_\dagger$ in $\BC_Y$, where
$\bar\Delta:X\to X\times X$ is the diagonal in $\BC_Y$. The left hand
side is the same as $(f_\dagger)_\dagger$ in $\BC$. Translating the
right-hand side into $\BC$, we have $\bar\Delta=\Delta\fst$ in $\BC$
where $\Delta:X\to X\times X$ is the diagonal in $\BC$, and the
composite $f\bar\Delta$ in $\BC_Y$ then becomes, expanding the
definition of co-Kleisli composition, $f\langle\Delta\fst,\snd\rangle$
in $\BC$. Again noting that $(-)_\dagger$ is the same in $\BC_Y$ as in
$\BC$, we have shown that
$(f_\dagger)_\dagger=(f\langle\Delta\fst,\snd\rangle)_\dagger=(f(\Delta\times\id))_\dagger$
in $\BC$, which is the (parametrized) diagonal law up to rebracketing
of the product $X\times X\times Y$.  \qed

\subsection{Proof of Proposition~\ref{prop:trivial-trace}}

It is clear that by the assumption in the statement, the trace
operation is well-defined. It remains to check the equational axioms
given in Figure~\ref{fig:gtrace}. We argue diagrammatically, using the
standard coherence result for symmetric monoidal categories. Note
first that in all equations in Figure~\ref{fig:gtrace}, the two sides
constitute isomorphic diagrams (abstracting away the dotted boxes, and
recalling that the dotted line in Vanishing I connects gates of type
$I$ and hence is, for diagrammatic purposes, not really there). Now
replace all basic (i.e.\ solid) boxes in the diagrams by the pattern
for vacuous guardedness,
\begin{center}
  \includegraphics[scale=.3]{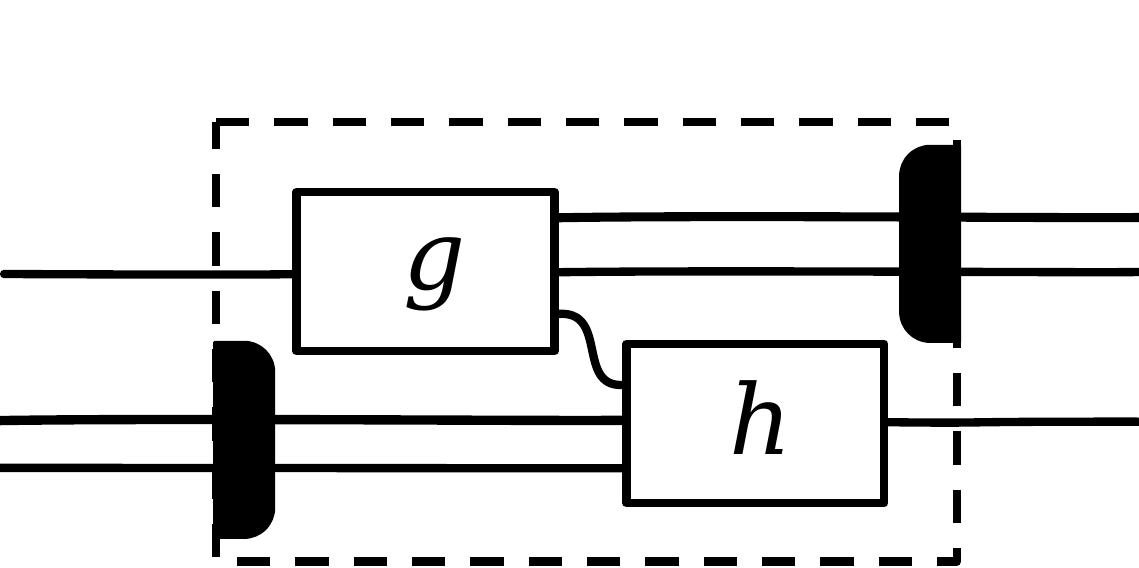}
\end{center}
(or simplified versions thereof with fewer gates, to adjust to the
number of gates of the corresponding basic box in the original
diagram). This makes the diagrams acyclic, as the back-edges appearing
in trace formation now no longer close any loops due to the absence of
paths from unguarded (black) input gates to guarded (black) output
gates in the above pattern. Notice next that the diagrammatic
definition of the trace operation
\begin{center}
  \includegraphics[scale=.3]{nuclear_tr.pdf}
\end{center}
just transforms the given diagram into an isomorphic one. Summing up,
after calculating traces in the above manner, the left and right hand
side of every axiom in Figure~\ref{fig:gtrace} are now isomorphic
acyclic diagrams, so the corresponding equations are valid over
symmetric monoidal categories by the standard coherence
theorem~\cite{Selinger11}. \qed

\subsection{Proof of Theorem~\ref{thm:hilbert-schmidt}}
\begin{lem}\label{lem:comp-hilb}
Let $g\in\HS(A, C)$, $f\in\HS(B, C)$. Then, $\brks{g(a),f(b)} = \brks{b\tensor\theta(g^\dagger)\comma \theta(f)\tensor a}$,
for any $a\in A$, $b\in B$.
\end{lem}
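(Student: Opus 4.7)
The plan is to compute both sides by unwrapping $\theta$ and evaluating the inner product in the tensor space via orthonormal bases, then collapse the resulting triple sum using Parseval's identity.

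First I would make both sides live in the same Hilbert space. Using the definition of $\theta$ from Proposition~\ref{prop:hs-nucl}, the element $\theta(g^\dagger)$ (applied to $1\in\real$) is the Hilbert--Schmidt functional in $C^\dagger\tensor A$ given by $(c,a')\mto\brks{g^\dagger c,a'}=\brks{c,ga'}$, and $\theta(f)$ is the functional in $B^\dagger\tensor C$ given by $(b',c)\mto\brks{fb',c}$. Then $b\tensor\theta(g^\dagger)\in B\tensor(C^\dagger\tensor A)$ is the functional $(b',c,a')\mto\brks{b,b'}\brks{c,ga'}$, while, up to the coherence associator, $\theta(f)\tensor a\in (B^\dagger\tensor C)\tensor A$ is $(b',c,a')\mto\brks{fb',c}\brks{a,a'}$. (In the real setting considered, $X^\dagger$ is identified with $X$ via Riesz, so both triple tensor products coincide.)

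Next I would compute the inner product using the defining formula $\brks{u,v}=\sum_{b',c,a'} u(b',c,a')\,v(b',c,a')$ over orthonormal bases of $B$, $C$, and $A$. This gives
\begin{equation*}
\brks{b\tensor\theta(g^\dagger),\theta(f)\tensor a}=\sum_{b',c,a'}\brks{b,b'}\brks{c,ga'}\brks{fb',c}\brks{a,a'}.
\end{equation*}
I would then apply Parseval's identity three times: summing over $b'$ gives $\brks{fb,c}$, summing over $a'$ gives $\brks{c,ga}$, and finally summing over $c$ yields $\brks{fb,ga}$. Symmetry of the real inner product converts this to $\brks{ga,fb}=\brks{g(a),f(b)}$, which is the left-hand side.

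The only mild obstacle is bookkeeping: one has to verify that the implicit associativity/coherence isomorphism used to compare the domains of $b\tensor\theta(g^\dagger)$ and $\theta(f)\tensor a$ preserves the inner product (which it does, since the monoidal structure on $\Hilb$ is dagger symmetric monoidal, so associators are unitary). Once this is in place, the collapsing of the triple sum by Parseval is entirely routine, and no further subtlety is involved.
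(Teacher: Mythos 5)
Your proof is correct, but it takes a genuinely different route from the paper's. The paper's argument is a two-step structural computation: it first uses the adjoint relation to write $\brks{g(a),f(b)}=\brks{b,f^\dagger(g(a))}$, then invokes the \emph{compactness} axiom of nuclear ideals (clause~3 of Definition~\ref{def:nucl}, already established for $\HS$ via Proposition~\ref{prop:hs-nucl}) to express $f^\dagger\comp g$ as the composite built from $\theta(f)$ and $(\theta(g^\dagger))^\dagger$, and finally moves $(\theta(g^\dagger))^\dagger$ across the inner product. You instead unwrap $\theta$ concretely, realize both sides as Hilbert--Schmidt trilinear functionals, and collapse the resulting triple sum with Parseval; I checked the three contractions ($\sum_{b'}\brks{b,b'}\brks{fb',c}=\brks{fb,c}$, $\sum_{a'}\brks{c,ga'}\brks{a,a'}=\brks{c,ga}$, $\sum_{c}\brks{fb,c}\brks{c,ga}=\brks{fb,ga}$) and they are all right; absolute convergence and the freedom to regroup follow from Cauchy--Schwarz since all the coefficient sequences are square-summable, and the unitarity of the associator handles the domain identification as you note. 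What each approach buys: yours is self-contained and elementary, essentially re-deriving the relevant special case of the compactness axiom from scratch inside $\Hilb$, which makes it longer but independent of the nuclear-ideal machinery; the paper's is shorter and makes clear that the identity is really a formal consequence of compactness plus daggers, so it would transfer to other nuclear ideals $\N$ with only the first and last lines depending on the concrete inner product of $\Hilb$.
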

\begin{proof}
\begin{align*}
\brks{g(a),f(b)} 
=&\;\brks{b, f^\dagger(g(a))}\\
=&\;\brks{b\tensor 1, (\id_B\tensor(\theta(g^\dagger))^\dagger)(\theta(f)\tensor \id_A)(1,a)}\\
=&\;\brks{b\tensor\theta(g^\dagger), \theta(f)\tensor a}.
\end{align*}
\qed
\end{proof}
We proceed with the proof of the theorem.
  \emph{\ref{item:hs2}}.: By Remark~\ref{rem:triv-ideal}, $f:A\to D$
  belongs to the guarded ideal induced by the vacuously guarded
  structure iff
  $\upsilon^{-1}_Df\hat\upsilon_A:A\tensor I\to I\tensor D$ is of the
  form
  \begin{align*}
    A\tensor I\xto{\id_A\tensor g} A\tensor E\tensor D\xto{h^\dagger\tensor\id_D} I\tensor D
  \end{align*}
  for suitable $g:I\to E\tensor D$ and $h:I\to A\tensor E$ (eliding
  the associativity isomorphism). By the compactness property of
  nuclear ideals (Definition~\ref{def:nucl}), the condition
  $f\in\HS^2(A,D)$ is characterized by the same condition with $g$
  replaced with a morphism of the form $\theta(g')$ and $h$ replaced
  with a morphism of the form $(\theta(h'))^\dagger$ (for
  Hilbert-Schmidt operators $g',h'$). Since both $\theta$ and
  $(-)^\dagger$ are bijective, the two conditions are equivalent as
  claimed.

  Next we have to establish the
  well-definedness condition from
  Proposition~\ref{prop:trivial-trace}.  To that end, first let us
  argue that a bounded linear operator $f:A\tensor B\to C$ is
  determined by its values on arguments of the form $a\tensor b$ where
  $a\in A$, $b\in B$. Indeed,  we have
  \begin{equation*}
    f^\dagger(c)(a,b) = \brks{a\tensor b, f^\dagger(c)} = \brks{f(a\tensor b),c}%
  \end{equation*}
  for all $c\in C$.%

  Now let $f\in\GHom(A\tensor U)\tensor B, C\tensor (D\tensor U))$,
  with factorization
  $f=(h\tensor\id_{D\tensor U})(\id_{A\tensor U}\tensor g)$ (eliding
  associativity), $g:B\to E\tensor D\tensor U$,
  $h:A \tensor U\tensor E\to C$, as per
  Lemma~\ref{lem:triv}. Write $w$ for the result of calculating
  $\tr_{A,B,C,D}^U(f)$ w.r.t.\ this factorization according
  to~\eqref{eq:trivial-trace}. By the above argument we need to check
  that the values $w(a\tensor b)(c,d)$, for $a\in A$, $b\in B$,
  $c\in C$ and $d\in D$, depend only on $f$.

  By Proposition~\ref{prop:hs-nucl}, there are $g_b:D\tensor U\to E$
  and $h_c:A\tensor U\to E$ such that $g(b) = \theta(g_b^\dagger)$,
  $h(c) = \theta(h_c)$ (where we identify elements of a space $X$ with
  linear operators $\real\to X$). Moreover, let
  $\hat h:C\tensor A\to U\tensor E$ and
  $\hat g:B\tensor D\to U\tensor E$ be defined by
  $\hat h(c\tensor a)(u,e) = h(c)(a,u,e)$,
  $\hat g(b\tensor d)(u,e) = g(b)(e,d,u)$.
Let $u:A\tensor E\tensor D\tensor U \to A\tensor U\tensor E\tensor D$
be the permutation isomorphism involved in \eqref{eq:trivial-trace}.
Then
\begin{flalign*}
&&w(a\tensor b)(c,d) 
=&\; \brks{(h^\dagger\tensor\id_{D})\comp u\comp (a\tensor g(b)), c\tensor d}\\
&&=&\; \brks{u\comp(a\tensor g(b)), h(c)\tensor d}&\by{defn.~of~$\dagger$}\\
&&=&\; \brks{\hat g(b\tensor d), \hat h(c\tensor a)} &\by{Lemma~\ref{lem:comp-hilb}}\\
&&=&\;\sum\nolimits_{i,j} {h}(c)(a, u_j, e_i)\comp {g}(b)(e_i, d, u_j)\\
&&=&\;\sum\nolimits_{i,j} \brks{g_b(d\tensor u_i),e_j}\brks{e_j, h_c(a\tensor u_i)}\\
&&=&\;\sum\nolimits_{i} \brks{g_b(d\tensor u_i), h_c(a\tensor u_i)}\\
&&=&\;\sum\nolimits_{i} \brks{a\tensor u_i\tensor \theta(g_b^\dagger), \theta(h_c)\tensor d\tensor u_i}&\by{Lemma~\ref{lem:comp-hilb}}\\
&&=&\;\sum\nolimits_{i} \brks{a\tensor u_i\tensor g(b), h(c)\tensor d\tensor u_i}\\
&&=&\;\sum\nolimits_{i} \brks{a\tensor u_i, h(c)}\comp\brks{g(b), d\tensor u_i}\\
&&=&\;\sum\nolimits_{i} \brks{h^\dagger(a\tensor u_i), c}\comp\brks{g(b), d\tensor u_i}\\
&&=&\;\sum\nolimits_{i} \brks{h^\dagger(a\tensor u_i)\tensor g(b), c\tensor d\tensor u_i}\\
&&=&\;\sum\nolimits_{i} \brks{(h^\dagger\tensor\id_{D\tensor U})\comp (a\tensor u_i\tensor g(b)), c\tensor d\tensor u_i}\\
&&=&\;\sum\nolimits_{i} \brks{f(a\tensor u_i\tensor b), c\tensor d\tensor u_i}
\end{flalign*}
depends only on $f$, as required.

\emph{\ref{item:hilb-trace-dagger}.}: Since $(-)^\dagger$ preserves
the monoidal structure, its combination with the symmetry as in the
statement can be seen as just realizing the duality discussed in
Remark~\ref{rem:dual}. In particular, the given factorization of $f$
witnessing the guardedness typing assumed in the statement induces a
corresponding factorization of
$\gamma_{B,A\tensor U} f^\dagger \gamma_{D\tensor U,C}$, so that
indeed
$\gamma_{B,A\tensor U} f^\dagger \gamma_{D\tensor U,C}\in
\GHom((D\tensor U)\tensor C,B\tensor (A\tensor U))$;
the same observation implies the claimed equality. \qed

\end{document}

